\documentclass{article}




\usepackage[final]{neurips_2025}


\usepackage[utf8]{inputenc} 
\usepackage[T1]{fontenc}    
\usepackage{hyperref}       
\usepackage{url}            
\usepackage{booktabs}       
\usepackage{amsfonts}       
\usepackage{nicefrac}       
\usepackage{microtype}      
\usepackage{xcolor}         
\usepackage{url}            
\usepackage{booktabs}       
\usepackage{amsfonts}       
\usepackage{nicefrac}       
\usepackage{microtype}      
\usepackage{xcolor}         
\usepackage{graphicx,psfrag,amsmath,amsfonts,verbatim}
\usepackage[small,bf]{caption}
\usepackage{hyperref}       
\usepackage{url}            
\usepackage{booktabs}       
\usepackage{amsfonts}       
\usepackage{nicefrac}       
\usepackage{microtype}      
\usepackage{xcolor}         
\bibliographystyle{plain}
\usepackage{amsmath}
\usepackage{amsthm}
\usepackage{thmtools}
\usepackage{thm-restate}
\newtheorem{theorem}{Theorem}
\newtheorem{lemma}[theorem]{Lemma}

\newtheorem{remark}{Remark}

\theoremstyle{definition}
\newtheorem{definition}{Definition}

\newtheorem{assumption}{Assumption}

\usepackage{subfigure}
\usepackage{caption}
\usepackage{algorithm}
\usepackage{algorithmic}
\usepackage{graphicx}
\usepackage{thm-restate}
\usepackage{caption}
\usepackage{booktabs}
\usepackage{wrapfig}
\usepackage{multirow}

\title{Finite Sample Analyses for Continuous-time Linear Systems: System Identification and Online Control}

%

\author{%
  Hongyi Zhou\thanks{Equal Contribution} \\
  IIIS, Tsinghua University\\
  Shanghai Qizhi Institute \\
\texttt{zhouhong24@mails.tsinghua.edu.cn} \\
  \And
 Jingwei Li$^{*}$ \\
  IIIS, Tsinghua University\\
  Shanghai Qizhi Institute \\
\texttt{ljw22@mails.tsinghua.edu.cn} \\
  \And
  Jingzhao Zhang\thanks{Corresponding Author} \\
  IIIS, Tsinghua University \\
  Shanghai Qi zhi Institute \\
  \texttt{jingzhaoz@mail.tsinghua.edu.cn} \\
}

\begin{document}

\maketitle

\begin{abstract}
 Real world evolves in continuous time but computations are done from finite samples. Therefore, we study algorithms using finite observations in continuous-time linear dynamical systems. 
We first study the system identification problem, and propose a first non-asymptotic error analysis with finite observations. Our algorithm identifies system parameters without needing integrated observations over certain time intervals, making it more practical for real-world applications. Further we propose a lower bound result that shows our estimator is provably optimal up to constant factors.
Moreover, we apply the above algorithm to online control regret analysis for continuous-time linear system. Our system identification method allows us to explore more efficiently, enabling the swift detection of ineffective policies. We achieve a regret of $\mathcal{O}(\sqrt{T})$ over a single $T$-time horizon in a controllable system, requiring only $\mathcal{O}(T)$ observations of the system. 
\end{abstract}

\section{Introduction}
Finding optimal control policies requires accurately modelling the system~\citep{ref25}. However, real-world environments often involve unknown system parameters. In such cases,  estimating unknown parameters from exploration becomes essential to identify the unseen dynamics. This process is recognized as system identification, a fundamental tool employed in various research fields, including time-series analysis~\citep{ref12}, control theory~\citep{ref13}, robotics~\citep{ref14}, and reinforcement learning~\citep{ref15}. 

The identification of linear systems has long been studied because linear systems, as one of the most fundamental systems in both theoretical frameworks and practical applications,  has wide applications ranging from natural physical processes to robotics. Most classical results provide only \emph{asymptotic} convergence guarantees for parameter estimation~\citep{ref8, ref20, ref26}. 

On the other hand, with the rapid increase in data scale, there is a growing concern for statistical efficiency. Consequently, the non-asymptotic convergence of~\emph{discrete-time} linear system identification has emerged as another pivotal topic in this field. Investigations into this matter delve into understanding how estimation confidence is influenced by the sample complexity of trajectories~\citep{ref5}, or the running time on a single trajectory~\citep{ref4,ref23}. Furthermore, many of these studies operate under the common assumption of stochastic noise, there has been a parallel exploration into the identification of discrete-time linear dynamical systems with diverse setups. This includes scenarios where perturbations are adversarial~\citep{hazan2020nonstochastic} or when only black-box access is available~\citep{ref28}.

In contrast to studies in discrete time system, there have been relatively fewer non-asymptotic results addressing parameter identification for \emph{continuous-time systems}. Two problems exist for continuous time analysis. First, nonasymptotic analysis in continuous system without noise can be degenerate, as a short time interval can contain infinite pieces of information. Second, if we consider the non-degenerate case when finite noisy observations are available, then the analyses require concentration results that become known only as in~\citep{ref4,ref5,ref23}. Recently~\cite{ref3} provides novel analyses for estimating  system parameters, which relies on continuous data collection and
interaction with the environment. 
Motivated by progress in these works, our first goal is to answer the question below: 

\begin{center}
\emph{Can we design a continuous-time stochastic system identification algorithm that provides nonasymptotic error bounds with only a finite number of samples?}
\end{center}

We will introduce our system identification algorithms tailored to meet the above requirements. As expected, we discretize time into small intervals, thereby reducing the problem to a discrete system. The interesting part involves ensuring that the discretization remains bijective and that the inversion is unbiased. Our algorithm identifies the continuous system using only a finite number of samples from the discrete system. We further propose a information theoretic lower bound that shows our algorithm is optimal. 

As an application of our system identification methods, we study an online continuous-time linear control problem as introduced in~\citep{ref1}. In this context, exploration is essential for estimating unknown parameters, with the goal of identifying a more optimal control policy that narrows the performance gap. The primary challenge involves finding the right balance between exploration and exploitation. Leveraging our identification method for more efficient parameter estimation allows us to effectively manage exploration and exploitation, achieving an expected regret of $\mathcal{O}(\sqrt{T})$ over a single trajectory with only $\mathcal{O}(T)$ samples in time horizon $T$. This surpasses the previously best known result of $\mathcal{O}(\sqrt{T}\log(T))$, which needs continuous data collection from the system. 

We summarize our contributions below.
\begin{enumerate}
    \item When the system can be stabilized by a known controller, we establish an algorithm with $\mathcal{O}(T)$ samples that achieves estimation error $\mathcal{O}(\sqrt{1/T})$ on a single trajectory with running time $T$, which is shown in Theorem~\ref{thm:main-stable}. We also provide Theorem~\ref{thm:lower bound} which shows that the estimation error of our system identification method is optimal up to constant factors.
    \item When a stable controller is not available, we can use $N$ independent short trajectories to obtain estimators with error $\mathcal{O}(\sqrt{1/N})$, as is shown in Theorem~\ref{thm:main-unstable} .
    \item We apply our system identification method to an online continuous linear control algorithm, which only requires $\mathcal{O}(T)$ samples and achieves $\mathcal{O}(\sqrt{T})$ regret on a single trajectory with lasting time $T$ (Theorem~\ref{thm:total regret}), improving upon the best known result $\mathcal{O}(\sqrt{T}\log(T))$ in~\citep{ref1}.
\end{enumerate}

\section{Related Works}
Control of both discrete and continuous linear dynamical systems have been extensively studied in various settings, such as linear quadratic optimal control~\citep{ref16}, $H_{2}$ stochastic control~\citep{ref17}, $H_{\infty}$ robust control~\citep{ref18,ref19} and system identification~\citep{ref13,ref20}. Below we introduce some of the important results on both system identification and optimal control for linear dynamical systems.

\paragraph{System Identification} 
Earlier literature focused primarily on the asymptotic convergence of system identification~\citep{campi1998adaptive, ljung1998system}. Recently, there has been a resurgence of interest in non-asymptotic system identification for \emph{discrete-time} systems. \cite{ref5} studied the sample complexity of multiple trajectories, with $\mathcal{O}(\sqrt{1/N})$ estimation error on $N$ independent trajectories. For systems with dynamics $x_{t+1}=Ax_{t}+w_{t}$(without  controllers),
\cite{ref4} established an analysis for $\mathcal{O}(\sqrt{1/T})$ estimation error on a single stable trajectory with running time $T$, while~\cite{ref29} and~\cite{ref23} extended to more general discrete-time systems. 

Non-asymptotic analyses for continuous-time linear system are less studied. Recently, \cite{ref3} examined continuous-time
linear quadratic control systems with standard brown noise and unknown system dynamics. Our algorithm is specifically designed for finite observations, achieving an error rate that cannot be attained through the direct discretization of integrals as done in~\citep{ref3}. 

\paragraph{Regret Analysis of Online Control}
In  online control, if the system's parameters are known, achieving the optimal control policy in this setup can be straightforward~\citep{ref18,ref24}. However, when the system parameters are unknown, identifying the system incurs regret. \cite{abbasi2011regret} achieved an $\mathcal{O}(\sqrt{T})$ regret for discrete-time online linear control, which has been proven optimal in $T$ under that setting in~\cite{ref7}. Subsequent works have extended this setup, focusing on worst-case analysis with adversarial noise and cost, including~\citep{mania2019certainty,cohen2019learning,lale2020explore,ref7, ref30}. These analyses are limited to discrete systems. For continuous-time systems, 
works of \cite{shirani2022thompson,ref1,li2024online} established algorithms for online continuous control that achieves $\mathcal{O}\left(\sqrt{T}\log(T)\right)$ regret.

\section{Problem Setups and Notations}\label{section: problem setting}

In this section, we introduce the background and notation for linear dynamical systems and online control. 

\subsection{Linear Dynamical Systems}\label{subsection: linear dynamical system}

We first introduce discrete-time linear dynamical systems as follows: Let $x_{k} \in \mathbb{R}^{d}$ represent the state of the system at time $k$, and let $u_{k} \in \mathbb{R}^{p}$ denote the action at time $k$. Then, for some linear time-invariant dynamics characterized by $A \in \mathbb{R}^{d \times d}$ and $B \in \mathbb{R}^{d \times p}$, the transition of the system to the next state can be represented as:
\begin{align}
x_{k+1} = Ax_{k} + Bu_{k} + w_{k},
\label{equ:discrete}
\end{align}
where $w_{k} \in \mathbb{R}^{d}$ are i.i.d. Gaussian random vectors with zero means and certain covariance.

Similarly, a continuous-time linear dynamical system with stochastic disturbance at time $t$ is defined by a differential equation, instead of a recurrence relation:
\begin{align}
dX_{t} = AX_{t}dt + BU_{t}dt + dW_{t}.
\label{equ:differential}
\end{align}
In this context, we use $X_t$ and $U_t$ to represent the state and action in the continuous-time linear system, distinguishing them from $x_t$ and $u_t$ in discrete-time systems. $W_{t}$ denotes the stochastic noise, which is modeled by standard Brownian motion.

For a continuous control problem, an important question of a linear dynamical system is whether such system can be stably controlled. Below we define the concepts of stable dynamics and stabilizers.

\begin{definition}\label{def:alpha}
For any square matrix $A$, define $\alpha(A)=\max_{i}\{\Re(\lambda_{i})\vert \lambda_{i}\in \lambda(A)  \}$, where $\Re(\lambda)$ represents the real part of complex number $\lambda$, $\lambda(A)$ is the set of all eigenvalues of $A$.    
\end{definition}

\begin{definition}\label{def:stable}
A matrix $A\in \mathbb{R}^{d\times d}$ is stable if $\alpha(A)<0$. A control matrix $K\in \mathbb{R}^{p\times d}$ is said to be a stabilizer for system $(A,B)$ if $A+BK$ is stable. 
\end{definition}
Under the above definition, a stable dynamic guarantees that the state can automatically go to the origin when no external forces are added, while applying a stabilizer as the dynamic for controller will also ensure that the state does not diverge.

\subsection{Continuous-time LQR Problems and Optimal Control}\label{sec:online}

For continuous-time linear systems disturbed by stochastic noise, as introduced in~\ref{subsection: linear dynamical system}, we denote the strategy of applying control to such systems through a specific causal policy, $f: X \to U$. This policy maps states $X$ to control inputs $U$, where the policy at time $t$ can only depend on the states and actions prior to $t$.

The optimal controls in linear systems are often linear \citep{ref18,ref24}, which takes the following form $$U_{t}=K_{t}X_{t},$$ 
where $K_{t}\in \mathbb{R}^{p\times d}$ represents the linear parameterization at time $t$ under some policy $f(X) = KX$. Additionally, we define the cost function of applying the action $U_{t}=K_{t}X_{t}$ with linear quadratic regulator (LQR) control. Given predefined symmetric positive definite matrices $Q\in\mathbb{R}^{d\times d}$ and $R\in\mathbb{R}^{p\times p}$, along with the initial state $X_{0}$, the cost during $t\in[0,T]$ is denoted by $J_{T}$, as represented in the following equation:
\begin{align}\label{eq:cost}
J_{T} = \mathbb{E}\left[\int_{t=0}^{T}\left(X^{\mathrm{T}}_{t}QX_{t}+U_t^{\mathrm{T}}RU_t\right)dt\right]\,.
\end{align}
Here the expectation is taken over the randomness of $X_{t}$.

Among all the polices there exists an optimal mapping $f_{*}$ which minimizes $J_{T}$. When the system is dominated by dynamics $(A,B)$, with the state transits according to~\eqref{equ:differential}, such optimal $K_{t}$ can be computed via the Lyapunov matrix $P_{t}$ that solves the Ricatti differential equation~\citep{ref24}:
\begin{align}\label{eq:ricatti}
\frac{d}{dt}{P}_{t}={P}_{t}^{\mathrm{T}}{B}R^{-1}{B}^{\mathrm{T}}{P}_{t}-{A}^{\mathrm{T}}{P}_{t}-{P}_{t}^{\mathrm{T}}{A}-Q, \quad {P}_{T} = 0.
\end{align}
Then, under $f_{*}$ the action dynamic is set to be  $K_{t}=-R^{-1}B^{\mathrm{T}}P_{t}$.

When $T\to +\infty$, the starting dynamic $P_{0}$ converges to some special dynamic $P_{*}$ satisfying
\begin{align}\label{eq:stationary}
P_{*}^{\mathrm{T}}BR^{-1}B^{\mathrm{T}}P_{*}-A^{\mathrm{T}}P_{*}-P_{*}^{\mathrm{T}}A-Q=0\,,
\end{align}
and the optimal control policy for infinite time horizon is by setting $K_{*}=-R^{-1}B^{\mathrm{T}}P_{*}$ and apply the action by $U_{t}=K_{*}X_{t}$.

\paragraph{Online Control Problems.} Online learning aims to find a strategy to output a sequence of controls $\{U_t\}$  that minimizes the cost $J_{T}$ without knowing the system parameters $A, B$. In this scenario, the algorithms explore to obtain valuable information, such as estimators $(\hat{A},\hat{B})$ for $(A,B)$, while simultaneously exploit gathered information to avoid large instantaneous cost.

To quantify the progress in an online learning problem with horizon $T$, one quantity of interest is the regret $R_{T}$, which quantifies the performance gap between the control taken $U_t = f(X_t)$ and a baseline optimal policy which takes $U_{t} = K_{*}X_{t} =-R^{-1}B^{\mathrm{T}}P_{*}X_{t}$, where $K_*$ is defined in~\eqref{eq:stationary}. Formally, by denoting $J_{T}$ be the expected cost under $f$, and $J^{*}_{T}$ be the expected cost under the baseline optimal policy, the regret $R_{T}$ is represented as:
\begin{align}\label{expression:regret}
R_{T}=J_{T}-J^{*}_{T}\,.
\end{align}

\paragraph{Other Notations} 
Denote the d-dimensional unit sphere $\mathcal{S}^{d-1}=\{v\in \mathbb{R}^{d},\|v\|_{2}=1\}$, where $\|\cdot\|_{2}$ is the $L_{2}$ norm.
For any matrix $A\in \mathbb{R}^{m\times n}$, denote $\|A\|$ be the spectral norm of $A$, or equivalently, 
\begin{align*}
\|A\|=\sup_{v\in \mathcal{S}^{n-1}}\|Av\|_{2}=\sup_{u\in \mathcal{S}^{m-1},v\in \mathcal{S}^{n-1}}u^{\mathrm{T}}Av.
\end{align*}

\section{The Proposed System Identification Method}
In this section we propose our system identification method. Before presenting our method, we first introduce the formal definition of system identification and the finite observation setting.

\subsection{System Identification and Finite Observation}

We start with the definition of system identification.

\begin{definition}[System Identification]
The system identification task aims to recover the true system dynamics matrices $A$ and $B$ by observing the system's response over time. Specifically, one selects a time horizon $T$ and a sequence of actions $U$, observes the resulting states $X$, and computes estimates $\hat{A}$ and $\hat{B}$ of the true dynamics. The goal is to design an effective algorithm that achieves the following non-asymptotic estimation bound:
\begin{align*}
\|\hat{A} - A\|,\ \|\hat{B} - B\| &\leq f(T)\,,
\end{align*}
for some function $f$ depending on $T$. In particular, as $T \to \infty$, we expect the estimation error $f(T)$ to converge to zero.
\end{definition}

Next, we introduce the \emph{finite observation assumption}. Under this setting, the number of observed states $N$ grows at most linearly with the trajectory running time $T$. In other words, for any trajectory of length $T$, we can only access a finite set of states $\{X_{1}, X_{2}, \dots, X_{N}\}$ to identify the system, where $N = O(T)$ and does not exhibit superlinear growth.

To analyze the continuous-time system, we need to discretize it. Prior works~\cite{ref3, shirani2022thompson, ref1} commonly approximate the dynamics using
\[
X_{t+h} \approx (I + hA)X_t + hBU_t + (W_{t+h} - W_t).
\]
However, this approximation introduces a discretization error between the approximated and true dynamics. The error term, characterized by $(e^{hA} - I)/h - A$, is of order $O(h)$. Consequently, the sampling interval $h$ must be chosen as $O(1/\sqrt{T})$ to ensure that discretization error does not dominate. This leads to a super-linear sampling complexity of $m = T/h = \Omega(T^{3/2})$, which violates the finite observation assumption and significantly increases computational demands.

In contrast, our method overcomes this limitation by directly estimating the matrix exponential $e^{Ah}$ in Lemma~\ref{lem:tran}, and subsequently recovering $(A, B)$ from this estimate. As a result, our approach avoids discretization error entirely, allowing the sampling interval to depend solely on system parameters rather than the total sampling time $T$. This innovation reduces the sampling complexity to grow linearly with $T$, offering significant computational advantages.

\subsection{Algorithm 
Design}\label{section:alg for system id}
Then we introduce our algorithm. We choose a small sampling time interval $h$ across a single trajectory of time length $T$. We then divide the time into small intervals and consider the state evolution within each interval. We get the following Lemma:

\begin{restatable}{lemma}{tran}
\label{lem:tran}
   In the time interval $[t, t+h]$, the following transition function holds:
\begin{align*}
X_{t+h} = e^{Ah}X_{t}+\int_{s=0}^{h}e^{A(h-s)}BU_{t+s}ds+ w_t \,,
\end{align*}
\end{restatable}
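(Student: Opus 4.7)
The plan is to derive this as the standard variation-of-constants / Duhamel formula for the linear SDE $dX_t = AX_t dt + BU_t dt + dW_t$ introduced in~\eqref{equ:differential}. I would apply Itô's formula to the process $Y_s = e^{-As}X_s$ on $s \in [t, t+h]$, treating $e^{-As}$ as a deterministic (matrix-valued) integrating factor. Since $e^{-As}$ has no quadratic variation, Itô's formula yields the clean product-rule expression
\begin{align*}
dY_s = -A e^{-As} X_s\, ds + e^{-As}\, dX_s = -A e^{-As} X_s \, ds + e^{-As}\bigl(A X_s + B U_s\bigr)\, ds + e^{-As}\, dW_s,
\end{align*}
so that the $AX_s$ terms cancel and we are left with $dY_s = e^{-As} B U_s \, ds + e^{-As}\, dW_s$.

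Next I would integrate this identity from $s=t$ to $s=t+h$, obtaining
\begin{align*}
e^{-A(t+h)} X_{t+h} - e^{-At} X_t = \int_t^{t+h} e^{-As} B U_s \, ds + \int_t^{t+h} e^{-As}\, dW_s,
\end{align*}
and then multiply through on the left by $e^{A(t+h)}$. Using $e^{A(t+h)} e^{-As} = e^{A(t+h-s)}$ (which holds because all these matrices are functions of the same $A$ and therefore commute), this rearranges to
\begin{align*}
X_{t+h} = e^{Ah} X_t + \int_t^{t+h} e^{A(t+h-s)} B U_s \, ds + \int_t^{t+h} e^{A(t+h-s)}\, dW_s.
\end{align*}
A change of variables $s \mapsto t + s$ in both integrals gives $\int_0^h e^{A(h-s)} B U_{t+s}\, ds$ for the control term, matching the statement, and identifies the noise term as
\begin{align*}
w_t := \int_0^h e^{A(h-s)}\, dW_{t+s}.
\end{align*}

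There is no major obstacle here; the result is essentially mechanical once one recognizes that $e^{-As}$ is the correct integrating factor. The only mild care needed is the justification of Itô's formula with a matrix-valued deterministic prefactor (equivalently, applying the product rule componentwise to each entry of $e^{-As}X_s$, since $e^{-As}$ has bounded variation in $s$ and therefore contributes no cross-variation with $W$). I would note in passing that the Itô integral defining $w_t$ has a deterministic integrand, so $w_t$ is a centered Gaussian vector with covariance $\int_0^h e^{A(h-s)} e^{A^{\mathrm{T}}(h-s)}\, ds$; this characterization will be useful later when the algorithm treats $w_t$ as the effective noise in the induced discrete-time recursion.
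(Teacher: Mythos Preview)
Your proposal is correct and is the standard variation-of-constants argument for linear SDEs. It differs genuinely from the paper's proof, which does not invoke Itô's formula or an integrating factor at all. Instead, the paper proceeds by iterated substitution: it writes $X_{t+h}-X_t$ as an integral via Newton--Leibniz, then repeatedly expands $X_{t+t_1}-X_t$ inside the integrand, building up the partial sums $\sum_{k=0}^{m}\frac{(hA)^k}{k!}$ by induction on $m$ and using Fubini to reorder the nested integrals; the remainder is shown to vanish as $m\to\infty$ by boundedness of the path on $[t,t+h]$. Your approach is considerably shorter and more transparent---the cancellation of the $AX_s$ terms happens in one line rather than emerging as a limit of Taylor truncations---and it treats the stochastic part rigorously from the outset, whereas the paper writes $\frac{dW_{t+s}}{dt}$ as a formal shorthand. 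The paper's route has the minor advantage of being ``elementary'' in that it never names Itô's formula explicitly, but this is largely cosmetic. Both arrive at the same $w_t=\int_0^h e^{A(h-s)}\,dW_{t+s}$, and your remark that this is centered Gaussian with covariance $\int_0^h e^{A(h-s)}e^{A^{\mathrm T}(h-s)}\,ds=\int_0^h e^{As}e^{A^{\mathrm T}s}\,ds$ matches exactly what the paper records after the lemma.
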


Here, $w_{t}$ is Gaussian noise $\mathcal{N}(0, \Sigma)$ with covariance $\Sigma = \int_{s=0}^{h}e^{As}e^{A^{\mathrm{T}}s}ds$. The formal proof of this Lemma is deferred to the Appendix~\ref{supp:lem1}.

This transition equation connects continuous-time and discrete-time systems. In our method, the whole trajectory is partitioned into intervals with proper determined length $h$. During time $t\in[kh,(k+1)h]$, we observe a state $x_{k}$ at time $t=kh$, and fix the action $U_{t}\equiv u_{k}$ in this interval. Denoting $A^{'}=e^{Ah}$ and $B^{'}=\left[\int_{s=0}^{h}e^{A(h-s)}ds\right] B$, then the set of observations $\{x_{k} \vert k=0,1,2,...  \}$ and actions $\{u_{k} \vert k=0,1,2,...  \}$ follow the standard discrete-time linear dynamical system:
$$ x_{k+1}=A^{'}x_{k}+B^{'}u_{k}+w_{k}.$$

Then we can apply the system identification method of discrete-time system~\cite{ref4, ref5}. However, different from classical discrete-time systems, continuous-time systems present new challenges. The crucial one is that knowing $e^{Ah}$ is not sufficient to determine $A$, because the matrix exponential function $f(X)=e^{X}$ is not one-to-one. This means we might obtain an incorrect estimator $\hat{A}$ by solving $e^{\hat{A}h}=M$, where $M$ is the estimate of $e^{Ah}$. From the above analysis, we introduce our assumptions of the algorithm.
\begin{assumption}[Assumptions for Algorithm~\ref{alg-single} and Theorem~\ref{thm:main-stable}]\label{assumption:stable} We assume
    \begin{enumerate}
        \item The linear dynamic $A$ is stable, with $\alpha(A)<0$ (see Definition~\ref{def:alpha}). This is equivalent to assuming the existence of a stable controller $K$ and then set $A \leftarrow A + BK$.
        \item $\|A\|\leq\kappa_{A}$, $\|B\|\leq \kappa_{B}$ for some known $\kappa_{A},\kappa_{B}$ ($\kappa_{A},\kappa_{B}$ need not be closed to $\|A\|,\|B\|$).
        \item The sample interval $h$ is chosen to be $h=\frac{1}{15\kappa_{A}}$.
    \end{enumerate}
\end{assumption}

With the above assumptions, we design our algorithm as described in Algorithm~\ref{alg-single}. In the $k$-th interval of length $h$, the state $x_k$ is observed at the beginning, and a randomly selected action $u_k$ is applied uniformly throughout the interval. The state-action pair ${x_k, u_k}$ is then used to estimate the discretized dynamics via:
\begin{align}\label{eq:alg1}
(\widetilde{A})^{\mathrm{T}}=\left[\sum_{k=0}^{T_{0}-1}x_{k}x^{\mathrm{T}}_{k}\right]^{\dagger}\sum_{k=0}^{T_{0}-1}x_{k}x^{\mathrm{T}}_{k+1}\,,(\widetilde{B})^{\mathrm{T}}=\left[\sum_{k=0}^{T_{0}-1}u_{k}u^{\mathrm{T}}_{k}\right]^{\dagger}\sum_{k=0}^{T_{0}-1}u_{k}\left(x_{k+1}-\widetilde{A}x_{k}\right)^{\mathrm{T}}\,.
\end{align}	
The continuous-time dynamics $(A,B)$ are then recovered from $(\widetilde{A},\widetilde{B})$. Under the condition $\|A\|h \ll 1$, we employ Taylor expansion to compute $\hat{A}h = \log(\widetilde{A})$, approximating $Ah$. The estimators $(\hat{A},\hat{B})$ are given by:
\begin{align}\label{eq:recover}
\hat{A}=\frac{1}{h}\sum_{k\geq1}\frac{(-1)^{k-1}}{k}(\widetilde{A}-I)^{k},\hat{B}=\left[\int_{t=0}^{h}e^{\hat{A}t}dt\right]^{-1}\widetilde{B}\,.
\end{align}

\begin{algorithm}[t]
	\caption{System identification algorithm for stable system} \label{alg-single}
	\begin{algorithmic}
		\STATE \textbf{Input:} Running time $T$, sample interval $h$ satisfying the condition in Assumption~\ref{assumption:stable}. 
		\STATE Define the number of samples $T_0 = \lceil T/h \rceil$.
		\FOR{$k = 0, \ldots, T_0-1$}
		\STATE Sample the action $u_{k}\stackrel{\text { i.i.d. }}{\sim}\mathcal{N}\left(0, I_p\right)$.
		\STATE Use the action $U_t \equiv u_{k}$  during the time period $t \in [kh, (k+1)h]$.
		\STATE Observe the new state $x_{k+1}$ at time $(k+1)h$.
		\ENDFOR 
		\STATE Compute system estimates $(\hat{A},\hat{B})$ via ~\eqref{eq:recover}.

	\end{algorithmic}
\end{algorithm}


We now establish the efficiency of our algorithm and derive the main theorem as follows.
\begin{theorem}[Upper bound]\label{thm:main-stable}
In Algorithm~\ref{alg-single}, there exists a constant $C\in poly\left(\vert \alpha(A)\vert^{-1},\kappa_{A},\kappa_{B}\right)$ such that, $\forall\,0<\delta<\frac{1}{2}$, when $T\geq C\left(\|X_{0}\|^{2}_{2}+\log^{2}{1/\delta}\right)$, with probability at least $1-\delta$, we have:
	\begin{align}
	\|\hat{A}-A\|,\|\hat{B}-B\|\leq C\sqrt{\frac{\log(1/\delta)}{T}}\,.
	\end{align}
\end{theorem}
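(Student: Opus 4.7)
The plan is to decompose the proof into two stages: first, bound the estimation errors for the discretized dynamics $\widetilde{A}$ and $\widetilde{B}$ using standard finite-sample analysis for stable discrete-time linear systems; second, propagate these bounds back to the continuous parameters $A$ and $B$ via a perturbation analysis of the matrix logarithm and the integral $\int_{0}^{h} e^{Xt}\,dt$. Lemma~\ref{lem:tran} is the bridge: because the action is piecewise constant, the sampled pairs $(x_{k},u_{k})$ obey the discrete system $x_{k+1}=A'x_{k}+B'u_{k}+w_{k}$ with $A'=e^{Ah}$, $B'=\bigl[\int_{0}^{h}e^{A(h-s)}ds\bigr]B$, and Gaussian $w_{k}\sim\mathcal{N}(0,\Sigma)$. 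Since $\alpha(A)<0$, the spectral radius $\rho(A')=e^{h\alpha(A)}<1$, so the discretized system is strictly stable and admits a bounded stationary state covariance governed by a discrete Lyapunov equation.

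In the first stage, I would apply an OLS / self-normalized martingale argument in the spirit of~\citep{ref4,ref23} to the trajectory $\{x_{k},u_{k}\}_{k=0}^{T_{0}-1}$ with $T_{0}=\lceil T/h\rceil$. The i.i.d. Gaussian inputs $u_{k}\sim\mathcal{N}(0,I_{p})$ together with the full-rank process noise $\Sigma$ provide persistence of excitation for both the state and input blocks of the design matrix, yielding a block-covariance lower bound of the form $\lambda_{\min}\bigl(\sum_{k} z_{k}z_{k}^{\mathrm{T}}\bigr)\gtrsim T_{0}$ with high probability, where $z_{k}=(x_{k}^{\mathrm{T}},u_{k}^{\mathrm{T}})^{\mathrm{T}}$. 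The burn-in condition $T\gtrsim\|X_{0}\|_{2}^{2}+\log^{2}(1/\delta)$ is precisely what is required to dominate the transient from $X_{0}$ and to drive the state distribution close to stationarity. This yields, with probability at least $1-\delta/2$,
\begin{align*}
\|\widetilde{A}-A'\|,\ \|\widetilde{B}-B'\| \;\leq\; C_{1}\sqrt{\tfrac{\log(1/\delta)}{T_{0}}} \;=\; C_{1}'\sqrt{\tfrac{\log(1/\delta)}{T}}\,.
\end{align*}

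In the second stage, I would translate these bounds to $(\hat{A},\hat{B})$. The choice $h=1/(15\kappa_{A})$ ensures $\|Ah\|\leq 1/15$, so that $\|A'-I\|\leq(e^{1/15}-1)$ is strictly bounded away from $1$; for $T$ large enough the perturbation $\|\widetilde{A}-A'\|$ is small and hence $\|\widetilde{A}-I\|<1$, guaranteeing convergence of the Taylor series defining $\hat{A}h=\log(\widetilde{A})$. Using the integral representation $\log(M)=\int_{0}^{1}(M-I)[I+s(M-I)]^{-1}ds$ and the fact that the principal logarithm of $e^{Ah}$ equals $Ah$ on the region $\|Ah\|<\pi$, a standard perturbation argument gives
\begin{align*}
\|\hat{A}-A\|\;=\;\tfrac{1}{h}\|\log(\widetilde{A})-\log(A')\|\;\leq\;C_{2}\,\|\widetilde{A}-A'\|\,.
\end{align*}
For the $B$ recovery, write $\Phi(X)=\int_{0}^{h}e^{Xt}dt$, so $B=\Phi(A)^{-1}B'$ and $\hat{B}=\Phi(\hat{A})^{-1}\widetilde{B}$. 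Splitting the difference into $\Phi(\hat{A})^{-1}(\widetilde{B}-B')$ and $[\Phi(\hat{A})^{-1}-\Phi(A)^{-1}]B'$ and using the fact that $\Phi$ is invertible with Lipschitz inverse in a neighborhood of $A$ (because $\|Ah\|$ is small) yields the desired bound after plugging in the first-stage errors.

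The main obstacle will be the second stage: quantifying the Lipschitz constants for $\log(\cdot)$ and $\Phi(\cdot)^{-1}$ with explicit dependence on $|\alpha(A)|^{-1},\kappa_{A},\kappa_{B}$, and verifying that the burn-in condition $T\geq C(\|X_{0}\|_{2}^{2}+\log^{2}(1/\delta))$ is strong enough to ensure $\|\widetilde{A}-I\|<1$ so that the logarithm series is well-defined. A secondary technical point is showing that the stationary covariance bound and persistence of excitation in the first stage hold uniformly over the trajectory, not just asymptotically, which requires tracking the mixing of the stable discrete system started from $X_{0}$ and carefully applying concentration for dependent sub-Gaussian increments.
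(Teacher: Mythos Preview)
Your two-stage decomposition---discrete-system OLS analysis followed by matrix-logarithm/exponential perturbation---is exactly the paper's approach: Lemma~\ref{lem:A,B} carries out Stage~1 via the BMSB machinery of~\citep{ref4}, and Lemma~\ref{lem:error-transform} carries out Stage~2 with explicit constants coming from the Taylor bounds in Lemma~\ref{lem:matrix inverse}. One small caveat worth noting: the estimator~\eqref{eq:alg1} is not a joint OLS on the stacked regressors $z_k=(x_k^{\mathrm{T}},u_k^{\mathrm{T}})^{\mathrm{T}}$ but a sequential procedure, so the paper does not use the block-covariance lower bound you describe; instead it estimates $\widetilde{A}$ by treating $B'u_k+w_k$ as combined Gaussian noise (so only $\sum_k x_k x_k^{\mathrm{T}}$ needs to be lower-bounded), and then bounds $\|\widetilde{B}-B'\|$ by separately controlling $\lambda_{\min}(\sum_k u_k u_k^{\mathrm{T}})$ and the cross term $\|\sum_k u_k x_k^{\mathrm{T}}\|$ via a martingale-plus-net argument. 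This does not change the rate or the burn-in condition, only the mechanics of Stage~1.
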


We defer the proof of the theorem to Appendix~\ref{sec:upper bound} and highlight the key idea below.
The key idea of the proof is to analyze the error transformation from the discrete system to the original system. We prove Lemma~\ref{lem:error-transform}, which shows that the errors in the discrete and original systems differ only by a constant factor. This allows us to focus solely on the discrete system identification problem.

\begin{lemma}\label{lem:error-transform}
In Algorithms~\ref{alg-single}, suppose we obtain the relative error $\|\widetilde{A}-A^{'}\|,\|\widetilde{B}-B^{'}\|\leq \epsilon$ for some $\epsilon\leq \frac{1}{15}$ and $\|Ah\|\leq \frac{1}{15}$. Then, the relative error in the original system satisfies:
	\begin{align}
		\|\hat{A}-A\|,\|\hat{B}-B\|\leq \frac{1}{h}\left(2+\frac{\kappa_{B}}{\kappa_{A}}\right)\epsilon\,.
	\end{align}
\end{lemma}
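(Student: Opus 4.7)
The plan is to split the lemma into two nearly independent pieces: a bound on $\|\hat{A}-A\|$ driven by a perturbation analysis of the matrix logarithm series, and a bound on $\|\hat{B}-B\|$ that reuses the first bound plus a perturbation analysis of the matrix-valued integral $M(Z):=\int_{0}^{h}e^{Zs}\,ds$. The finite-observation setup is already packaged into the identities $A'=e^{Ah}$ and $B'=M(A)B$, so nothing continuous-time enters beyond these identities.

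For $\|\hat{A}-A\|$: since $\|Ah\|\le 1/15<\pi$, the eigenvalues of $Ah$ lie in the principal strip, hence $\log(e^{Ah})=Ah$ for the principal branch and the series in \eqref{eq:recover} defining $\hat{A}h=\log(\widetilde{A})$ converges because
\[
\|\widetilde{A}-I\|\le\|\widetilde{A}-A'\|+\|A'-I\|\le \epsilon+(e^{\|Ah\|}-1)<\tfrac{1}{7}.
\]
Write $E:=\widetilde{A}-A'$ with $\|E\|\le\epsilon$, and for each $k\ge 1$ use the noncommutative telescoping identity
\[
(\widetilde{A}-I)^{k}-(A'-I)^{k}=\sum_{j=0}^{k-1}(\widetilde{A}-I)^{j}\,E\,(A'-I)^{k-1-j},
\]
whose norm is bounded by $k\rho^{k-1}\epsilon$ with $\rho:=\max(\|\widetilde{A}-I\|,\|A'-I\|)<1/7$. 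Substituting into the log series gives $\|\hat{A}h-Ah\|\le \epsilon/(1-\rho)\le 2\epsilon$, hence $\|\hat{A}-A\|\le 2\epsilon/h$.

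For $\|\hat{B}-B\|$: note $M(Z)=h\,\phi(Zh)$ with $\phi(W)=\sum_{k\ge 0}W^{k}/(k+1)!$, so $B'=M(A)B$ and $\hat{B}=M(\hat{A})^{-1}\widetilde{B}$. Both $M(A)$ and $M(\hat{A})$ are invertible because $\|Ah\|\le 1/15$ and $\|\hat{A}h\|\le\|Ah\|+2\epsilon\le 1/5$ (using the bound above), making $\phi(Ah)$ and $\phi(\hat{A}h)$ close to the identity. Decompose
\[
\hat{B}-B=M(\hat{A})^{-1}\bigl[(\widetilde{B}-B')+(M(A)-M(\hat{A}))\,B\bigr],
\]
then bound the three factors: $\|M(\hat{A})^{-1}\|\le c_{1}/h$ from a Neumann series on $\phi(\hat{A}h)$; $\|M(A)-M(\hat{A})\|\le c_{2}h^{2}\|A-\hat{A}\|$ by integrating over $s\in[0,h]$ the identity $e^{\hat{A}s}-e^{As}=\int_{0}^{1}e^{t\hat{A}s}(\hat{A}-A)s\,e^{(1-t)As}\,dt$; and $\|B\|\le\kappa_{B}$. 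Combining and using $\|A-\hat{A}\|\le 2\epsilon/h$ yields $\|\hat{B}-B\|\le (c_{1}'/h)\epsilon+c_{2}'\kappa_{B}\epsilon$, which under the regime $\kappa_{A}h\lesssim 1/15$ is absorbed into $\tfrac{1}{h}(2+\kappa_{B}/\kappa_{A})\epsilon$.

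The only genuinely delicate step is the log perturbation in the first part: because $\widetilde{A}$ and $A'$ need not commute, one cannot invoke a scalar mean-value bound on $\log$; the telescoping identity above sidesteps this and gives a clean geometric-series sum. Every other estimate is routine application of standard matrix exponential/logarithm perturbation inequalities. A subtle but necessary point is verifying that the principal branch of $\log$ actually recovers $Ah$, which is guaranteed by $\|Ah\|<\pi$. The absolute constants in the final bound are loose, so I do not need to track intermediate constants tightly, only to check that each piece is bounded by an $O(\epsilon/h)$ or $O(\kappa_{B}\epsilon)$ quantity.
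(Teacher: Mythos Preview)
Your proposal is correct and yields the stated bound, but your route to $\|\hat{A}-A\|\le 2\epsilon/h$ differs from the paper's. You perturb the logarithm series directly via the telescoping identity $(\widetilde{A}-I)^{k}-(A'-I)^{k}=\sum_{j}(\widetilde{A}-I)^{j}E(A'-I)^{k-1-j}$ and sum the resulting geometric series. The paper instead inverts the problem: it first bounds $\|\hat{A}h\|\le 1/6$ from the series, then uses the estimate $\|e^{X_{1}+X_{2}}-e^{X_{1}}-X_{2}\|\le\tfrac{1}{2}\|X_{2}\|$ (valid when $\|X_{1}\|,\|X_{1}+X_{2}\|$ are small) with $X_{1}=Ah$, $X_{1}+X_{2}=\hat{A}h$ to conclude $\|\hat{A}h-Ah\|\le 2\|\widetilde{A}-A'\|$. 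Your approach is arguably more direct and gives the slightly sharper constant $1/(1-\rho)<7/6$; the paper's approach has the advantage of simultaneously establishing that the series actually recovers $Ah$ (your ``principal branch'' remark relies on this being known). For $\|\hat{B}-B\|$ the two arguments are essentially the same perturbation of $M(Z)=\int_{0}^{h}e^{Zs}ds$ and its inverse, just decomposed slightly differently.

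One caution: you dismiss constant-tracking as unnecessary because ``the absolute constants in the final bound are loose,'' but the lemma states the explicit constant $2+\kappa_{B}/\kappa_{A}$. Your bounds do in fact deliver it (with $c_{1}<2$ from the Neumann series since $\|\hat{A}h\|\le 1/5$, and $c_{2}\le e^{1/5}/2$ from the integral identity, one gets $c_{1}\epsilon/h<2\epsilon/h$ and $2c_{1}c_{2}\kappa_{B}\epsilon\ll 15\kappa_{B}\epsilon=\tfrac{\kappa_{B}}{\kappa_{A}}\tfrac{\epsilon}{h}$ using $h=1/(15\kappa_{A})$), but you should carry this out explicitly rather than assert it.
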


From this lemma, it follows that if we develop a system identification algorithm for the discrete system that produces dynamics estimates $\widetilde{A}$ and $\widetilde{B}$ with minimal error, we can obtain accurate estimates for the original system. The remaining task is to analyze the discrete system with the transition function $x_{k+1}=Ax_{k}+Bu_{k}+w_{k}$, which has been discussed in previous works such as~\cite{ref4}.

\subsection{Lower Bound}
In this section, we discuss the lower bound of the problem. We prove Theorem~\ref{thm:lower bound} and establish that this method has already attained the optimal convergence rate for parameter estimation. The theorem primarily asserts that, given a single trajectory lasting for time $T$, any algorithm that estimates system parameters solely based on \emph{an arbitrarily large number of finite observed states} cannot guarantee an estimation error of $o(\sqrt{1/T})$.

\begin{theorem}[Lower bound]\label{thm:lower bound}
Suppose $T\geq 1$ be the running time of a single trajectory of continuous-time linear differential system, represented as in~\eqref{equ:differential}. Then there exist constants $c_{1},c_{2}$ independent of $d$ such that, for any finite set of observed points $\{t_{0}=0,t_{1},t_{2},...,t_{N}=T\}$, and any (possibly randomized) estimator function $\phi : \{X_{t_{0}},X_{t_{1}},...,X_{t_{N}}\}\to \mathbb{R}^{d\times d}$, there exists system parameter $A,B$ satisfying $\mathbb{P}\left[\Vert \phi(\{X_{t_i}\}_{i \le N}) - A\Vert\geq \frac{c_{1}}{\sqrt{T}}\right]\geq c_{2}$. Here the probability is with respect to noise.
\end{theorem}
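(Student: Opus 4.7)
The plan is to apply Le Cam's two-point method: construct two candidate systems $(A_1, B)$ and $(A_2, B)$ with $\|A_1 - A_2\| = 2 c_1/\sqrt{T}$ whose induced laws on the finite observation vector $(X_{t_0}, \dots, X_{t_N})$ are close in total variation, and then argue that any estimator achieving error smaller than $c_1/\sqrt{T}$ on both hypotheses would yield a test with error probability below the Le Cam floor, which is a contradiction for suitably chosen constants.

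First I would reduce to a scalar, uncontrolled problem. Setting $B = 0$ makes the control input $U_t$ drop out of the dynamics and hence out of the observations, so the algorithm's control choices are irrelevant for identification. For the drift I take the block form $A_i = \mathrm{diag}(a_i, -1, \dots, -1)$ with $a_1 = -1$ and $a_2 = -1 + \Delta$, where $\Delta := 2 c_1/\sqrt{T}$ for a small constant $c_1$ to be fixed. This keeps $\|A_1 - A_2\| = \Delta$, makes both systems stable, and ensures the only informative coordinate is the scalar Ornstein--Uhlenbeck process $dX_t = a X_t\,dt + dW_t$ with $X_0 = 0$. Because the constants $c_1, c_2$ ultimately arise from the scalar analysis, they are independent of $d$.

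Next I would upper bound the KL divergence between the full-path laws $P_{a_1}, P_{a_2}$ using Girsanov's theorem and push it to the finite observations via the data-processing inequality. A direct computation yields
\begin{equation*}
\mathrm{KL}\bigl(P_{a_1}\,\|\,P_{a_2}\bigr) = \tfrac{1}{2}\Delta^2 \,\mathbb{E}_{a_1}\!\left[\int_{0}^{T} X_t^2\,dt\right] \le \tfrac{1}{4}\Delta^2 T = c_1^2 ,
\end{equation*}
using the variance bound $\mathbb{E}_{a_1}[X_t^2] = (1 - e^{2 a_1 t})/(2|a_1|) \le 1/2$ for $a_1 = -1$. Since the finite sample $(X_{t_0},\dots,X_{t_N})$ is a measurable function of the full path, the KL between the observation laws inherits the same bound, and Pinsker's inequality then gives $\mathrm{TV} \le c_1/\sqrt{2} < 1/2$ once $c_1$ is fixed small.

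Finally, Le Cam's two-point lemma applied to the standard estimation-to-testing reduction (assign $\phi$ to the nearer of $A_1, A_2$) gives, for any possibly randomized estimator,
\begin{equation*}
\max_{i \in \{1,2\}} \mathbb{P}_{A_i}\!\left(\|\phi - A_i\| \ge c_1/\sqrt{T}\right) \ge \tfrac{1}{2}\bigl(1 - \mathrm{TV}\bigr) \ge c_2
\end{equation*}
for an absolute constant $c_2 > 0$, which is exactly the claim. The main obstacle is the Girsanov/KL computation: stability of the reference drift $a_1$ is crucial so that $\mathbb{E}[X_t^2]$ stays uniformly bounded and the path integral grows only linearly in $T$; an unstable reference would inflate the KL exponentially and trivialize the test. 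A secondary subtlety is that the observation times $\{t_i\}$ may be chosen arbitrarily (even adaptively), but since data-processing applies to any Borel map of the path, the same KL bound holds uniformly across all such sampling schemes.
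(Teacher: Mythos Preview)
Your argument is correct and takes a genuinely cleaner route than the paper. The paper works directly with the finite-dimensional Gaussian densities of $(X_{t_0},\dots,X_{t_N})$: it expands the log-likelihood ratio $\ln(g/\bar g)$ into three pieces---a variance-ratio term $\sum_i[-\ln\gamma_i+\tfrac12(\gamma_i^2-1)\alpha_i^2]$, a cross term $\sum_i\gamma_i^2\alpha_i\beta_i$, and a squared-mean-shift term $\tfrac12\sum_i\gamma_i^2\beta_i^2$---and bounds each by $1$ with probability at least $5/6$ via explicit elementary inequalities and Markov's inequality; on the intersection event (probability $\ge 1/2$) one gets $|\ln(g/\bar g)|\le 3$, which is then fed into a direct Le Cam-style integral. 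Your approach instead bounds the \emph{path-level} KL via Girsanov and pushes it down to the observations by data processing, avoiding all the case analysis on the $\{t_i\}$. What your route buys is transparency: it is immediate that the bound is uniform over every sampling scheme and cannot improve with $N$, because the path KL dominates the KL of any measurable statistic of the path. What the paper's route buys is that it is entirely elementary---no stochastic calculus, only Gaussian transition densities and second-moment bounds---at the price of considerably more computation to show that the telescoping sums over intervals $[t_{i-1},t_i]$ stay controlled uniformly in the partition.
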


In Theorem~\ref{thm:lower bound}, the mapping $\phi$ can refer to the output of any algorithm that exclusively relies on the finite set of states ${X_{t_{0}},X_{t_{1}},...,X_{t_{N}}}$. \emph{The interesting observation is that the lower bound does not decrease with a larger observation number $N$.}

We defer the proof of the theorem to the Appendix~\ref{sec:lower bound} and provide a proof sketch below. We consider two sets of dynamics, $(A,0)$ and $(\bar{A},0)$, where both $A$ and $\bar{A}$ are stable, and $|A-\bar{A}| = \frac{2c_{1}}{\sqrt{T}}$. Our key observation is that for the two distributions of observed states $S_{k}=\{X_{t_{0}},X_{t_{1}},...,X_{t_{k}}\}$ and $\bar{S}_{k}=\{\bar{X}_{t_{0}},X_{t_{1}},...,X_{t_{k}}\}$, where $X$ corresponds to the linear dynamic $A$ and $\bar{X}$ corresponds to $\bar{A}$, the KL divergence between $S_{k+1}$ and $\bar{S}_{k+1}$ increases by at most $\frac{c}{T}(t_{k+1}-t_{k})$. Here, $c$ is a universal constant independent of $t_{k}$ and $t_{k+1}$. Thus, regardless of how the observation times are selected, the KL divergence between the observed states remains bounded.

\begin{remark}[The Discussion of Lower Bound]
\normalfont

The construction in Theorem~\ref{thm:lower bound} involves matrices $A$ and $\bar{A}$ that depend on $T$, specifically with $\lVert A - \bar{A} \rVert = \tfrac{2c_{1}}{\sqrt{T}}$. 
One might be concerned that such a $T$-dependent construction lacks interpretability since the true system parameters are independent of $T$. 
However, as shown in Appendix~\ref{sec:lower bound}, the matrices are taken as $A = -I_{d}$ and $\bar{A} = -I_{d} - U$, where $U$ has only one nonzero entry at position $(1,1)$ equal to $\tfrac{1}{5\sqrt{T}}$. 
For these matrices, the key constants in the upper bound remain uniformly bounded: the inverse stability margin $\tfrac{1}{|\alpha(A)|}$ equals $1$ for $A=-I_{d}$ and is at most $\tfrac{1}{1+\tfrac{1}{5\sqrt{T}}} \leq 1$ for $\bar{A}$; the condition number $\kappa(A)$ equals $1$ for $A=-I_{d}$ and is at most $1+\tfrac{1}{5\sqrt{T}} \leq 2$ for $\bar{A}$. 
Thus both quantities are controlled by universal constants, independent of $T$, ensuring that the lower and upper bounds are comparable up to a constant factor.
\end{remark}

\subsection{Finding an Initial Stable Controller}

While previous work on continuous-time system identification~\cite{ref3, ref1} always assumes a known stable controller, our method extends to cases where a stabilizer is not known in advance. 
For general $(A,B)$, where a stabilizer is not predetermined, relying on a single trajectory is not feasible, as the state may diverge rapidly before obtaining a stable controller is obtained. Instead, we first find a stable controller $K$ using multiple short-interval trajectories and then employ it in Algorithm~\ref{alg:3} for online control. Below, we list the assumptions on system parameters.

\begin{assumption}[Assumptions for Algorithm~\ref{alg-oracle} and Theorem~\ref{thm:main-unstable}]\label{assumption:multiple} We assume
    \begin{enumerate}
        \item The constants $\kappa_{A},\kappa_{B},h$ follow the same assumptions as in \ref{assumption:stable}.
        \item The running time $T$ for each trajectory is small, say, $T=T_{0}h$ where $T_{0}\in\mathbb{N}$ and $T_{0}\leq 10$.
    \end{enumerate}
\end{assumption}
Then, we employ multiple short trajectories to identify $A$ and $B$ as outlined in Algorithm~\ref{alg-oracle}. Similar to what is demonstrated in \cite{ref5}, this procedure results in an $\mathcal{O}(H^{-1/2})$ estimation error on the trajectory number $H$.
\begin{theorem}\label{thm:main-unstable}
In Algorithm~\ref{alg-oracle}, there exists a constant $C\in poly(\kappa_{A},\kappa_{B})$ such that w.p. at least $1-\delta$, the estimation error of $(\hat{A},\hat{B})$ from $H$ trajectories satisfies:
	\begin{align*}
	\|\hat{A}-A\|,\|\hat{B}-B\|\leq C\sqrt{\frac{\log(1/\delta)}{H}}\,.
	\end{align*}

\end{theorem}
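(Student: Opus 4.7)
}

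The plan is to reduce the continuous-time estimation problem to the discrete-time multiple-trajectory regression problem studied by~\cite{ref5}, and then translate the resulting error bound through Lemma~\ref{lem:error-transform}. Concretely, on each of the $H$ independent short trajectories, we sample the actions $u_{k}^{(i)} \stackrel{\text{i.i.d.}}{\sim} \mathcal{N}(0, I_{p})$, hold them constant on each length-$h$ interval, and record the states at the grid points $\{kh\}_{k=0}^{T_{0}}$. By Lemma~\ref{lem:tran} these observations satisfy the discrete-time recursion $x_{k+1}^{(i)} = A' x_{k}^{(i)} + B' u_{k}^{(i)} + w_{k}^{(i)}$ with $A' = e^{Ah}$ and $B' = [\int_{0}^{h} e^{A(h-s)} ds] B$, and $w_{k}^{(i)}$ i.i.d.\ Gaussian. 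We then compute $(\widetilde{A}, \widetilde{B})$ via the least-squares formula~\eqref{eq:alg1} pooled over all $H$ trajectories, and recover $(\hat{A}, \hat{B})$ via~\eqref{eq:recover}.

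First I would show that each individual short trajectory has well-controlled state norms. Since $T = T_{0} h \le 10 h$ and $\|A\| h, \|B\| h$ are bounded by universal constants under Assumption~\ref{assumption:multiple}, a simple Gr\"onwall-type argument applied to Lemma~\ref{lem:tran} gives $\|x_{k}^{(i)}\| \le C_{1}(\|X_{0}\| + \|u_{0:k-1}^{(i)}\| + \|w_{0:k-1}^{(i)}\|)$ for $k \le T_{0}$, whose tail is sub-Gaussian with a constant depending only on $\kappa_{A}, \kappa_{B}$. This boundedness makes the per-trajectory Gram matrix $G^{(i)} := \sum_{k=0}^{T_{0}-1} z_{k}^{(i)} (z_{k}^{(i)})^{\mathrm{T}}$, with $z_{k}^{(i)} := (x_{k}^{(i)}; u_{k}^{(i)})$, a matrix whose operator norm is sub-exponential uniformly in $i$.

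Next I would exploit the independence of the $H$ trajectories to derive a standard least-squares error bound. On the one hand, since the actions $u_{k}^{(i)}$ are independent Gaussians and the noise $w_{k-1}^{(i)}$ contributes a non-degenerate $\Sigma$ to the covariance of $x_{k}^{(i)}$ for $k \ge 1$, the expectation $\mathbb{E}[G^{(i)}]$ has minimum eigenvalue bounded below by a constant $c_{0} = c_{0}(\kappa_{A}, \kappa_{B}) > 0$. A matrix Chernoff / Bernstein bound applied to the sum $\sum_{i=1}^{H} G^{(i)}$ of i.i.d.\ bounded PSD matrices then yields $\lambda_{\min}(\sum_{i} G^{(i)}) \ge \tfrac{1}{2} c_{0} H$ with probability at least $1 - \delta/2$ once $H \gtrsim \log(d/\delta)$. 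On the other hand, the noise cross term $\sum_{i,k} z_{k}^{(i)} (w_{k}^{(i)})^{\mathrm{T}}$ is a martingale-like sum of sub-Gaussians which, by a self-normalized concentration inequality as in~\cite{ref5}, has operator norm at most $\mathcal{O}(\sqrt{H \log(1/\delta)})$. Combining these two bounds gives $\|\widetilde{A} - A'\|, \|\widetilde{B} - B'\| \le C' \sqrt{\log(1/\delta)/H}$. Invoking Lemma~\ref{lem:error-transform} then yields the same rate for $\|\hat{A} - A\|$ and $\|\hat{B} - B\|$, after absorbing the $1/h$ and $(2 + \kappa_{B}/\kappa_{A})$ factors into $C$.

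The main obstacle I anticipate is establishing the persistence-of-excitation bound $\lambda_{\min}(\mathbb{E}[G^{(i)}]) \ge c_{0}$ without a stabilizer. Because we cannot rely on a stationary distribution (the system is potentially unstable), I would explicitly compute the covariance of $z_{k}^{(i)}$ from the recursion: the action block $\mathbb{E}[u_{k}^{(i)} (u_{k}^{(i)})^{\mathrm{T}}] = I_{p}$ is trivially non-degenerate, and for the state block I would use that $x_{k}^{(i)}$ for $k \ge 1$ contains at least one accumulated Brownian increment whose contribution $\Sigma = \int_{0}^{h} e^{As} e^{A^{\mathrm{T}}s} ds$ is lower bounded by $\tfrac{h}{2} I_{d}$ when $\|A\|h$ is small. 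The cross block can be dealt with by a block-matrix Schur-complement argument, which ultimately reduces the lower bound on $\lambda_{\min}(\mathbb{E}[G^{(i)}])$ to a statement about $h, \Sigma$, and $I_{p}$ alone, giving the desired uniform $c_{0}$.
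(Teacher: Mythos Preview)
Your high-level strategy---discretize via Lemma~\ref{lem:tran}, estimate the discrete dynamics $(A',B')$, and pull the error back through Lemma~\ref{lem:error-transform}---is exactly the paper's. The difference is in how much work is done on the middle step. The paper's proof is essentially two lines: it invokes the discrete-time multiple-trajectory result of~\cite{ref5} as a black-box lemma (Lemma~\ref{lem:multi-trajectory}) to get $\|\widetilde{A}-A'\|,\|\widetilde{B}-B'\|\le C\sqrt{\log(1/\delta)/H}$, and then applies Lemma~\ref{lem:error-transform}. You instead re-derive that concentration bound from scratch (per-trajectory Gram matrices, matrix Bernstein for the lower bound, self-normalized concentration for the noise), which is more self-contained but not needed given that~\cite{ref5} is already a dependency of the paper.

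There is one genuine mismatch you should fix: Algorithm~\ref{alg-oracle} does \emph{not} pool all $T_{0}$ transitions per trajectory with formula~\eqref{eq:alg1}; it performs a joint least-squares regression on the \emph{last} transition only, i.e.\ on the triples $(x_{T_{0}-1}^{l},u_{T_{0}-1}^{l},x_{T_{0}}^{l})$ for $l=1,\dots,H$. Your proof as written analyzes a different estimator. Fortunately this makes the analysis you outlined strictly easier, not harder: the regressors $(x_{T_{0}-1}^{l},u_{T_{0}-1}^{l})$ are now i.i.d.\ across $l$, so the Gram-matrix lower bound follows from a standard i.i.d.\ random-matrix bound (no martingale structure), and the persistence-of-excitation concern you raise reduces to checking that $\mathrm{Cov}(x_{T_{0}-1}^{l})\succeq c_{0}I_{d}$, which holds because $x_{T_{0}-1}^{l}$ has absorbed at least one Brownian increment with covariance $\Sigma\succeq (h/2)I_{d}$ (this needs $T_{0}\ge 2$, implicit in both the paper and your argument). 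The cross block vanishes since $u_{T_{0}-1}^{l}$ is independent of $x_{T_{0}-1}^{l}$, so the Schur-complement step you anticipate is unnecessary.
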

The proof details are shown in the Appendix~\ref{subsection: multiple trajectories}.

\begin{algorithm}[t]
	\caption{Multi-trajectory system identification algorithm} \label{alg-oracle}
	\begin{algorithmic}
		\STATE \textbf{Input:} $T$, $T_{0}$, $h$ as in Assumption~\ref{assumption:multiple}, number of trajectories $H$.
		\FOR{$l = 1, \ldots, H$}
		\FOR{$k = 0, \ldots, T_0-1$}
		\STATE Sample the action $u^{l}_{k}\stackrel{\text { i.i.d. }}{\sim}\mathcal{N}\left(0, I_p\right)$, use the action $U_t \equiv u^{l}_{k}$  during  $t \in [kh, (k+1)h]$.
		\STATE Observe the new state $x^{l}_{k+1}$ at time $(k+1)h$.
		\ENDFOR 
		\ENDFOR
		\STATE Compute $(\widetilde{A},\widetilde{B})$ by $(\widetilde{A},\widetilde{B})\in \arg
\min_{(A,B)}\frac{1}{2}\sum_{l=1}^{H}\left\Vert x^{l}_{T_{0}}-Ax^{l}_{T_{0}-1}-Bu^{l}_{T_{0}-1}\right\Vert^{2}_{2}$.
		\STATE Compute $\bar{A},\bar{B}$ as in~\eqref{eq:recover}, let $(\hat{A},\hat{B})=(\bar{A},\bar{B})$ be estimates for system dynamics $(A,B)$.
	\end{algorithmic}
\end{algorithm}

\section{A Continuous Online Control Algorithm with Improved Regret}\label{section: online control}
In this section, we apply our system identification method to a continuous LQR online control algorithm.  Recall the setup introduced in Section~\ref{sec:online} where we want to minimize the regret $R_{T}$ defined in \eqref{expression:regret}.
We will show in this section that with $\mathcal{O}(T)$ samples, our algorithm achieves $\mathcal{O}(\sqrt{T})$ expected regret on a single trajectory, thereby improving upon the previous $\mathcal{O}\left(\sqrt{T}\log(T)\right)$ result. We list the assumption for the online LQR problems  below.

\begin{assumption}[Assumptions for Algorithm~\ref{alg:3} and Theorem~\ref{thm:total regret}]\label{assumption:alg} We assume that:
\begin{enumerate}
    \item A stabilizer $K$ for $(A,B)$ (see Definition~\ref{def:stable}) with $\alpha(A+BK)<0$ is known in advance.
    \item Sample distance $h$ satisfies $h=\frac{1}{15\kappa}$, where $\kappa\geq\Vert A\Vert +\Vert B\Vert \Vert K\Vert\geq \Vert A+BK\Vert$ is known.
    \item Denote $P_{*}$ be the solution in \eqref{eq:stationary} and $K_{*}=-R^{-1}B^{\mathrm{T}}P_{*}$ be the baseline control dynamic.
    \item $Q,R$ are positive-definite symmetric matrices with bounded spectral norms $\Vert Q\Vert,\Vert R\Vert\leq M$ and for some $\mu > 0$, $\mu I \preceq Q, \mu I \preceq R $.
\end{enumerate}
\end{assumption}



\subsection{An $\mathcal{O}(\sqrt{T})$ Regret Algorithm for Continuous Online Control}

Our online continuous control algorithm is outlined in Algorithm~\ref{alg:3}, and we provide a detailed description below. Algorithm~\ref{alg:3} is divided into two phases, exploration and exploitation. For the first exploration phase, a previously known stabilizer $K$ is applied to prevent the state from diverging. During the $k$-th interval, by setting $U_{t}=KX_{t}+u_{k}$, the state $X_{t}$ transits according to 
$$dX_{t}=(A+BK)X_{t}dt+Bu_{k}dt+dW_{t}.$$
Since $A+BK$ is stable, through replacing $A$ in Theorem~\ref{thm:main-stable} by $A+BK$ in Algorithm~\ref{alg:3}, we can obtain a set of estimators $(\hat{A},\hat{B})$ for $(A,B)$ with small error. This further allows us to accurately estimate $(A,B)$, thereby a controller $\bar{K}=-R^{-1}(\hat{B})^{\mathrm{T}}P$ closed to $K_{*}$ is obtained.

During exploitation phase, the near-optimal controller is deployed to minimize the cost, resulting in a regret of $\mathcal{O}(\sqrt{T})$ (see Theorem~\ref{thm:total regret}). However, as we lack direct feedback on whether $\bar{K}$ is a stabilizer, we need to detect its stability. Our approach involves replacing it with the known stabilizer $K$ whenever the state deviates too far. Then we introduce the theorem of the regret analysis:

\begin{algorithm}[t]\caption{Continuous online control algorithm} \label{alg:3}
\begin{algorithmic}
    \STATE \textbf{Input:} $K,h$ which follows Assumption~\ref{assumption:alg}, running time $T$
    \FOR{$k = 0, \ldots, [\frac{\sqrt{T}}{h}]-1$}
    \STATE Sample the action $u_{k}\stackrel{\text { i.i.d. }}{\sim}\mathcal{N}\left(0, I_p\right)$.
    \STATE For $t\in[kh,(k+1)h]$, set $U_{t}=KX_{t}+u_{k}$.
    \STATE Observe the new state $x_{k+1}$ at time $(k+1)h$.
    \ENDFOR
    \STATE \textbf{Do system identification and estimate dynamics:}
    \STATE Compute $(\widetilde{A},\widetilde{B})$ according to ~\eqref{eq:alg1} by using $\{x_{k},u_{k}\}$.
    \STATE Compute $\bar{A},\bar{B}$ by \eqref{eq:recover} with $\widetilde{A} ,\widetilde{B}$, and  estimators $(\hat{A},\hat{B})$ by $\hat{A}=\bar{A}-\bar{B}K$,$\hat{B}=\bar{B}$.
    \STATE If $\hat{A}$ is stable, compute $P$ by \eqref{eq:stationary} with estimated $\hat{A},\hat{B}$, and set $\bar{K}=-R^{-1}(\hat{B})^{\mathrm{T}}P$.
    \STATE If $\hat{A}$ is not stable or $P$ computed above satisfies $\|P\|\geq T^{\frac{1}{5}}$, then set $\bar{K}=K$.
		
    \STATE \textbf{Perform exploitation:}
		
    \STATE For $t\in [\sqrt{T},T]$, set $U_{t}=\bar{K} X_{t}$.
		
    \STATE \textbf{Detect bad policy and prevent the trajectory from diverging:}
		
    \STATE If for some $t_{0}\geq \sqrt{T}$, $\|X_{t_{0}}\|\geq {T}^{\frac{1}{5}}$, then set $U_{t}=KX_{t}$ for $t\in[t_{0},T]$.
 	\end{algorithmic}
 \end{algorithm}


\begin{theorem}\label{thm:total regret}
Let $J_{T}$ be the expected LQR cost introduced in \eqref{eq:cost} that takes the action $U_{t}$ as in Algorithm 3. Then for some constant $C\in poly\left(\kappa,M, \mu^{-1}, \vert\alpha(A+BK)\vert^{-1},  \vert\alpha(A+BK_{*})\vert^{-1}\right)$, the regret satisfies:
 \begin{align*}
     R_{T}=J_{T}-J^{*}_{T}\leq C\sqrt{T}\,.
 \end{align*}
\end{theorem}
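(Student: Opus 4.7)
The plan is to decompose the regret as $R_T = R_T^{\mathrm{explore}} + R_T^{\mathrm{exploit}}$ over the two phases $[0,\sqrt T]$ and $[\sqrt T,T]$ of Algorithm~\ref{alg:3}, and bound each by $O(\sqrt T)$ separately. For the exploration phase, since $K$ is a known stabilizer and $U_t = KX_t + u_k$ with $u_k \sim \mathcal{N}(0, I_p)$, the closed-loop matrix $A+BK$ is stable and $X_t$ has uniformly bounded second moments; hence the per-unit-time cost is $O(1)$ and integration over $[0,\sqrt T]$ gives $R_T^{\mathrm{explore}} = O(\sqrt T)$ directly, with no comparison to the baseline needed.

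For the exploitation phase, I apply Theorem~\ref{thm:main-stable} to the shifted dynamics $(A+BK,B)$ over horizon $\sqrt T$. This yields, for any $\delta$, the bound $\|\hat A - A\|,\|\hat B - B\| \leq C \sqrt{\log(1/\delta)}/T^{1/4}$ with probability $1-\delta$, so the estimation error is sub-Gaussian with coordinate variance $O(T^{-1/2})$. I then invoke standard smoothness of the stabilizing solution $P$ of~\eqref{eq:stationary} as a function of $(A,B)$ in a neighborhood of the true parameters to deduce that, whenever the estimation error is below a constant threshold depending only on the system constants in Assumption~\ref{assumption:alg}, $\|P - P_*\|$ and $\|\bar K - K_*\|$ are both $O(\|\hat A - A\| + \|\hat B - B\|)$. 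In particular, for $T$ large this forces $\|P\| < T^{1/5}$, so the fallback clause is not triggered, and $A + B\bar K$ is stable, making $X_t$ of bounded covariance under the learned controller.

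By first-order optimality of $K_*$, the excess per-unit-time LQR cost between the policy $\bar K$ and the baseline $K_*$ is quadratic in the controller gap, i.e.\ of order $O(\|\bar K - K_*\|^2)$. Taking expectations against the sub-Gaussian tails gives $\mathbb E\|\bar K - K_*\|^2 = O(T^{-1/2})$, and integrating over the horizon of length $O(T)$ yields $R_T^{\mathrm{exploit}} = O(\sqrt T)$ on the good event. The complementary failure event has probability $\exp(-\Omega(\sqrt T))$ by Theorem~\ref{thm:main-stable}; on it, the safety mechanism that resets $U_t = K X_t$ whenever $\|X_{t_0}\| \geq T^{1/5}$ caps the worst-case cost by $\mathrm{poly}(T)$, so the contribution to the expectation is negligible, and a Gaussian tail estimate shows the safety check is essentially never triggered on the good event either.

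The main obstacle is the quantitative Riccati perturbation step: I need an explicit, neighborhood-Lipschitz bound on the map $(A,B) \mapsto (P, K_*)$ together with the second-order expansion of the cost around $K_*$, expressed in terms of the parameters $\kappa, M, \mu, |\alpha(A+BK)|, |\alpha(A+BK_*)|$ appearing in the statement. This follows from the implicit function theorem at any stabilizable $(A,B)$ plus standard Lyapunov-equation bounds for the stationary covariance, but making the constants uniform over the random neighborhood where estimation succeeds is the delicate part of the argument; the rest reduces to combining Theorem~\ref{thm:main-stable} with the two-phase cost decomposition outlined above.
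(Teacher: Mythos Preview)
Your proposal is essentially correct and follows the same approach as the paper: two-phase decomposition, boundedness of the state under the stabilizer $K$ during exploration, Riccati perturbation to transfer estimation error to $\|\bar K - K_*\|$, second-order expansion of the stationary cost at $K_*$, and a bad-event analysis handled by the safety reset. Two small corrections: the failure probability you get from applying Theorem~\ref{thm:main-stable} on the $\sqrt T$-horizon is $\exp(-\Omega(T^{1/4}))$ rather than $\exp(-\Omega(\sqrt T))$, because the theorem's validity range $\sqrt T \geq C\log^2(1/\delta)$ is the binding constraint; this is still negligible against any $\mathrm{poly}(T)$ cost, so your conclusion is unaffected. For the Riccati perturbation step you flag as the main obstacle, the paper does not use the implicit function theorem but instead runs an explicit Newton--Kleinman iteration $P_{k+1}=P_k+\Delta P_k$ around $P_*$, bounding each increment via the Lyapunov integral representation of Lemma~\ref{lem:solution} and controlling the closed-loop stability margin uniformly along the iterates; this yields the same local Lipschitz constant in terms of the parameters $\kappa, M, \mu^{-1}, |\alpha(A+BK_*)|^{-1}$ that you need.
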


\paragraph{Proof Sketch of Theorem~\ref{thm:total regret}}
We analyze the two phases of our algorithm. During the exploration phase, the stabilizing controller $K$ effectively bounds the trajectory's radius, ensuring the average cost per unit time is within $\mathcal{O}(1)$, resulting in a total exploration cost of $C\sqrt{T}$. In the subsequent exploitation phase, we analyze two scenarios separately.  
The first scenario occurs when the estimators $(\hat{A},\hat{B})$ have large errors or when $\|X_{t}\|_{2} \geq T^{1/5}$ for some $t \in [\sqrt{T},T]$. This situation is rare and contributes a limited expected cost that can be bounded by a constant.  
The second scenario occurs when $(\hat{A},\hat{B})$ are accurately estimated, and the control $U_{t}=-R^{-1}(\hat{B})^{\mathrm{T}}PX_{t}$ is applied throughout the exploitation phase. In this case, the trajectory's performance is straightforward to analyze, and the expected cost is bounded by $\mathcal{O}(\sqrt{T})+J^{*}_{T}$.

By summing the expected costs, the total exploration cost is bounded by $\mathcal{O}(\sqrt{T})$, and the exploitation cost is bounded by $J^{*}_{T}+\mathcal{O}(\sqrt{T})$. By the definition of regret, $R_{T}=J_{T}-J^{*}_{T}$, the total regret is $\mathcal{O}(\sqrt{T})$, leading to the result of Theorem~\ref{thm:total regret}.

Our result is closely related to the result in \cite{ref1}, along with its similar version~\cite{faradonbeh2022regret}. They achieve $\mathcal{O}(\sqrt{T}\log(T))$ regret. However, they further assumes a known stabilization set for obtaining a stable controller, which is stronger compared with ours. Such difference exists because our approach detects divergence and avoids sticking to a controller which is not stable. Morever, in \cite{ref1}, the exploration and exploitation is simultaneous, where a random matrix is added to the near-optimal controller so that both $A$ and $B$ can be identified. This causes an extra $\log(T)$ factor to the regret. In contrast, our algorithm follows an explore-then-commit structure, which is enabled by the efficient system identification results presented previously. Finally, we additionally considered the setup of finite observation, which is not discussed in~\cite{ref1}.

\subsection{Experiments}

In this section, we conduct simulation experiments for the baseline algorithm and our proposed algorithm. The baseline algorithm follows the work of~\cite{ref1}. We set $d=p=3$ for simplicity. Each element of $A$ is sampled uniformly from $[-1,1]$, making $A$ unstable with high probability. The matrix $B$, $Q$, $R$ are set as the identity matrix $I_{3}$. The sampling interval is set to $h=\frac{1}{30}$.

First, we run Algorithm~\ref{alg-single} for system identification. We plot the expected Frobenius norms of the error matrices $\|\hat{A}-A\|^{2}_{F}$ and $\|\hat{B}-B\|^{2}_{F}$. The results demonstrate that our algorithm can identify $A$ and $B$ within sufficient running time or number of trajectories.

Next, we compare Algorithm~\ref{alg:3} with the baseline algorithm. We analyze the normalized regret $R(T)/T^{1/2}$ for different $t\in[600,10000]$ and plot the results in Figure~\ref{fig:experiment}. The results show that our online control algorithm with system identification achieves constant normalized regret (i.e., $O(\sqrt{T})$ regret) and outperforms the baseline algorithm when $T$ is sufficiently large.

\begin{figure*}[h]\centering
\setlength{\tabcolsep}{20pt}
\begin{tabular}{ccc}
\includegraphics[scale=0.35]{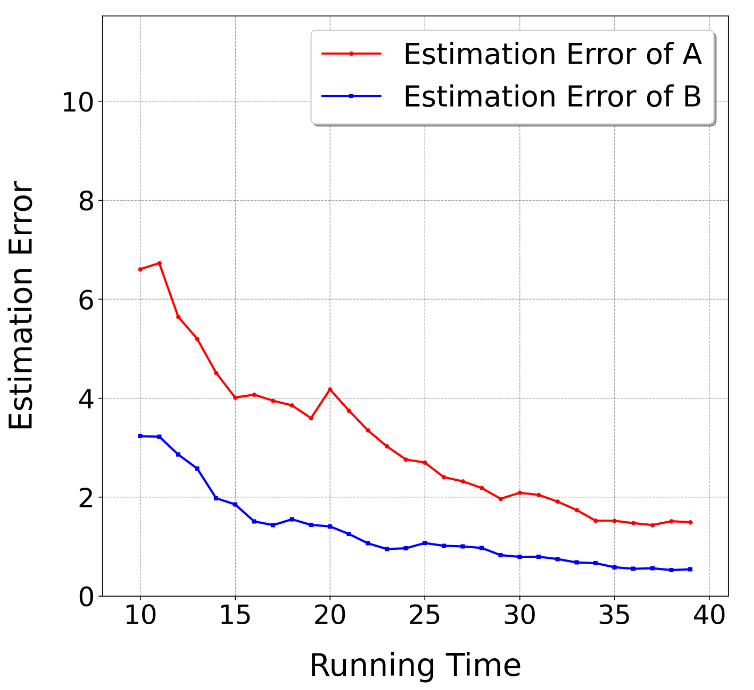} &
\includegraphics[scale=0.338]{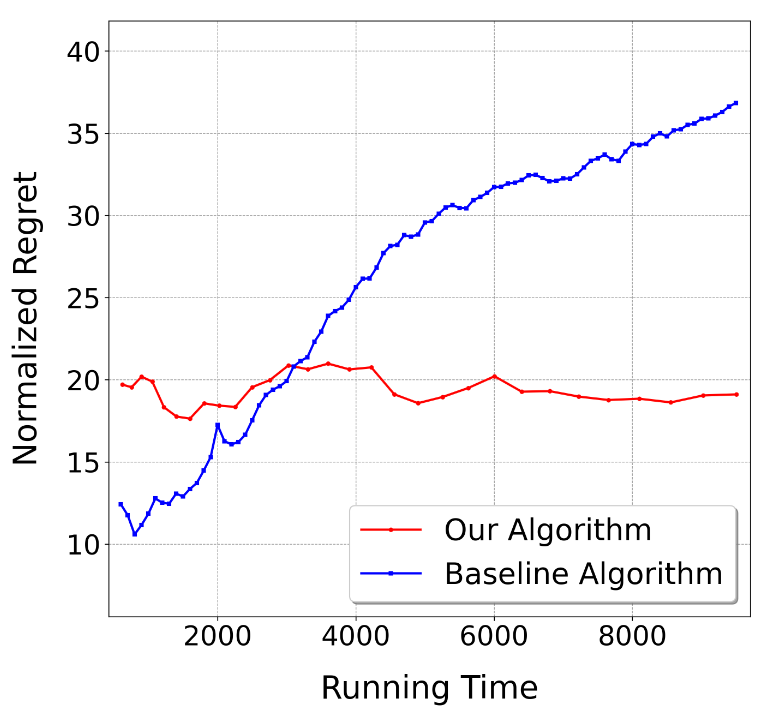} \\
\end{tabular}
\caption{\textbf{The empirical validation of our algorithm.} Left: Identification of system dynamics using a single trajectory. Right: The  normalized regret $R(T)/T^{1/2}$ of the baseline algorithm and our algorithm. The results show that our algorithm achieves small identification error and is more efficient than the baseline algorithm.} \label{fig:experiment} \vspace{-0.3cm}
\end{figure*}
\label{sec:limit}

\section{Conclusions, Limitations and Future Directions}
In this work, we establish a novel system identification method for continuous-time linear dynamical systems. This method only uses a finite number of observations and can be applied to an algorithm for online LQR continuous control which achieves $\mathcal{O}(\sqrt{T})$ regret on a single trajectory. Compared with existed works, our work not only eases the requirement for data collection and computation, but achieves fast convergence rate in identifying the unknown dynamics as well.

Although our method achieves near-optimal results in system identification and LQR online control for continuous systems with stochastic noise, many questions remain unsolved. First, it is unclear whether our system identification approach can be extended to more challenging setups, such as deterministic or adversarial noise. Additionally, many practical models are non-linear, raising the question of under what conditions discretization methods are effective. We believe these questions are crucial for real-world applications.

\bibliography{reference}

\newpage
\appendix

\section{System Identification for Continuous-time Linear System}
In this section, we analysis our system identification method in Algorithm~\ref{alg-single} and Algorithm~\ref{alg-oracle}. As a preparation, we establish some properties of matrix exponentials and their inverses.
\subsection{Matrix Exponential}
For a matrix exponential $e^{At}$, where the largest real component of $A$'s eigenvalues is denoted by $\alpha(A)$, the spectral norm of $e^{At}$ can be well-bounded~\citep{golub2013matrix}, as demonstrated in Lemma~\ref{lem:matrix}.

\begin{lemma}\label{lem:matrix}
	Suppose an $n\times n$ matrix $A$ satisfies that $0>\alpha(A)=\max\{\Re(\lambda_{i})\vert \lambda_{i}\in \lambda(A) \}$. Let $Q^{H}AQ=\mathrm{diag}(\lambda_{i})+N$ be the Schur decomposition of $A$, and let $M_{S}(t)=\sum_{k=0}^{n-1}\frac{\Vert Nt\Vert _{2}^{k}}{k!}$. Then for $t>0$, we have:
\begin{align}\label{eq:e^At}
\Vert e^{At}\Vert \leq e^{\alpha(A)tM_{s}(t)}\,,
\end{align}
\begin{align}\label{eq:e^(A+E)t}
\frac{\left\Vert e^{(A+E)t}-e^{At}\right\Vert}{\Vert e^{At}\Vert }\leq t\Vert E\Vert _{2}(M_{s}(t))^{2}e^{(tM_{S}(t)\Vert E\Vert _{2})}\,.
\end{align}	
\end{lemma}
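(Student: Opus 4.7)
The plan is to prove the two inequalities in order, since Part 1 supplies the essential ingredient for Part 2.

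For the first bound, I would start from the Schur decomposition $Q^{H} A Q = D + N$ with $D = \mathrm{diag}(\lambda_{i})$ and $N$ strictly upper triangular, so that by unitary invariance $\|e^{At}\| = \|e^{(D+N)t}\|$. The natural tool is the Duhamel identity
\[
e^{(D+N)t} = e^{Dt} + \int_{0}^{t} e^{D(t-s)} N\, e^{(D+N)s}\, ds,
\]
which, when iterated, expresses $e^{(D+N)t}$ as $\sum_{k \ge 0} I_{k}(t)$ with
\[
I_{k}(t) = \int_{0 \le s_{1} \le \cdots \le s_{k} \le t} e^{D(t-s_{k})} N\, e^{D(s_{k}-s_{k-1})} N \cdots N\, e^{D s_{1}}\, ds_{1}\cdots ds_{k}.
\]
The key structural observation is that each factor $N e^{D\tau}$ is strictly upper triangular, and a product of $n$ strictly upper triangular $n\times n$ matrices vanishes identically. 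Thus the sum truncates at $k = n - 1$. Combining $\|e^{D\tau}\| = e^{\alpha(A)\tau}$ with submultiplicativity bounds the integrand by $\|N\|^{k} e^{\alpha(A) t}$, and integrating over the ordered simplex of volume $t^{k}/k!$ gives $\|I_{k}(t)\| \le e^{\alpha(A) t} \|Nt\|^{k}/k!$. Summing over $k = 0, \ldots, n-1$ reproduces the claimed bound.

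For the perturbation bound in Part 2, I would start from the variation-of-parameters identity
\[
e^{(A+E)t} - e^{At} = \int_{0}^{t} e^{A(t-s)} E\, e^{(A+E)s}\, ds,
\]
viewed as an integral equation for $e^{(A+E)t}$. Plugging the Part 1 bound into $\|e^{A(t-s)}\|$, dividing through by $e^{\alpha(A) t}$, and setting $\phi(s) := \|e^{(A+E)s}\| e^{-\alpha(A) s}$, the monotonicity of $M_{s}$ produces the scalar inequality
\[
\phi(t) \le M_{s}(t) + \|E\|\, M_{s}(t) \int_{0}^{t} \phi(s)\, ds.
\]
Gr\"onwall's lemma then gives $\phi(t) \le M_{s}(t)\, e^{t M_{s}(t)\|E\|}$. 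Substituting this bound on $\|e^{(A+E)s}\|$ back into the variation-of-parameters formula, together with $M_{s}(s) \le M_{s}(t)$, yields $\|e^{(A+E)t} - e^{At}\| \le t\, \|E\|\, M_{s}(t)^{2}\, e^{\alpha(A) t}\, e^{t M_{s}(t)\|E\|}$. Finally, the spectral-radius lower bound $\|e^{At}\| \ge \rho(e^{At}) = e^{\alpha(A) t}$ (whose eigenvalues are $e^{\lambda_{i} t}$) converts the absolute error into the relative-error form claimed in \eqref{eq:e^(A+E)t}.

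The main obstacle is the non-commutativity of $D$ and $N$ in Part 1: one cannot simply write $e^{(D+N)t} = e^{Dt} e^{Nt}$. The Duhamel iteration is the device that sidesteps this, and the nilpotency argument for products of strictly upper-triangular matrices is what makes the polynomial factor $M_{s}(t)$ appear in place of a genuine exponential in $\|N\|$. Once this is in place, Part 2 is essentially a textbook Gr\"onwall-and-substitute argument, and the spectral-radius step for the denominator is immediate.
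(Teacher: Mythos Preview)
The paper does not prove this lemma; it simply quotes it from Golub and Van Loan's \emph{Matrix Computations} (the citation \texttt{golub2013matrix} immediately preceding the lemma). Your Duhamel-iteration argument for Part~1 and the variation-of-parameters plus Gr\"onwall argument for Part~2 are exactly the standard textbook proofs from that reference, so in that sense your approach coincides with the paper's (outsourced) proof.

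One point worth flagging: what your Part~1 argument actually produces is
\[
\Vert e^{At}\Vert \;\le\; e^{\alpha(A)t}\, M_{s}(t),
\]
i.e.\ the polynomial factor $M_{s}(t)$ sits \emph{multiplicatively} outside the exponential, not inside the exponent as the displayed inequality \eqref{eq:e^At} literally reads. The multiplicative form is the correct statement from Golub--Van Loan; the version printed in the paper, $e^{\alpha(A)\,t\,M_{s}(t)}$, is evidently a typo (for $\alpha(A)<0$ and $M_{s}(t)\ge 1$ it would imply $\Vert e^{At}\Vert \le e^{\alpha(A)t}$, which fails already for a $2\times 2$ Jordan block). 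So when you say your sum ``reproduces the claimed bound,'' be aware that it reproduces the \emph{intended} bound, not the one literally typeset. Your Part~2 derivation and the spectral-radius lower bound for the denominator are fine as written.
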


In a special case where $\alpha(A)\leq 0$, since $M_{s}(t)\geq 1$ for all $t$, we obtain 
\begin{align*}
    \Vert e^{At}\Vert \leq e^{\alpha(A)t}\,.
\end{align*}

We also show some properties of matrix inverse in the following Lemma~\ref{lem:matrix inverse}.

\begin{lemma}[Matrix inverse]\label{lem:matrix inverse}
For any $A\in\mathbb{R}^{d\times d}$ and $t$ such that $0<\Vert At\Vert \leq \frac{1}{10}$, we have the following estimation of $e^{At}$:
\begin{align*}
    \Vert e^{At}-I_{d}\Vert \leq e^{\Vert At\Vert }-1\,,
\end{align*}
and if we denote $A_{1}=e^{At}$, then $A$ also satisfies that 
\begin{align*}
A=\frac{1}{t}\sum_{k\geq 1}\frac{(-1)^{k+1}}{k}(A_{1}-I_{d})^{k}\,.
\end{align*}

\end{lemma}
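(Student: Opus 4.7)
The plan is to prove both parts by direct manipulation of the defining power series, exploiting the trivial fact that $A$ commutes with itself so that all matrix-valued computations reduce cleanly to scalar ones. The first inequality is just a triangle-inequality estimate on the exponential series, and the second is the standard matrix logarithm identity $\log(\exp(At)) = At$, which is valid because the bound from the first part guarantees convergence of the log series.

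For the first claim, I would write $e^{At} - I_d = \sum_{k\ge 1} \frac{(At)^k}{k!}$, which is the usual exponential series and converges in spectral norm for all $At$. Then by the triangle inequality and submultiplicativity,
\begin{align*}
\|e^{At} - I_d\| \le \sum_{k\ge 1} \frac{\|At\|^k}{k!} = e^{\|At\|} - 1.
\end{align*}
That's the whole of part one; the hypothesis $\|At\| \le 1/10$ is not even needed for this step, though it will be crucial next.

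For the second claim, set $Y := A_1 - I_d = e^{At} - I_d$. By part one, $\|Y\| \le e^{1/10} - 1 < 1$, so the series $L(Y) := \sum_{k \ge 1} \frac{(-1)^{k+1}}{k} Y^k$ converges absolutely in spectral norm. Substituting $Y = \sum_{j\ge 1} \frac{(At)^j}{j!}$ into $L(Y)$ yields a double series whose absolute convergence follows from $\sum_k \tfrac{1}{k}(e^{\|At\|}-1)^k < \infty$, so terms may be rearranged freely. Because all powers of $At$ commute with each other, the rearranged sum is a bona fide power series in the single matrix variable $At$, with the same coefficients as the scalar composition $\log(\exp(z))$. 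The scalar identity $\log(\exp(z)) = z$ (valid for $|z| < \log 2$, in particular for $|z| \le 1/10$) pins down those coefficients: the constant term and all higher-order terms cancel, leaving exactly $At$. Therefore $L(Y) = At$, which is precisely the claimed formula after dividing by $t$.

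The only genuine subtlety is justifying the transfer from the scalar identity to the matrix identity, but since $At$ commutes with itself this is entirely formal — one may either appeal to holomorphic functional calculus applied to the single-variable analytic function $\log \circ \exp$ on a neighborhood of $0$, or simply verify term by term that the coefficient of $(At)^n$ on the matrix side equals its scalar counterpart. Either route bypasses the usual Baker–Campbell–Hausdorff difficulties because noncommutativity never enters. I expect no substantive obstacle; the work is entirely bookkeeping of absolutely convergent series.
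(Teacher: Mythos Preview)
Your proof of the first inequality is identical to the paper's. For the second identity, your approach is correct but differs from the paper's route. You establish $\log(\exp(At)) = At$ directly by composing the two power series and invoking the scalar identity $\log(\exp(z)) = z$, justified by absolute convergence and the commutativity of powers of $At$. The paper instead defines $A_2 t := \sum_{k\ge 1}\frac{(-1)^{k+1}}{k}(A_1 - I_d)^k$, asserts $e^{A_2 t} = A_1 = e^{At}$ (implicitly using the companion identity $\exp(\log(X)) = X$ for $\|X-I\|<1$), bounds $\|A_2 t\| \le 1/8$, and then proves that the exponential map is injective on the ball $\{X : \|X\| \le 1/8\}$ via the estimate $\|e^{X_1+X_2} - e^{X_1}\| \ge \tfrac{1}{2}\|X_2\|$, concluding $A_2 = A$. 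Your argument is more economical: it avoids the separate injectivity lemma by going straight through the composition. The paper's route has the minor advantage that its injectivity estimate is a standalone quantitative fact reused elsewhere (in the proof of Lemma~\ref{lem:error-transform}), but for the lemma as stated your direct computation is cleaner and fully rigorous.
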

\begin{proof}
We expand $e^{At}$ by
\begin{align*}
    e^{At}=\sum_{k\geq 0}\frac{1}{k!}(At)^{k}\,,
\end{align*}
which follows that
\begin{align*}
    \Vert e^{At}-I_{d}\Vert =\left\Vert\sum_{k\geq 1}\frac{1}{k!}(At)^{k}\right\Vert\leq \sum_{k\geq 1}\frac{1}{k!}\Vert At\Vert ^{k}= e^{\Vert At\Vert }-1\leq \frac{1}{9}\,.
\end{align*}

Since $\Vert A_{1}-I_{d}\Vert <1$, the progression $A_{2}=\sum_{k\geq 1}\frac{(-1)^{k+1}}{kt}(A_{1}-I_{d})^{k}$ converges, and thus 
$e^{A_{2}t}=e^{At}$. Furthermore, it can be computed that
\begin{align*}
\Vert A_{2}t\Vert \leq \sum_{k\geq 1}\left\Vert \frac{1}{k}(A_{1}-I_{d})\right\Vert\leq \sum_{k\geq 1}\frac{1}{k}(\frac{1}{9})^{k}\leq \frac{1}{8}\,.
\end{align*}

Now we show that $A_{2}=A$. We have already known that $\Vert At\Vert $ and $\Vert A_{2}t\Vert $ are small. We also note that the function $f:X\to e^{X}$ $\left( \Vert X\Vert \leq \frac{1}{8}\right)$ constitutes a one-to-one mapping. This assertion is supported by the observation that for any  $X_{1},X_{2}$ such that $\Vert X_{1}\Vert ,\Vert X_{1}+X_{2}\Vert \leq\frac{1}{8}$, we have $\Vert X_{2}\Vert \leq \frac{1}{4}$, implying that 
\begin{align}\label{error of exponential}
\left\Vert e^{X_{1}+X_{2}}-e^{X_{1}}-X_{2}\right\Vert&=\left\Vert\sum_{k\geq 2}\frac{1}{k!}(X_{1}+X_{2})^{k}-X_{1}^{k}\right\Vert\\
&\leq\sum_{k\geq 2}\frac{1}{k!}\frac{2^{k}-1}{4^{k-1}}\Vert X_{2}\Vert \\
&\leq \frac{1}{2}\Vert X_{2}\Vert \,.
\end{align}
Then $\left\Vert e^{X_{1}+X_{2}}-e^{X_{1}}\right\Vert \geq \frac{1}{2}\Vert X_{2}\Vert $, which means $f$ is one-to-one, and thereby leading that $A_{2}=A$.

\end{proof}

\subsection{Proof of Lemma \ref{lem:tran}}
\label{supp:lem1}
\tran*

\begin{proof}
    Using Newton-Leibniz formula, we have
\begin{align*}
X_{t+h} = X_{t}+\int_{0}^{h}AX_{t+t_{1}}+BU_{t+t_{1}}+\frac{dW_{t+t_{1}}}{dt}d_{t_{1}} \,.
\end{align*}

Let $w_{t+t_1} = BU_{t+t_1}+\frac{dW_{t+t_1}}{dt}$, we have:

\begin{align*}
    X_{t+h} &= X_{t}+\int_{0}^{h}AX_{t+t_{1}}+w_{t+t_{1}}dt_{1} \\
&=(I+Ah)X_{t}+\int_{0}^{h}w_{t+t_{1}}dt_{1}+A\int_{0}^{h}(X_{t+t_{1}}-X_{t})dt_{1} \\
&=(I+Ah)X_{t}+\int_{0}^{h}w_{t+t_{1}}dt_{1}+A\int_{0}^{h}\int_{0}^{t_{1}}A{X_{t+t_{2}}}+w_{t+t_{2}}dt_{2}d_{t_{1}} \\
&=(I+Ah+\frac{1}{2}A^{2}h^{2})X_{t}+\int_{0}^{h}(I+A(h-t_{1}))w_{t+t_{1}}dt_{1}+A^{2}\int_{0}^{h}\int_{0}^{t_{1}}(X_{t+t_{2}}-X_{t})dt_{2}dt_{1} \,,
\end{align*}

where the last equality we use the Fubini theorem to change the integral order of $t_1$ and $t_2$ to calculate the second term.

Suppose we already have the following equality for integer $m$ (The case $m = 2$ has been checked above):

\begin{align*}
    X_{t+h} &= (I+\sum_{k=1}^{m}\frac{(hA)^{k}}{k!})X_{t}+\int_{0}^{h}[I+\sum_{k=1}^{m-1}\frac{{((h-t_{1})A)}^{k}}{k!}]w_{t+t_{1}}dt_{1} \\
    &+A^{m}\int_{0\le t_{m}\le ... \le h}(X_{t+t_{m}}-X_{t})dt_{1}dt_{2}...d_{t_{m}} \,.
\end{align*}

Then, replace $X_{t+t_{m}} - X_t$ by $\int_{0}^{t_{m}}[AX_{t}+A(X_{t+t_{m+1}}-X_{t})+w_{t_{m+1}}]d_{t_{m+1}}$,
we get:

\begin{align*}
    &A^{m}\int_{0\le t_{m}...\le h}(X_{t+t_{m}}-X_{t})dt_{1}dt_{2}...d_{t_{m}}\\
    =&A^{m+1}X_{t}\int_{0\le t_{m+1}\le ...\le h} dt_{1}dt_{2}...d_{t_{m+1}}
+A^{m}\int_{0\le t_{m+1}\le ...\le h}w_{t+t_{m+1}}dt_{1}dt_{2}...d_{t_{m}}\\
+&A^{m+1}\int_{0\le t_{m+1}\le ...\le h}(X_{t+t_{m+1}}-X_{t})dt_{1}dt_{2}...d_{t_{m+1}} \,.
\end{align*}

Using the property that

\begin{align*}
    \int_{0\le x_{1} \le x_{2} \le ... \le x_{m}\le h}dx_{1}dx_{2}...dx_{m}=\frac{h^{m}}{m!} \,.
\end{align*}
We finally get
\begin{align*}
    &A^{m}\int_{0\le t_{m} \le t_{m-1} \le ...\le t_1 \le h}(X_{t+t_{m}}-X_{t})dt_{1}dt_{2}...d_{t_{m}}\\
&=\frac{h^{m+1}}{(m+1)!}A^{m+1}X_{t}+A^{m}\int_{0}^{h}\frac{{(h-t_{m+1})}^{m}}{m!}w_{t+t_{m+1}}dt_{m+1}\\
&+A^{m+1}\int_{0\le t_{m+1}\le ... \le h}(X_{t+t_{m+1}}-X_{t})dt_{1}...d_{t_{m+1}} \,.
\end{align*}

In the calculation of the second term we use the Fubini theorem to change the integral order of $t_{m+1}$ and $t_1, t_2... t_m$.

So the induction hypothesis is true. For any positive integer m, we have the following equality: 

\begin{align*}
    X_{t+h} &= (I+\sum_{k=1}^{m}\frac{(hA)^{k}}{k!})X_{t}+\int_{t_{1}=0}^{h}[I+\sum_{k=1}^{m-1}\frac{{((h-t_{1})A)}^{k}}{k!}]w_{t+t_{1}}dt_{1}\\
&+A^{m}\int_{0\le t_{m}...\le h}(X_{t+t_{m}}-X_{t})dt_{1}dt_{2}...d_{t_{m}} \,.
\end{align*}

For the time interval $\Tilde{t} \in [t, t+h]$, by the continuity of $X_{\Tilde{t}}$ we know that 
$X_{\Tilde{t}}$ is uniformly bounded by some constant $C$. Therefore we have the convergence of the third term in the RHS:
\begin{align*}
    \lim_{m \rightarrow \infty} \|A^{m}\int_{0\le t_{m}...\le h}(X_{t+t_{m}}-X_{t})dt_{1}dt_{2}...d_{t_{m}} \| \le  \lim_{m \rightarrow \infty} \frac{2C (\kappa_A h)^m}{m!} = 0 \,.
\end{align*}

Therefore, we finally get:
\begin{equation}
X_{t+h} = e^{Ah}X_{t}+\int_{0}^{h}e^{A(h-s)}w_{t+s}ds
\end{equation}

Now, we use $w_{t+s} = BU_{t+s}+\frac{dW_{t+s}}{dt}$, we get:

\begin{align}
X_{t+h} &= e^{Ah}X_{t}+\int_{0}^{h}e^{A(h-s)}BU_{t+s}d{s}+\int_{0}^{h}e^{A(h-s)}dW_{t+s} \\
&= e^{Ah}X_{t}+\int_{0}^{h}e^{A(h-s)}BU_{t+s}d{s} + w_t \,,
\end{align}
where $w_{t}$ is Gaussian noise $\mathcal{N}(0, \Sigma)$ with covariance $\Sigma = \int_{0}^{h}e^{As}e^{A^{\mathrm{T}}s}ds$.

\end{proof}

\subsection{Proof of Lemma~\ref{lem:error-transform}}
We restate Lemma~\ref{lem:error-transform} and provide the proof here.
\paragraph{Lemma~\ref{lem:error-transform}}
In Algorithm~\ref{alg-single},~\ref{alg-oracle}, suppose we have obtained the relative error $\|\widetilde{A}-A^{'}\|,\|\widetilde{B}-B^{'}\|\leq \epsilon$ for some $\epsilon\leq \frac{1}{15}$ and  $\|Ah\|\leq \frac{1}{15}$, then we have the following relative error of the primal system: 
	\begin{align}
		\|\hat{A}-A\|,\|\hat{B}-B\|\leq \frac{C}{h}\epsilon\,,
	\end{align}
 where $C$ is a constant independent of $h$.
\begin{proof}
Firstly, according to Lemma~\ref{lem:matrix inverse}, the estimated $\widetilde{A}$ is not too far away from $I_{d}$, as we have:
\begin{align*}
\left\Vert \widetilde{A}-I_{d}\right\Vert \leq \left\Vert \widetilde{A}-e^{Ah}\right\Vert +\left\Vert e^{Ah}-I_{d}\right\Vert \leq \epsilon+e^{\Vert A\Vert h}-1\leq\frac{1}{7}\,,
\end{align*}

Then, from \eqref{eq:recover} we can bound the matrix norm $\left\Vert\hat{A}h\right\Vert$ by
\begin{align*}
\left\Vert \hat{A}h\right\Vert=\left\Vert\sum_{k\geq 1}\frac{(-1)^{k-1}}{k}(\widetilde{A}-I)^{k}\right\Vert\leq\sum_{k\geq 1}\frac{1}{k}(\frac{1}{7})^{k}\leq \frac{1}{6}\,.
\end{align*}
Now, let's denote $A_{1}=Ah$ and $A_{2}=\hat{A}h-A_{1}$, satisfying the relations $A^{'}=e^{A_{1}}$ and $\widetilde{A}=e^{A_{1}+A_{2}}$. It is given that $\Vert A_{1}\Vert \leq\frac{1}{15}$ and $\Vert A_{2}\Vert \leq \Vert A_{1}\Vert +\Vert \hat{A}h\Vert \leq \frac{1}{4}$, so by ~\eqref{error of exponential}, we obtain that $\Vert \hat{A}-A\Vert h=\Vert A_{2}\Vert \leq 2\Vert \widetilde{A}-A^{'}\Vert $, which follows that $\Vert \hat{A}-A\Vert \leq\frac{2}{h}\Vert \widetilde{A}-A^{'}\Vert \leq \frac{2}{h}\epsilon$.
\end{proof}

Next, we will upper bound the estimation error of $B$. Let $A_{h}=\int_{t=0}^{h}e^{At}dt$ and $\bar{A}_{h}=\int_{t=0}^{h}e^{\hat{A}t}dt$, satisfying
\begin{align*}
&\Vert A_{h}-hI\Vert =\left\Vert\int_{t=0}^{h}(e^{At}-I)dt\right\Vert\leq \int_{t=0}^{h}\left\Vert e^{At}-I\right\Vert dt\leq\int_{t=0}^{h}(e^{\Vert A\Vert t}-1)dt\leq \frac{1}{20}h\,,\\
&\Vert \bar{A}_{h}-A_{h}\Vert =\left\Vert \int_{t=0}^{h}e^{\hat{A}t}-e^{At}dt\right\Vert \leq \int_{t=0}^{h}\left\Vert e^{\hat{A}t}-e^{At}\right\Vert dt\leq \frac{3}{2}\int_{t=0}^{h}\Vert \hat{A}-A\Vert tdt\leq \frac{3}{4}h\epsilon\,.
\end{align*}
This follows that 
\begin{align*}
&\left\Vert A_{h}^{-1}\right\Vert=\frac{1}{h}\left\Vert\left[I+(\frac{A_{h}}{h}-I)\right]^{-1}\right\Vert\leq\frac{1}{h}\sum_{k\geq 0}\left\Vert \frac{A_{h}}{h}-I\right\Vert^{k}\leq \frac{20}{19h}\,,\\
&\Vert (\bar{A}_{h})^{-1}-A_{h}^{-1}\Vert \\
&=\left\Vert A_{h}^{-1}\right\Vert \left\Vert \left[I+(\bar{A}_{h}-A_{h})A_{h}^{-1}\right]^{-1}-I\right\Vert \leq \left\Vert A^{-1}_{h}\right\Vert \frac{1}{1-\left\Vert (\bar{A}_{h}-A_{h})A^{-1}_{h}\right\Vert}\leq  \frac{1}{h}\epsilon\,.
\end{align*}

Since $B$ and its estimator $\hat{B}$ satisfy that
\begin{align*}
B=(A_{h})^{-1}B^{'},\hat{B}=(\bar{A}_{h})^{-1}\widetilde{B}\,,
\end{align*}
we can upper bound the estimation error $\left\Vert \hat{B}-B\right\Vert$ by
\begin{align*}
\left\Vert\hat{B}-B\right\Vert\leq \left\Vert(\bar{A}_{h})^{-1}-A_{h}^{-1}\right\Vert \left\Vert B^{'}\right\Vert+\left\Vert(\bar{A}_{h})^{-1}\right\Vert \left\Vert \widetilde{B}-B^{'}\right\Vert \leq \frac{\Vert B^{'}\Vert }{h}\epsilon+\frac{2}{h}\epsilon\leq(2\Vert B\Vert +\frac{2}{h})\epsilon\,,
\end{align*}
where the last inequality is because $\Vert B^{'}\Vert \leq \Vert A_{h}\Vert \Vert B\Vert \leq 2h\Vert B\Vert $.

Since $2\Vert B\Vert\leq 2\kappa_{B}\leq \frac{1}{h}\cdot \frac{2\kappa_{B}}{15\kappa_{A}}\leq \frac{\kappa_{B}}{\kappa_{A}}$, we obtain Lemma~\ref{lem:error-transform}.

\subsection{Proof of Theorem~\ref{thm:main-stable}}
\label{sec:upper bound}

In this section, we derive the proof of Theorem~\ref{thm:main-stable}. We upper bound the estimation errors of intermediate dynamics $(A^{'},B^{'})$, obtained as in~\eqref{eq:alg1}.
We first prove Lemma~\ref{lem:A,B} below, providing system identification results on a single trajectory with a stable controller.

\begin{lemma}\label{lem:A,B}
Consider the trajectory $x_{k+1}=Ax_{k}+Bu_{k}+w_{k}$ with $A\in \mathbb{R}^{d\times d}$, $\|A\|<1$, $B\in \mathbb{R}^{d\times p}$; $u_{k}\sim\mathcal{N}(0,I_{p})$ and $w_{k}\sim\mathcal{N}(0,\Sigma)$ are i.i.d. random variables. Suppose we compute $(\hat{A},\hat{B})$ by	
\begin{align}\label{eq:lem:A,B}
(\hat{A})^{\mathrm{T}}=\left[\sum_{k=0}^{T_{0}-1}x_{k}x^{\mathrm{T}}_{k}\right]^{\dagger}\sum_{k=0}^{T_{0}-1}x_{k}x^{\mathrm{T}}_{k+1}\,,(\hat{B})^{\mathrm{T}}=\left[\sum_{k=0}^{T_{0}-1}u_{k}u^{\mathrm{T}}_{k}\right]^{\dagger}\sum_{k=0}^{T_{0}-1}u_{k}\left(x_{k+1}-\hat{A}x_{k}\right)^{\mathrm{T}}\,.
\end{align}	

Then there exists a constant $C$ (depending only on $A$, $B$, $d$, $p$ and $\Sigma$) such that for $T\geq C\left(\|X_{0}\|^{2}_{2}+\log^{2}(1/\delta)\right)$, w.p. at least $1-\delta$:
\begin{align}
\|\hat{A}-A\|,\|\hat{B}-B\|\leq C\sqrt{\frac{\log(1/\delta)}{T}}\,,
\end{align}
\end{lemma}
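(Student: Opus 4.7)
The plan is to write the least-squares estimators as the true parameters plus a noise-to-regressor ratio, and then control the numerator via martingale concentration and the denominator via a Gram-matrix lower bound. Substituting $x_{k+1} = Ax_k + Bu_k + w_k$ into \eqref{eq:lem:A,B} yields the exact identity
\begin{align*}
\hat{A} - A &= \bigl[ B \textstyle\sum_k u_k x_k^{\mathrm{T}} + \sum_k w_k x_k^{\mathrm{T}} \bigr] \bigl[ \textstyle\sum_k x_k x_k^{\mathrm{T}} \bigr]^{-1}, \\
\hat{B} - B &= \bigl[ (A-\hat{A}) \textstyle\sum_k x_k u_k^{\mathrm{T}} + \sum_k w_k u_k^{\mathrm{T}} \bigr] \bigl[ \textstyle\sum_k u_k u_k^{\mathrm{T}} \bigr]^{-1},
\end{align*}
so the proof reduces to (i) a lower bound on the smallest eigenvalues of the two Gram matrices and (ii) upper bounds on the three cross-correlation sums. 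Crucially, both $(u_k)$ and $(w_k)$ are independent of the natural filtration generated by $(x_j, u_j, w_j)_{j < k}$, so each cross sum is a martingale difference sum of conditionally sub-Gaussian increments.

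For the Gram matrices I would proceed as follows. Since $\|A\| < 1$, iterating the recursion gives $x_k = A^k X_0 + \sum_{j<k} A^{k-1-j}(Bu_j + w_j)$, whose unconditional covariance converges geometrically to the solution $\Sigma_\infty$ of the discrete Lyapunov equation $\Sigma_\infty = A\Sigma_\infty A^{\mathrm{T}} + BB^{\mathrm{T}} + \Sigma$, and $\Sigma_\infty \succ 0$ because $\Sigma \succ 0$ (the latter is automatic for $\Sigma = \int_0^h e^{As} e^{A^{\mathrm{T}}s}\,ds$ in the application of the lemma). A block-martingale small-ball argument in the spirit of Simchowitz et al.~\cite{ref4} then yields $\sum_{k<T_0} x_k x_k^{\mathrm{T}} \succeq c\,T_0\,\lambda_{\min}(\Sigma_\infty)\,I_d$ with probability at least $1-\delta/3$, provided the burn-in condition $T_0 \gtrsim \|X_0\|_2^2 + \log^2(1/\delta)$ holds. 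The companion bound $\sum u_k u_k^{\mathrm{T}} \succeq c\,T_0\,I_p$ is immediate from Wishart concentration for i.i.d.\ standard Gaussians.

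For the three cross sums I would apply the self-normalized martingale inequality of Abbasi-Yadkori et al.~\cite{abbasi2011regret}. Each of $\sum_k x_k u_k^{\mathrm{T}}$, $\sum_k x_k w_k^{\mathrm{T}}$, and $\sum_k u_k w_k^{\mathrm{T}}$ is a sum of conditionally sub-Gaussian increments with covariance proxies bounded in terms of $\mathrm{tr}(\Sigma_\infty)$ and $\|\Sigma\|$, so each has spectral norm $O\bigl(\sqrt{T_0 \log(1/\delta)}\bigr)$ on an event of probability at least $1-\delta/3$. Combining with the Gram lower bound immediately produces $\|\hat{A} - A\| \le C\sqrt{\log(1/\delta)/T_0}$; plugging this bound into the expression for $\hat{B} - B$ together with the cross-sum and Wishart bounds gives the same rate for $\|\hat{B} - B\|$, and since $T_0 = \Theta(T)$, the claimed rate follows.

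The main obstacle I expect is the Gram-matrix lower bound on $\sum x_k x_k^{\mathrm{T}}$. A naive Chernoff/Wishart argument is inadequate because the $x_k$'s are not independent and can have strong low-frequency dependence when $A$'s spectrum approaches the unit circle; the fix is a small-ball argument that only requires anti-concentration of $v^{\mathrm{T}} x_k$ conditional on $x_{k-\tau}, \ldots, x_{k-1}$ for a suitable block length $\tau$, which the Gaussian process noise directly supplies. Optimizing $\tau$ against the mixing rate and the target failure probability is what forces the burn-in to scale as $\|X_0\|_2^2 + \log^2(1/\delta)$, rather than the weaker scaling obtainable from the direct concentration of the stationary quadratic form.
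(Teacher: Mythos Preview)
Your plan is essentially the same as the paper's and would go through. Two small points are worth flagging.

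First, the paper streamlines the $\hat{A}$ step by absorbing $Bu_k$ into the noise, setting $w'_k = Bu_k + w_k \sim \mathcal{N}(0,\Sigma+BB^{\mathrm{T}})$ and reducing directly to the autonomous case $x_{k+1}=Ax_k+w'_k$; this avoids having to control $\sum u_k x_k^{\mathrm{T}}$ twice.

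Second, your diagnosis of the $\log^2(1/\delta)$ burn-in is off. The paper's small-ball argument uses block length $k=1$ (the one-step Gaussian innovation already gives the needed anti-concentration), so there is no block-length optimization, and that step only needs $T\gtrsim \|X_0\|_2^2+\log(1/\delta)$. The extra $\log$ enters in the $\hat{B}$ analysis: the paper bounds $\|\sum_k u_k x_k^{\mathrm{T}}\|$ by first establishing $\sup_k\|x_k\|\lesssim \|X_0\|+\sqrt{\log(T/\delta)}$ and then running an MGF/$\epsilon$-net martingale argument, so the cross term carries a factor $\sqrt{\log(T/\delta)}\cdot\sqrt{\log(1/\delta)}$, and absorbing this into the target rate is what forces $T\gtrsim \log^2(1/\delta)$. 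Your self-normalized route would hit the same issue: to turn the Abbasi-Yadkori bound into a raw norm bound on $\sum x_k u_k^{\mathrm{T}}$ you still need an upper bound on $\sum x_k x_k^{\mathrm{T}}$ (equivalently, on $\sup_k\|x_k\|$), which reintroduces the extra $\log$.
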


We first provide Lemma~\ref{lem:A}, which is used as the base of Lemma~\ref{lem:A,B}.
\begin{lemma}\label{lem:A}
Consider $A\in \mathbb{R}^{d\times d}$ such that $\rho(A)<1$ and the system $X_{k+1}=AX_{k}+w_{k}$ with $w_{k}\sim\mathcal{N}(0,\Sigma)$ be i.i.d. random variables. Suppose we estimate $A$ as in~\eqref{eq:alg1}. 
Then there exists a constant $C$ depending on $A$, $\Sigma$ and $d$ such that for $T\geq C(\Vert X_{0}\Vert ^{2}_{2}+\log(1/\delta))$, w.p. at least $1-\delta$, we have:
\begin{align*}
\Vert \hat{A}-A\Vert \leq C\sqrt{\frac{\log(1/\delta)}{T}}\,.
\end{align*}
\end{lemma}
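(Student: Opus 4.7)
\textbf{Proof plan for Lemma~\ref{lem:A}.}

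The plan is to write the estimator error in closed form and then separately control a Gram matrix and a noise cross term. Plugging $X_{k+1} = AX_k + w_k$ into~\eqref{eq:lem:A,B} gives, on the event that $V_T := \sum_{k=0}^{T-1} X_k X_k^{\mathrm{T}}$ is invertible,
\begin{align*}
(\hat A - A)^{\mathrm{T}} \;=\; V_T^{-1} Z_T, \qquad Z_T := \sum_{k=0}^{T-1} X_k w_k^{\mathrm{T}},
\end{align*}
so that $\|\hat A - A\| \le \|V_T^{-1}\|\,\|Z_T\|$. The target bound $\|\hat A - A\| \lesssim \sqrt{\log(1/\delta)/T}$ will follow once I establish (i) a high-probability lower bound $\lambda_{\min}(V_T) \gtrsim T$ and (ii) a high-probability upper bound $\|Z_T\| \lesssim \sqrt{T\log(1/\delta)}$.

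For step (i), since $\rho(A)<1$, iterating the recursion gives $X_k = A^k X_0 + \sum_{j=0}^{k-1} A^{k-1-j} w_j$, so $X_k$ is Gaussian with mean $A^k X_0$ (decaying geometrically in $k$) and covariance $\Gamma_k = \sum_{j=0}^{k-1} A^j \Sigma (A^{\mathrm{T}})^j$ converging to the stationary covariance $P_\infty = \sum_{j\ge 0} A^j \Sigma (A^{\mathrm{T}})^j \succ 0$. Hence $\mathbb{E}[V_T] = \sum_k (\Gamma_k + A^k X_0 X_0^{\mathrm{T}} (A^{\mathrm{T}})^k) = T P_\infty - O(1)$ once $T$ exceeds the mixing time, which itself scales like $\log\|X_0\|_2$ (explaining the $\|X_0\|_2^2$ term in the burn-in condition). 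To pass from expectation to a whisp, I would use a covariance concentration inequality for Gaussian processes with geometric mixing, either via a Hanson--Wright style bound applied to the block-Toeplitz representation of $(X_0,\dots,X_{T-1})$, or, more cleanly, the small-ball / block-martingale technique of Simchowitz et al., which yields $\lambda_{\min}(V_T) \ge c T \lambda_{\min}(P_\infty)$ with probability $1-\delta/2$ provided $T \gtrsim \|X_0\|_2^2 + \log(1/\delta)$.

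For step (ii), note that $w_k$ is independent of $\mathcal{F}_k := \sigma(X_0, w_0,\dots,w_{k-1})$, so $\{X_k w_k^{\mathrm{T}}\}$ is a matrix-martingale-difference sequence with conditional covariance $X_k X_k^{\mathrm{T}} \otimes \Sigma$. Applying a self-normalized matrix martingale concentration inequality (for instance the Abbasi-Yadkori et al.\ bound in its matrix form, which fits naturally here because the normalizer is precisely $V_T$) gives that with probability $1-\delta/2$,
\begin{align*}
\|V_T^{-1/2} Z_T\| \;\lesssim\; \sqrt{\|\Sigma\|\,\bigl(d\log(1+T/\lambda) + \log(1/\delta)\bigr)}.
\end{align*}
Combining this with step (i) and $\|\hat A - A\| \le \lambda_{\min}(V_T)^{-1/2}\,\|V_T^{-1/2} Z_T\|$ yields the claimed $\sqrt{\log(1/\delta)/T}$ rate after absorbing $d$, $\|\Sigma\|$ and $\lambda_{\min}(P_\infty)^{-1}$ into the constant $C$.

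The main obstacle is step (i): the entries of $V_T$ are neither independent nor light-tailed in the usual sense, because $X_k$ and $X_{k+1}$ are strongly correlated through $A$. A naive matrix-Bernstein argument on $\sum X_k X_k^{\mathrm{T}}$ fails because the $X_k$'s are not independent. This is precisely where the Simchowitz-type small-ball / block martingale argument (or, alternatively, Mendelson's small-ball method) is needed to turn anti-concentration of the stationary Gaussian into a uniform lower bound on $V_T$, while handling the transient $A^k X_0$ contribution — which is what produces the $\|X_0\|_2^2$ burn-in term.
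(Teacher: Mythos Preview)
Your plan is essentially the paper's proof: the paper writes $(\hat A-A)^{\mathrm T}=V_T^{-1}Z_T$ and then invokes Simchowitz et al.'s packaged theorem (restated here as Theorem~\ref{thm:simchowitz}), verifying the $(1,\tfrac14 I_d,\tfrac12)$-BMSB small-ball condition from Gaussian anti-concentration of $v^{\mathrm T}w_k$, and choosing the upper barrier $\bar\Gamma$ by Markov on $\mathrm{tr}\big(\mathbb E\sum_k X_kX_k^{\mathrm T}\big)$. Two small points where your sketch diverges: (a) the $\|X_0\|_2^2$ burn-in in the paper arises from the \emph{upper} bound on $V_T$ (the transient piece $\sum_k A^kX_0X_0^{\mathrm T}(A^k)^{\mathrm T}$ enters $\bar\Gamma$), not from a mixing-time argument on the lower bound; (b) if you bound $\|V_T^{-1/2}Z_T\|$ by the raw Abbasi-Yadkori self-normalized inequality you pick up a $d\log(1+T/\lambda)$ term, which would give $\sqrt{(\log T+\log(1/\delta))/T}$ rather than the stated $\sqrt{\log(1/\delta)/T}$. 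The Simchowitz formulation sidesteps this because the log-determinant appears as $\log\det(\bar\Gamma\Gamma_{sb}^{-1})$, in which the factor $T$ cancels and only $d\log(1/\delta)$ remains; if you keep the decomposed route you would need to insert the event $V_T\preceq T\bar\Gamma$ into the self-normalized bound before taking the log-det to recover the clean rate.
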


The work of~\citep{ref4} has discussed such systems in their Theorem 2.4, and we list it below:

\begin{theorem}\label{thm:simchowitz}
Fix $\epsilon,\delta\in(0,1)$, $T\in\mathbb{N}$ and $0\prec \Gamma_{sb}\prec\bar{\Gamma}$. Then if $(X_{t},Y_{t})_{t\geq1}\in (\mathbb{R}^{d}\times\mathbb{R}^{n})^{T}$ is a random sequence such that (a) $Y_{t}=A_{*}X_{t}+\eta_{t}$, where $\eta_{t}\vert \mathcal{F}_{t}$ is $\sigma^{2}$-sub-Gaussian and mean zero, (b) $X_{1},...,X_{T}$ satisfies the $(k,\Gamma_{sb},p)$-small ball condition, and (c) such that $\mathbb{P}\left[\sum_{t=1}^{T}X_{t}X^{\mathrm{T}}_{t}\not\preceq T\bar{\Gamma}\right]\leq \delta$. Then if
\begin{align*}
T\geq\frac{10k}{p^{2}}\left(\log(1/\delta)+2d\log(10/p)+\log \det(\bar{\Gamma}\Gamma^{-1}_{sb})\right)\,,
\end{align*}
we have
\begin{align*}
\mathbb{P}\left[\Vert \hat{A}-A_{*}\Vert >\frac{90\sigma}{p}\sqrt{\frac{n+d\log\frac{10}{p}+\log \det(\bar{\Gamma}\Gamma^{-1}_{sb})+\log(\frac{1}{\delta})}{T\lambda_{\min}(\Gamma_{sb})}}\right]\leq 3\delta\,.
\end{align*}
Here, the $(k,\Gamma_{sb},p)$-small ball condition is defined as follows. Let $(Z_{t})_{t\geq1}$ be an ${\mathcal{F}_{t}}_{t\geq1}$-adapted random process taking values in $\mathbb{R}$. We say $(Z_{t})_{t\geq1}$ satisfies the $(k,\nu,p)$-block martingale small-ball (BMSB) condition if, for any $j\geq0$, one has $\frac{1}{k}\sum_{i=1}^{k}\mathbb{P}(\vert Z_{j+i}\vert \geq\nu \vert \mathcal{F}_{j})\geq p$ almost surely. Given a process $(X_{t})_{t\geq1}$ taking values in $\mathbb{R}^{d}$, we say that it satisfies the $(k,\Gamma_{sb},p)$-BMSB condition for $\Gamma_{sb}\succ 0$ if, for any fixed $w\in\mathcal{S}^{d-1}$, the process $Z_{t}:=\langle w,X_{t}\rangle$ satisfies $(k,\sqrt{w^{\mathrm{T}}\Gamma_{sb}w},p)$-BMSB.
\end{theorem}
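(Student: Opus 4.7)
The plan is to decompose the error of the least-squares estimator $\hat A$ into a self-normalized martingale term and a Gram-matrix term, and control these two pieces separately: the first by an Abbasi-Yadkori style inequality for vector-valued martingales, the second by upgrading the per-direction BMSB hypothesis to a uniform small-ball bound via a covering argument in the right (anisotropic) geometry. Writing $V_T = \sum_{t=1}^T X_t X_t^{\mathrm{T}}$ and $S_T = \sum_{t=1}^T X_t \eta_t^{\mathrm{T}}$, the normal equations give $(\hat A - A_*)^{\mathrm{T}} = V_T^{\dagger} S_T$, so on the event that $V_T$ is invertible, $\|\hat A - A_*\| \leq \|V_T^{-1/2}\|\cdot\|V_T^{-1/2} S_T\|$. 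This reduces the proof to (i) a uniform lower bound on $\lambda_{\min}(V_T)$, and (ii) an upper bound on $\|V_T^{-1/2} S_T\|$.

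For (ii), I would fix a direction $w \in \mathcal{S}^{n-1}$ and observe that $\{\eta_t^{\mathrm{T}} w\}$ is a conditionally $\sigma^{2}$-sub-Gaussian martingale difference sequence with respect to $\mathcal{F}_t$. The Abbasi-Yadkori self-normalized inequality then gives, with probability at least $1-\delta$, $\|V_T^{-1/2} S_T w\|_{2}^{2} \leq 2\sigma^{2}\bigl(\tfrac{1}{2}\log\det(I + V_T/\lambda) + \log(1/\delta)\bigr)$ for any regularizer $\lambda > 0$. A standard $\tfrac{1}{4}$-net of $\mathcal{S}^{n-1}$ (of cardinality at most $10^{n}$) and the usual net-to-operator-norm step yield $\|V_T^{-1/2} S_T\|^{2} \leq C\sigma^{2}\bigl(n + \log\det(V_T/\lambda) + \log(1/\delta)\bigr)$. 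Using hypothesis (c), $V_T \preceq T\bar{\Gamma}$, together with the target lower bound $V_T \succeq c\, p^{2} T\, \Gamma_{sb}$ from part (i), a suitable choice $\lambda \sim p^{2} T \lambda_{\min}(\Gamma_{sb})$ collapses $\log\det(V_T/\lambda)$ into $d\log(10/p) + \log\det(\bar{\Gamma}\Gamma_{sb}^{-1})$, which is exactly the combinatorial factor appearing in the theorem.

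The main obstacle is (i). For a single fixed $w \in \mathcal{S}^{d-1}$, the $(k,\sqrt{w^{\mathrm{T}}\Gamma_{sb}w},p)$-BMSB condition on $Z_t = w^{\mathrm{T}} X_t$ states that, conditionally on any $\mathcal{F}_j$, at least a $p$-fraction of $|Z_{j+1}|,\dots,|Z_{j+k}|$ exceed $\sqrt{w^{\mathrm{T}}\Gamma_{sb}w}$. Partitioning $[T]$ into $T/k$ blocks and applying a Freedman/Bernstein inequality to the block-wise indicator sums gives $w^{\mathrm{T}} V_T w \geq \tfrac{c\, p^{2} T}{k}\, w^{\mathrm{T}} \Gamma_{sb} w$ with failure probability $e^{-c' p^{2} T/k}$. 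The hard step is promoting this pointwise bound to a uniform one over $w$: I would reparametrize via $u = \Gamma_{sb}^{1/2} w$ and cover the ellipsoid $\{u : u^{\mathrm{T}} \Gamma_{sb}^{-1}\bar{\Gamma}\Gamma_{sb}^{-1} u \leq 1\}$ with a net measured in the $\bar{\Gamma}$-induced norm, whose log-covering number is exactly $d\log(10/p) + \log\det(\bar{\Gamma}\Gamma_{sb}^{-1})$. The discretization error between a generic $w$ and its net-point is then controlled by hypothesis (c), since such errors are measured in the $V_T/T$ norm and $V_T/T \preceq \bar{\Gamma}$. A union bound over the net, combined with the sample-complexity condition $T \geq \tfrac{10 k}{p^{2}}(\log(1/\delta) + 2d\log(10/p) + \log\det(\bar{\Gamma}\Gamma_{sb}^{-1}))$, pushes the failure probability down to $\delta$.

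Combining the three high-probability events (the $V_T$ upper bound from (c), the uniform lower bound from (i), and the self-normalized bound from (ii)) by a union bound yields overall failure probability $\leq 3\delta$; substituting into $\|\hat A - A_*\| \leq \|V_T^{-1/2}\|\cdot\|V_T^{-1/2} S_T\|$ recovers the conclusion of the theorem with the stated constants. The true difficulty is entirely in the anisotropic covering for (i): a naive Euclidean $\epsilon$-net would contribute only $d\log(1/\epsilon)$ and would fail to produce the sharp $\log\det(\bar{\Gamma}\Gamma_{sb}^{-1})$ dependence when $\bar{\Gamma}$ and $\Gamma_{sb}$ are poorly conditioned relative to one another; one must cover in the right ellipsoidal norm and align the net discretization error with the upper bound supplied by assumption (c).
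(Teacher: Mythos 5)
The paper does not actually prove this statement: it is imported verbatim as Theorem~2.4 of the work cited as~\citep{ref4} (Simchowitz et al.), and the surrounding text only applies it, so there is no in-paper proof to compare against. Your sketch is a faithful reconstruction of the architecture of the original proof in that reference: the normal-equation decomposition $(\hat A - A_*)^{\mathrm{T}} = V_T^{\dagger}S_T$ with $V_T=\sum_t X_tX_t^{\mathrm{T}}$ and $S_T=\sum_t X_t\eta_t^{\mathrm{T}}$; the Abbasi--Yadkori self-normalized martingale bound combined with a net over $\mathcal{S}^{n-1}$ for the noise term; the blocking-plus-Chernoff argument that converts the per-direction BMSB hypothesis into a pointwise lower bound $w^{\mathrm{T}}V_Tw \gtrsim p^{2}T\,w^{\mathrm{T}}\Gamma_{sb}w/k$; and, as the genuinely novel step, the anisotropic covering whose discretization error is absorbed by the upper bound in hypothesis (c), which is precisely the source of the $\log\det(\bar{\Gamma}\Gamma_{sb}^{-1})$ term. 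You have correctly identified that last step as the crux and correctly explained why a naive Euclidean net fails. What remains at sketch level --- the exact Freedman/Chernoff bound on the block indicator sums, the precise net resolution as a function of $p$, and the bookkeeping that yields the $3\delta$ failure probability and the constant $90$ --- is routine to a reader of~\citep{ref4} but would need to be written out for a self-contained proof; none of it constitutes a gap in the plan.
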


In the work of~\citep{ref4}, they have discussed the case when $X_{0}=0$, and now we modify it to a general starting state $X_{0}$.
From~\eqref{eq:alg1}, we derive the estimation error of $A$ as
\begin{align*}
\hat{A}^{\mathrm{T}}-A^{\mathrm{T}}&=\left[\sum_{k=0}^{T-1}X_{k}X^{\mathrm{T}}_{k}\right]^{\dagger}\sum_{k=0}^{T-1}X_{k}X^{\mathrm{T}}_{k+1}-A^{\mathrm{T}}\\
&=\left[\sum_{k=0}^{T-1}X_{k}X^{\mathrm{T}}_{k}\right]^{\dagger}\sum_{k=0}^{T-1}X_{k}(AX_{k}+w_{k})^{\mathrm{T}}-A^{\mathrm{T}}\\
&=\left[\sum_{k=0}^{T-1}X_{k}X^{\mathrm{T}}_{k}\right]^{\dagger}\sum_{k=0}^{T-1}X_{k}w^{\mathrm{T}}_{k}\,.
\end{align*}
For the first term, consider any $v\in\mathcal{S}^{d-1}$, we lower bound $v^{\mathrm{T}}\left(\sum_{k=0}^{T-1}X_{k}X^{\mathrm{T}}_{k}\right)v$.
Let $a_{k}=v^{\mathrm{T}}X_{k}$, then  $a_{k}=v^{\mathrm{T}}AX_{k-1}+v^{\mathrm{T}}w_{k}$.
We claim that for any $k\geq 1$, $\mathbb{P}\left[\vert a_{k}\vert \geq \frac{1}{2}\vert X_{k-1}\right]\geq\frac{1}{2}$.
Let $b_{k}=v^{\mathrm{T}}w_{k}$, which is independent of $X_{k-1}$. It suffices to show that for any $c\in\mathbb{R}$, $\mathbb{P}\left[b_{k}\in [c,c+1]\right]\leq \frac{1}{2}$.

Since $\Vert v\Vert _{2}=1$ and $w_{k}\sim\mathcal{N}(0,I_{d})$, we have $b_{k}\sim\mathcal{N}(0,1)$, from which we estimate the probability as
\begin{align}\label{eq:bmsb}
\mathbb{P}\left[b_{k}\in [c,c+1]\right]=\int_{x=c}^{c+1}\frac{1}{\sqrt{2\pi}}e^{-\frac{1}{2}x^{2}}dx\leq\frac{1}{\sqrt{2\pi}}\leq\frac{1}{2}\,.
\end{align}

Based on \eqref{eq:bmsb}, we can simply choose $k=1$, $\Gamma_{sb}=\frac{1}{4}I_{d}$ and $p=\frac{1}{2}$, then the random sequence $(X_{i})_{i\geq0}$ satisfies the $(k,\Gamma_{sb},p)$-BMSB condition. It remains to choose a proper $\bar{\Gamma}$ that meets the condition (c) in Theorem~\ref{thm:simchowitz}.

Since $X_{k}=A^{k}X_{0}+\sum_{i=1}^{k}A^{k-i}w_{i}$, we have:
\begin{align*}
\mathbb{E}\left[\sum_{k=0}^{T-1}X_{k}X^{\mathrm{T}}_{k}\right]&=\mathbb{E}\left[\sum_{k=0}^{T-1}\left(A^{k}X_{0}+\sum_{i=1}^{k}A^{k-i}w_{i}\right)\left(A^{k}X_{0}+\sum_{i=1}^{k}A^{k-i}w_{i}\right)^{\mathrm{T}}\right]\\
&=\sum_{k=0}^{T-1}A^{k}X_{0}X_{0}^{\mathrm{T}}(A^{k})^{\mathrm{T}}+\mathbb{E}\left[\sum_{k=0}^{T-1}\sum_{i=0}^{k}A^{k}X_{0}w_{i}^{\mathrm{T}}(A^{k-i})^{\mathrm{T}}\right]\\
&+\mathbb{E}\left[\sum_{k=0}^{T-1}\sum_{i=0}^{k}A^{k-i}w_{i}X_{0}^{\mathrm{T}}(A^{k})^{\mathrm{T}}\right]+\mathbb{E}\left[\sum_{k=0}^{T-1}\sum_{i,j=0}^{k}A^{k-i}w_{i}w^{\mathrm{T}}_{j}(A^{k-j})^{\mathrm{T}}\right]\\
&=\sum_{k=0}^{T-1}A^{k}X_{0}X_{0}^{\mathrm{T}}(A^{k})^{\mathrm{T}}+\sum_{k=0}^{T-1}\sum_{i=0}^{k}A^{k-i}\Sigma(A^{k-i})^{\mathrm{T}}\,.
\end{align*}

Let $\Gamma_{\infty}=\sum_{k\geq0}A^{k}\Sigma(A^{k})^{\mathrm{T}}$ which is bounded and $C_{1}$ be a constant such that $C_{1}\geq\sum_{k\geq0}\Vert A^{k}\Vert ^{2}$. We then show that for $\bar{\Gamma}=\left(\frac{C_{1}\Vert X_{0}\Vert ^{2}_{2}}{T}dI_{d}+d\Vert \Gamma_{\infty}\Vert I_{d}\right)/\delta$, the condition (c) in Theorem~\ref{thm:simchowitz} is satisfied. This is because $\mathbb{E}\left[\mathrm{tr}\left(\sum_{k=0}^{T-1}X_{k}X_{k}^{\mathrm{T}}\right)\right]=\mathrm{tr}\left(\mathbb{E}\left[\sum_{k=0}^{T-1}X_{k}X_{k}^{\mathrm{T}}\right]\right)\leq \frac{T\delta}{d}\mathrm{tr}(\bar{\Gamma})$ so that 
\\$\mathbb{P}\left[\mathrm{tr}(\sum_{k=0}^{T-1}X_{k}X_{k}^{\mathrm{T}})\geq \frac{1}{d}T\mathrm{tr}(\bar{\Gamma})\right]\leq\delta$. Furthermore, a necessary condition for \\$\sum_{k=0}^{T-1}X_{k}X^{\mathrm{T}}\not\prec T\bar{\Gamma}$ is $\mathrm{tr}(\sum_{k=0}^{T-1}X_{k}X^{\mathrm{T}})\geq \frac{1}{d}T\mathrm{tr}(\bar{\Gamma})$.

Now, we apply such $\bar{\Gamma}$ to Theorem~\ref{thm:simchowitz}. It can be computed that \begin{align*}
\log\det(\bar{\Gamma}\Gamma^{-1}_{sb})=d\log\left(4d(C_{1}\Vert X_{0}\Vert ^{2}_{2}/T+\Vert \Gamma_{\infty}\Vert )\right)+d\log(1/\delta)\,.
\end{align*}
Then when $T\geq C_{1}\Vert X_{0}\Vert ^{2}$ as well as $T\geq 40\left(2d\log(20)+d\log(4d(1+\Vert \Gamma_{\infty}\Vert ))+2d\log(1/\delta)\right)$, we have:
\begin{align*}
\mathbb{P}\left[\Vert \hat{A}-A\Vert >360\sqrt{\frac{d+d\log(20)+d\log(4d(1+\Vert \Gamma_{\infty}\Vert ))+2d\log(\frac{1}{\delta})}{T}}\,\right]\leq 3\delta\,.
\end{align*}
This implies our Lemma~\ref{lem:A}.

\paragraph{Proof of Lemma~\ref{lem:A,B}}
As for the estimation error $\Vert \hat{A}-A\Vert $, let $w^{'}_{k}=Bu_{k}+w_{k}\sim\mathcal{N}(0,\Sigma+BB^{\mathrm{T}})$, which form a sequence of $i.i.d$ random variables. With the results in Lemma~\ref{lem:A}, there exist some constants $C_{1},C_{2}$ such that, as long as $T\geq C_{1}\left(\Vert X_{0}\Vert ^{2}_{2}+\log(1/\delta)\right)$ we have:
\begin{align*}
\Vert \hat{A}-A\Vert \leq C_{2}\sqrt{\frac{\log(1/\delta)}{T}}\,.
\end{align*}

Now we upper bound the estimation error $\Vert \hat{B}-B\Vert $. With the expression in~\eqref{eq:alg1}, we obtain:
\begin{align*}
\Vert \hat{B}-B\Vert &=\left\Vert\left[\sum_{k=0}^{T-1}u_{k}u^{\mathrm{T}}_{k}\right]^{\dagger}\sum_{k=0}^{T-1}\textcolor{blue}{u_{k}}\left[(A-\hat{A})X_{k}+w_{k}\right]^{\mathrm{T}}\right\Vert\\
&\leq\lambda^{-1}_{\min}(\sum_{k=0}^{T-1}u_{k}u^{\mathrm{T}}_{k})\left[\left\Vert\sum_{k=0}^{T-1}u_{k}X^{\mathrm{T}}_{k}\right\Vert \left\Vert\hat{A}-A\right\Vert+\left\Vert\sum_{k=0}^{T-1}u_{k}w^{\mathrm{T}}_{k}\right\Vert\right]\,.
\end{align*}
For the quantities $\lambda^{-1}_{\min}(\sum_{k=0}^{T-1}u_{k}u^{\mathrm{T}}_{k})$ and $\Vert \sum_{k=0}^{T-1}u_{k}w^{\mathrm{T}}_{k}\Vert $, we apply Lemma 2.1. and Lemma 2.2. in the work of~\citep{ref5}, where they present the following results.
\begin{lemma}
Let $N\geq 2\log(1/\delta)$. Suppose $f_{k}\in\mathbb{R}^{m}$, $g_{k}\in\mathbb{R}^{n}$ are independent vectors such that $f_{k}\sim\mathcal{N}(0,\Sigma_{f})$ and $g_{k}\sim\mathcal{N}(0,\Sigma_{g})$ for $1\leq k\leq N$. With probability at least $1-\delta$,
\begin{align*}
\left\Vert\sum_{k=1}^{N}f_{k}g^{\mathrm{T}}_{k}\right\Vert\leq 4\Vert \Sigma_{f}\Vert ^{1/2}_{2}\Vert \Sigma_{g}\Vert ^{1/2}_{2}\sqrt{N(m+n)\log(9/\delta)}\,.
\end{align*}
\end{lemma}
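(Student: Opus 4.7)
The plan is to reduce the claim to an isotropic (white-Gaussian) inequality via whitening, and then to establish the isotropic version by combining an $\varepsilon$-net discretization of the unit spheres with a Bernstein-type concentration bound for sums of products of independent Gaussians.

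First, I would whiten the variables: write $f_{k} = \Sigma_{f}^{1/2}\tilde{f}_{k}$ and $g_{k} = \Sigma_{g}^{1/2}\tilde{g}_{k}$, where $\tilde{f}_{k} \sim \mathcal{N}(0, I_{m})$ and $\tilde{g}_{k} \sim \mathcal{N}(0, I_{n})$ are independent. Since the spectral norm is sub-multiplicative,
\begin{align*}
\left\Vert\sum_{k=1}^{N} f_{k}g_{k}^{\mathrm{T}}\right\Vert \leq \Vert\Sigma_{f}\Vert^{1/2}\Vert\Sigma_{g}\Vert^{1/2}\left\Vert\sum_{k=1}^{N} \tilde{f}_{k}\tilde{g}_{k}^{\mathrm{T}}\right\Vert\,.
\end{align*}
It therefore suffices to prove the bound $\left\Vert\sum_{k} \tilde{f}_{k}\tilde{g}_{k}^{\mathrm{T}}\right\Vert \leq 4\sqrt{N(m+n)\log(9/\delta)}$ in the isotropic case.

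Next, using the variational characterization $\Vert M\Vert = \sup_{u \in \mathcal{S}^{m-1}, v \in \mathcal{S}^{n-1}} u^{\mathrm{T}} M v$, I would choose $1/4$-nets $\mathcal{N}_{m} \subseteq \mathcal{S}^{m-1}$ and $\mathcal{N}_{n} \subseteq \mathcal{S}^{n-1}$ of cardinalities at most $9^{m}$ and $9^{n}$ respectively (the standard volumetric bound). A routine approximation argument yields $\Vert M\Vert \leq 2 \sup_{(u,v) \in \mathcal{N}_{m} \times \mathcal{N}_{n}} u^{\mathrm{T}} M v$. For any fixed pair $(u,v)$ on the net, the scalars $a_{k} := u^{\mathrm{T}}\tilde{f}_{k}$ and $b_{k} := v^{\mathrm{T}}\tilde{g}_{k}$ are independent standard normals, so each product $a_{k}b_{k}$ is mean-zero and sub-exponential with a universal Orlicz $\psi_{1}$ norm. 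Bernstein's inequality then gives $\mathbb{P}\left[\left|\sum_{k} a_{k}b_{k}\right| > t\right] \leq 2\exp\left(-c\min(t^{2}/N,\,t)\right)$. Setting $t \asymp \sqrt{N(m+n)\log(9/\delta)}$ and applying the union bound over the $|\mathcal{N}_{m} \times \mathcal{N}_{n}| \leq 9^{m+n}$ pairs absorbs a $(m+n)\log 9$ factor into the exponent and yields the desired tail. The hypothesis $N \geq 2\log(1/\delta)$ is precisely what places this choice of $t$ in the sub-Gaussian regime $t^{2}/N \leq t$, so that the effective Bernstein exponent is $t^{2}/N$, matching the form of the claim.

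The main obstacle is bookkeeping of universal constants so that they collapse to the stated prefactor $4$: the factor $2$ from the $1/4$-net reduction, the Bernstein absolute constant governing $\Vert a_{k}b_{k}\Vert_{\psi_{1}}$, and the $\log 9$ arising from the covering numbers must be balanced consistently against the threshold $t$. Once these constants are tuned, the remainder of the argument is a routine net-plus-Bernstein calculation, and everything else (e.g., independence across $k$ and the separation of the sub-exponential and sub-Gaussian tail regimes) is automatic.
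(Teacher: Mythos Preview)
The paper does not prove this lemma; it quotes it verbatim as Lemma~2.1 of \cite{ref5} and uses it as a black box. So there is no ``paper's own proof'' to compare against. Your proposal is correct and is in fact exactly the argument used in the cited source: whiten to reduce to isotropic Gaussians, pass to $1/4$-nets on the two unit spheres (picking up the factor~$2$ and the $9^{m+n}$ covering bound), observe that for fixed unit vectors $u,v$ the summands $u^{\mathrm{T}}\tilde f_k\cdot v^{\mathrm{T}}\tilde g_k$ are i.i.d.\ products of independent standard normals, apply a Bernstein/sub-exponential tail bound to $\sum_k a_k b_k$, and union-bound over the net. Your identification of $N\geq 2\log(1/\delta)$ as the condition ensuring the chosen threshold lands in the sub-Gaussian regime of Bernstein is also the right reading of that hypothesis. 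The only work left is the explicit constant chase, which you correctly flag; it goes through with the MGF bound $\mathbb{E}[e^{\lambda ab}]=(1-\lambda^2)^{-1/2}\leq e^{\lambda^2}$ for $|\lambda|$ bounded away from~$1$, yielding the factor~$2$ inside the square root that combines with the net factor~$2$ to give the stated~$4$.
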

\begin{lemma}
Let $X\in\mathbb{R}^{N\times n}$ have $i.i.d.$ $\mathcal{N}(0,1)$ entries. With probability at least $1-\delta$,
\begin{align*}
\sqrt{\lambda_{\min}(X^{\mathrm{T}}X)}\geq \sqrt{N}-\sqrt{n}-\sqrt{2\log(1/\delta)}\,.
\end{align*}
\end{lemma}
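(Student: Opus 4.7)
The statement is a classical lower-tail estimate for the smallest singular value of a Gaussian matrix, so my plan is to reduce $\sqrt{\lambda_{\min}(X^{\mathrm{T}}X)} = \sigma_{\min}(X) = \min_{v\in\mathcal{S}^{n-1}} \|Xv\|_2$ to two pieces: (i) a lower bound on $\mathbb{E}[\sigma_{\min}(X)]$ obtained from Gordon's Gaussian minimax comparison theorem, and (ii) a Gaussian Lipschitz concentration inequality controlling fluctuations about the mean.

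For concentration, I would first observe that $X \mapsto \sigma_{\min}(X)$ is $1$-Lipschitz with respect to the Frobenius norm. This follows from Weyl's (or Mirsky's) inequality, $|\sigma_{\min}(X) - \sigma_{\min}(Y)| \leq \|X-Y\|_{\mathrm{op}} \leq \|X-Y\|_F$. Viewing the entries of $X$ as a standard Gaussian vector in $\mathbb{R}^{Nn}$, the Tsirelson--Ibragimov--Sudakov Gaussian concentration inequality then yields
\[
\mathbb{P}\!\left[\sigma_{\min}(X) \leq \mathbb{E}[\sigma_{\min}(X)] - t\right] \leq e^{-t^2/2}.
\]
Choosing $t = \sqrt{2\log(1/\delta)}$ reduces the lemma to showing $\mathbb{E}[\sigma_{\min}(X)] \geq \sqrt{N} - \sqrt{n}$.

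For the mean bound, I would rewrite $\sigma_{\min}(X) = \min_{v\in\mathcal{S}^{n-1}} \max_{u\in\mathcal{S}^{N-1}} u^{\mathrm{T}} X v$ as a Gaussian bilinear minimax and apply Gordon's comparison theorem against the decoupled surrogate $u^{\mathrm{T}} g + v^{\mathrm{T}} h$, where $g \sim \mathcal{N}(0,I_N)$ and $h \sim \mathcal{N}(0,I_n)$ are independent. The comparison gives
\[
\mathbb{E}[\sigma_{\min}(X)] \;\geq\; \mathbb{E}\!\left[\min_{v\in\mathcal{S}^{n-1}}\max_{u\in\mathcal{S}^{N-1}}\, u^{\mathrm{T}}g + v^{\mathrm{T}}h\right] \;=\; \mathbb{E}\|g\|_2 - \mathbb{E}\|h\|_2,
\]
and a careful handling of these $\chi$-norm expectations (or, equivalently, invoking the sharp Davidson--Szarek bound directly) yields the desired $\sqrt{N} - \sqrt{n}$. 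Combining this with the concentration inequality above then gives the stated probability bound.

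The main obstacle I expect is obtaining exactly the clean $\sqrt{N} - \sqrt{n}$ form rather than a slightly weaker constant. Jensen's inequality only gives $\mathbb{E}\|g\|_2 \leq \sqrt{N}$, which points in the wrong direction, so one needs a refined moment identity for the $\chi$-distribution (or the sharp Davidson--Szarek estimate) to close the remaining small gap. Once the mean bound is in hand, the Lipschitz verification and the final deviation inequality are entirely routine.
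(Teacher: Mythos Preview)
The paper does not prove this lemma; it is simply quoted (together with the companion bound on $\|\sum f_k g_k^{\mathrm T}\|$) as Lemma~2.2 of~\cite{ref5} and used as a black box in the proof of Lemma~\ref{lem:A,B}. So there is no in-paper argument to compare against.

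Your outline is exactly the standard Davidson--Szarek proof and is correct. The only point that needs sharpening is the one you already flag: Gordon's minimax comparison gives
\[
\mathbb{E}[\sigma_{\min}(X)] \;\geq\; \mathbb{E}\|g\|_2 - \mathbb{E}\|h\|_2 \;=\; a_N - a_n,
\qquad a_k := \mathbb{E}\|\mathcal{N}(0,I_k)\|_2 = \sqrt{2}\,\frac{\Gamma((k+1)/2)}{\Gamma(k/2)},
\]
and one then needs $a_N - a_n \geq \sqrt{N} - \sqrt{n}$, i.e.\ that $k \mapsto \sqrt{k} - a_k$ is nonincreasing. This is the ``refined moment identity'' you allude to; it follows from the recursion $a_k a_{k+1} = k$ together with $a_k < \sqrt{k}$ (equivalently, from standard Gamma-ratio inequalities). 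With that monotonicity in hand, your two ingredients---the $1$-Lipschitz property of $\sigma_{\min}$ via Weyl's inequality and Gaussian concentration with $t = \sqrt{2\log(1/\delta)}$---combine to give precisely the stated bound, with no loss in constants.
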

With these two lemmas, we can conclude that if $T\geq 32(d+p)\log(4/\delta)$, then both
$\lambda_{\min}(u_{k}u^{\mathrm{T}}_{k})\geq\frac{1}{2}T$ and $\left\Vert\sum_{k=0}^{T-1}u_{k}w^{\mathrm{T}}_{k}\right\Vert \leq 4\left\Vert\Sigma\right\Vert^{1/2}_{2}\sqrt{T(d+p)\log(18/\delta)}$, w.p. at least $1-\delta$.

Now we concentrate on the term $\left\Vert \sum_{k=0}^{T-1}u_{k}X^{\mathrm{T}}_{k}\right\Vert$. Since $w^{'}_{i}=Bu_{i}+w_{i}\sim\mathcal{N}(0,\Sigma+BB^{\mathrm{T}})$, it can be directly computed that, w.p. at least $1-\delta/T$, $\left\Vert w^{'}_{i}\right\Vert_{2}\leq 2\left\Vert d(\Sigma+BB^{\mathrm{T}})\right\Vert ^{1/2}_{2}\sqrt{\log(T/\delta)}$. Then by union bound we get $\mathbb{P}\left[\sup_{0\leq i\leq T-1}\left\Vert w^{'}_{i}\right\Vert _{2}\leq 2\Vert \Sigma+BB^{\mathrm{T}}\Vert ^{1/2}_{2}\sqrt{d\log(T/\delta)} \right]\leq \delta$. Furthermore, when $\sup_{0\leq i\leq T-1}\left\Vert w^{'}_{i}\right\Vert_{2}\leq 2\left\Vert \Sigma+BB^{\mathrm{T}}\right\Vert ^{1/2}_{2}\sqrt{d\log(T/\delta)}$, we must have 
\begin{align}\label{eq:X}
\Vert X_{k}\Vert _{2}=\left\Vert A^{k}X_{0}+\sum_{i=0}^{k-1}A^{k-1-i}w_{i}\right\Vert \leq \Vert A\Vert ^{k}\Vert X_{0}\Vert _{2}+\frac{2}{1-\Vert A\Vert }\left\Vert\Sigma+BB^{\mathrm{T}}\right\Vert^{1/2}_{2}\sqrt{d\log(T/\delta)}\,.
\end{align}

For any $u\in\mathcal{S}^{p-1}$ and $v\in\mathcal{S}^{d-1}$, let $x_{i}=u^{\mathrm{T}}u_{i}(0\leq i\leq T-1)$. Then, $x_{i}$ follows a normal distribution $x_{i}\sim\mathcal{N}(0,1)$ and $\{x_{i}\}$ is a sequence of independent random variables. Furthermore, $x_{k}$ is also independent of $(X_{i})_{0\leq i\leq k}$. On the other hand, denote $y_{k}=X^{\mathrm{T}}_{k}v$, \eqref{eq:X} implies that w.p. at least $1-\delta$, for all $k$ we have
$\vert y_{k}\vert \leq \Vert X_{0}\Vert _{2}+\frac{2}{1-\Vert A\Vert }\left\Vert\Sigma+BB^{\mathrm{T}}\right\Vert^{1/2}_{2}\sqrt{d\log(T/\delta)}:=Y$. Let 
\begin{align*}
Z_{k}:=\sum_{i=0}^{k}u^{\mathrm{T}}\left(u_{k}X^{\mathrm{T}}_{k}\right)v\cdot 1_{\Vert X_{k}\Vert _{2}\leq Y}=\sum_{i=0}^{k}x_{k}y_{k}\cdot1_{\Vert X_{k}\Vert _{2}\leq Y}\,,  
\end{align*}
and let $\mathcal{F}_{0},\mathcal{F}_{1},...,\mathcal{F}_{T}$ be the filtration of $X_{0},X_{1},...,X_{T}$,  then for any $\alpha\geq 0$,
\begin{align*}
\mathbb{E}\left[e^{\frac{\alpha Z_{k+1}}{Y}}\vert \mathcal{F}_{k}\right]= e^{\frac{\alpha Z_{k}}{Y}}\mathbb{E}_{X_{k+1}}\left[\mathbb{E}_{x\sim\mathcal{N}(0,1)}\left[\exp\left(\frac{\alpha xy_{k+1}\cdot 1_{\Vert X_{k+1}\Vert _{2}\leq Y}}{Y}\right)\right]\right]\leq e^{\frac{1}{2}\alpha^{2}}e^{\frac{\alpha Z_{k}}{Y}}\,,
\end{align*}
implying that $\mathbb{E}\left[e^{\frac{\alpha Z_{k+1}}{Y}}\right]\leq e^{\frac{1}{2}\alpha^{2}}\mathbb{E}\left[e^{\frac{\alpha Z_{k+1}}{Y}}\right] $
So we have: $\mathbb{E}\left[e^{\frac{\alpha Z_{T-1}}{Y}}\right]\leq e^{\frac{1}{2}\alpha^{2}T}$.
By choosing $\alpha = \pm \sqrt{\frac{1}{T}}$, we obtain that
\begin{align*}
\mathbb{P}\left[\vert Z_{T-1}\vert\geq 2Y\sqrt{T\log(4/\delta)}\right]\leq \delta
\end{align*}
For $\mathcal{T}_{d}$ be a $\frac{1}{4}$-net of $\mathcal{S}^{d-1}$ and $\mathcal{T}_{p}$ be a $\frac{1}{4}$-net of $\mathcal{S}^{p-1}$, we use union bound on them and obtain that, w.p. at least $1-\delta$
\begin{align*}
\vert Z_{T-1}\vert \leq 2Y\sqrt{T\log(4\vert \mathcal{T}_{p}\vert \vert \mathcal{T}_{d}\vert/\delta )}\leq 2Y\sqrt{T[4(d+p)+\log(4/\delta)]}\,.
\end{align*}
Where the last inequality is because $\vert \mathcal{T}_{p}\vert \leq 9^{p}$ and $\vert \mathcal{T}_{d}\vert \leq 9^{d}$

Next we upper bound $\left\Vert\sum_{k=0}^{T-1}u_{k}X^{\mathrm{T}}_{k}\right\Vert$. For any $u_{*}\in\mathcal{S}^{p-1}$ and $v_{*}\in\mathcal{S}^{p-1}$, with some $u\in\mathcal{T}_{p}$ and $v\in\mathcal{T}_{d}$ $s.t.$ $\Vert u-u_{*}\Vert _{2},\Vert v-v_{*}\Vert _{2}\leq \frac{1}{2}$, we have:
\begin{align*}
&\left\vert u_{*}^{\mathrm{T}}(\sum_{k=0}^{T-1}u_{k}X^{\mathrm{T}}_{k})v_{*}\right\vert\\
&\leq\left\vert u^{\mathrm{T}}(\sum_{k=0}^{T-1}u_{k}X^{\mathrm{T}}_{k})v\right\vert+\left\vert(u_{*}-u)^{\mathrm{T}}(\sum_{k=0}^{T-1}u_{k}X^{\mathrm{T}}_{k})v_{*}\right\vert+\left\vert u^{\mathrm{T}}(\sum_{k=0}^{T-1}u_{k}X^{\mathrm{T}}_{k})(v-v_{*})\right\vert\\
&\leq \sup_{u\in \mathcal{T}_{p},v\in\mathcal{T}_{d}}\left\vert u^{\mathrm{T}}(\sum_{k=0}^{T-1}u_{k}X^{\mathrm{T}}_{k})v\right\vert +\frac{1}{2}\left\Vert\sum_{k=0}^{T-1}u_{k}X^{\mathrm{T}}_{k}\right\Vert\,.
\end{align*}
This leads $\left\Vert\sum_{k=0}^{T-1}u_{k}X^{\mathrm{T}}_{k}\right\Vert\leq 2\sup_{u\in \mathcal{T}_{p},v\in\mathcal{T}_{d}}\left\vert u^{\mathrm{T}}(\sum_{k=0}^{T-1}u_{k}X^{\mathrm{T}}_{k})v\right\vert$. Therefore, for any $\delta\in(0,\frac{1}{2})$, we have:
\begin{align*}
&\mathbb{P}\left[\left\Vert\sum_{k=0}^{T-1}u_{k}X^{\mathrm{T}}_{k}\right\Vert_{2}\geq 4Y\sqrt{T[4(d+p)+\log(4/\delta)]}\right]\\
&\leq \mathbb{P}\left[\sup_{u\in\mathcal{T}_{d},v\in\mathcal{T}_{p}}\left\vert u^{\mathrm{T}}\left(\sum_{k=0}^{T-1}u_{k}X^{\mathrm{T}}_{k}1_{\Vert X_{k}\Vert_{2}\leq Y}\right)v\right\vert\geq 2Y\sqrt{T[4(d+p)+\log(4/\delta)]}\right]\\
&+\mathbb{P}\left[\exists\,0\leq k\leq T-1, \Vert X_{k}\Vert _{2}\geq Y\right]\\
&\leq 2\delta\,.
\end{align*}

We choose constant $C$ depending on $A,B,d,p$ such that for all $T\geq C\left(\Vert X_{0}\Vert ^{2}_{2}+\log^{2}(1/\delta)\right)$, 
\begin{align*}
4Y\sqrt{T[4(d+p)+\log(4/\delta)]}\leq T\,, 
\end{align*}
and we further have: whenever $T\geq C\left(\Vert X_{0}\Vert ^{2}_{2}+\log^{2}(1/\delta)\right)$, w.p. at least $1-3\delta$,
\begin{align*}
\left\Vert\sum_{k=0}^{T-1}u_{k}X^{\mathrm{T}}_{k}\right\Vert\Vert \hat{A}-A\Vert \leq C_{2}\sqrt{\log(1/\delta)T}\,.
\end{align*}

Finally, when $T\geq \max\left(C\left(\Vert X_{0}\Vert ^{2}_{2}+\log^{2}(1/\delta)\right),32(d+p)\log(4/\delta)\right)$, we combine this upper bound with $\mathbb{P}\left(\lambda_{\min}(\sum_{k=0}^{T-1}u_{k}u^{\mathrm{T}}_{k})\leq\frac{1}{2}T\right)\leq \delta$,
and obtain Lemma~\ref{lem:A,B}.

\subsection{Proof of Theorem~\ref{thm:main-unstable}}\label{subsection: multiple trajectories}
Now, we aim to establish Theorem~\ref{thm:main-unstable}. The analysis of system identification for discrete-time linear dynamical systems with multiple trajectories has been thoroughly investigated by \citep{ref5}. We hereby cite their findings, denoting the relevant result as Lemma~\ref{lem:multi-trajectory}.

\begin{lemma}\label{lem:multi-trajectory}
Suppose we have $N$ i.i.d. trajectories $X^{i}_{k}$, each is defined by $X^{i}_{(k+1)h}=AX^{i}_{k}+Bu^{i}_{k}+w^{i}_{k}$, where $T_{0}$ is any integer, $u^{i}_{k}\sim\mathcal{N}(0,I_{p})$ and $w^{i}_{k}\sim\mathcal{N}(0,\Sigma)$ are two sets of i.i.d. random variables. Then, for the estimator $(\hat{A},\hat{B})$ of 
\begin{align}
(\hat{A},\hat{B})\in \arg\min_{(A,B)}\frac{1}{2}\sum_{i=1}^{N}\left\Vert X^{i}_{T_{0}}-AX^{i}_{T_{0}-1}-Bu^{i}_{T_{0}-1}\right\Vert^{2}_{2}
\end{align}
with probability at least $1-\delta$, we have:
	\begin{align*}
	\Vert \hat{A}-A\Vert ,\Vert \hat{B}-B\Vert \leq \mathcal{O}\left(\sqrt{\frac{\log(1/\delta)}{N}}\right)\,.
	\end{align*}
\end{lemma}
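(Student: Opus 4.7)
The plan is to reduce this to a standard least-squares analysis with i.i.d.\ Gaussian-like samples. Since only the single triple $(X^i_{T_0-1}, u^i_{T_0-1}, X^i_{T_0})$ from each trajectory enters the loss, stacking across $i$ gives $N$ i.i.d.\ observations. Writing $\Theta := [A,B]$ and $Z^i := (X^i_{T_0-1};\, u^i_{T_0-1}) \in \mathbb{R}^{d+p}$, the model becomes $X^i_{T_0} = \Theta Z^i + w^i_{T_0-1}$ and the OLS minimizer satisfies
\begin{align*}
\hat\Theta - \Theta \;=\; \Bigl(\sum_{i=1}^N w^i_{T_0-1}(Z^i)^{\mathrm T}\Bigr)\Bigl(\sum_{i=1}^N Z^i(Z^i)^{\mathrm T}\Bigr)^{\dagger}.
\end{align*}
Thus the estimation error reduces to a spectral-norm upper bound on the noise--design cross term, divided by the smallest eigenvalue of the design Gram matrix.

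For the design Gram, note that $u^i_{T_0-1}$ is independent of the full history up to time $T_0-1$ and hence of $X^i_{T_0-1}$, so $Z^i$ is jointly Gaussian with block-diagonal covariance $\Sigma_Z = \mathrm{diag}(\Sigma_{XX}, I_p)$. Unrolling the one-step recursion shows that $\Sigma_{XX}$ contains the contribution $\Sigma \succ 0$ from the most recent state noise, so $\Sigma_Z$ is strictly positive definite regardless of controllability or of the value of $T_0$. A Gaussian covariance concentration argument (matrix Bernstein or a Wishart tail) then gives $\lambda_{\min}\!\bigl(\sum_i Z^i(Z^i)^{\mathrm T}\bigr) \geq \tfrac{N}{2}\lambda_{\min}(\Sigma_Z)$ with probability at least $1-\delta/2$ whenever $N \gtrsim d+p+\log(1/\delta)$.

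For the cross term, $w^i_{T_0-1}$ is independent of $Z^i$ because the disturbance at time $T_0-1$ does not enter $X^i_{T_0-1}$ and is independent of $u^i_{T_0-1}$. Conditional on $\{Z^i\}$, for any fixed unit vectors $u\in\mathcal S^{d-1}$ and $v\in\mathcal S^{d+p-1}$, the scalar $u^{\mathrm T}\bigl(\sum_i w^i_{T_0-1}(Z^i)^{\mathrm T}\bigr)v$ is a zero-mean Gaussian with variance $(u^{\mathrm T}\Sigma u)\sum_i \bigl((Z^i)^{\mathrm T} v\bigr)^2 = O(N)$ on the high-probability event from the previous step. A standard $\epsilon$-net argument over $\mathcal S^{d-1}\times \mathcal S^{d+p-1}$ combined with the scalar Gaussian tail yields $\bigl\Vert\sum_i w^i_{T_0-1}(Z^i)^{\mathrm T}\bigr\Vert \lesssim \sqrt{N\bigl(d+p+\log(1/\delta)\bigr)}$ with probability at least $1-\delta/2$. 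Multiplying the two bounds produces $\Vert\hat\Theta-\Theta\Vert = \mathcal O\bigl(\sqrt{\log(1/\delta)/N}\bigr)$, which dominates both $\Vert\hat A-A\Vert$ and $\Vert\hat B-B\Vert$ because these are submatrices of $\hat\Theta-\Theta$.

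The main obstacle is the design lower bound in the second paragraph: because $T_0$ is a small constant (at most $10$ under Assumption~\ref{assumption:multiple}), one cannot invoke long-horizon excitation or BMSB-style arguments to condition the design, and $A$ may even be unstable so that $X^i_{T_0-1}$ could have very anisotropic covariance. The clean resolution is to exploit the fact that the $B$-block of $\Sigma_Z$ is exactly $I_p$ from the injected action noise at the \emph{final} step, and that $\Sigma$ itself lower bounds the $A$-block through the noise injected at step $T_0-2$, so $\lambda_{\min}(\Sigma_Z)\geq \min\{1,\lambda_{\min}(\Sigma)\}$ holds unconditionally on $(A,B)$. With this observation the argument becomes purely a Gaussian least-squares calculation, matching the bound stated in~\cite{ref5}.
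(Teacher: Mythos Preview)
The paper does not actually prove this lemma; it simply quotes the result from~\cite{ref5} and uses it as a black box. Your proposal is correct and essentially reconstructs the standard argument from that reference: the OLS error decomposition for $\hat\Theta-\Theta$, the block-diagonal Gaussian covariance of $Z^i=(X^i_{T_0-1};u^i_{T_0-1})$ with the unconditional lower bound $\lambda_{\min}(\Sigma_Z)\geq\min\{1,\lambda_{\min}(\Sigma)\}$, Wishart-type concentration for the design Gram, and an $\epsilon$-net bound on the independent noise--design cross term.
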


Combining Lemma~\ref{lem:multi-trajectory} with Lemma~\ref{lem:error-transform}, we directly obtain Theorem~\ref{thm:main-unstable}.

\subsection{Lower Bound of System Identification with Finite Observation}
\label{sec:lower bound}
We restate and provide the proof of Theorem~\ref{thm:lower bound}.

\paragraph{Theorem~\ref{thm:lower bound}}
Suppose $T\geq 1$ be the running time of a single trajectory of continuous-time linear differential system, represented as in~\eqref{equ:differential}. Then there exist constants $c_{1},c_{2}$ independent of $d$ such that, for any finite set of observed points $\{t_{0}=0,t_{1},t_{2},...,t_{N}=T\}$, and any (possibly randomized) estimator function $\phi : \{X_{t_{0}},X_{t_{1}},...,X_{t_{N}}\}\to \mathbb{R}^{d\times d}$, there exists bounded $A,B$ satisfying $\mathbb{P}\left[\Vert \phi(\{X_i\}_{i \le N}) - A\Vert\geq \frac{c_{1}}{\sqrt{T}}\right]\geq c_{2}$. Here the probability corresponds to the dynamical system dominated by $(A,B)$.

\begin{proof}
Firstly, we consider a special case where $d=1$, and let $A=[-1]$ and $\bar{A}=[-1-\delta]$. We show that when $\delta= \frac{1}{5\sqrt{T}}$, for the two dynamical systems $\psi_{\theta}:dX_{t}=AX_{t}dt+dW_{t}$ and $\psi_{\bar{\theta}}:dX_{t}=\bar{A}X_{t}dt+dW_{t}$, any algorithm $\mathcal{A}$ that outputs according only to $\{X_{t_{0}},X_{t_{1}},...,X_{t_{N}}\}
$ satisfies:

\begin{align*}
&\max\left\{\mathbb{P}\left[\|\mathcal{A}(X_{t_{0}},X_{t_{1}},...,X_{t_{N}})-A\|\geq\frac{1}{10\sqrt{T}}     \right], \mathbb{P}\left[\|\mathcal{A}(X_{t_{0}},X_{t_{1}},...,X_{t_{N}})-\bar{A}\|\geq\frac{1}{10\sqrt{T}}     \right]   \right\}\\&\geq \frac{1}{4e^{3}}\,.
\end{align*}

We note that this special case can be easily generalized to any dimension $d$, since we can consider $A=-I_{d}$ and $\bar{A}$ satisfies $\bar{A}_{1,1}=A_{1,1}-\delta$, and for any $(i,j)\neq (1,1)$, $\bar{A}_{i,j}=A_{i,j}$. In this case the last $d-1$ dimension is independent of the first dimension, so it is essentially the same as the simplest one-dimensional case.

Denote $X=\{X_{t_{0}},X_{t_{1}},...,X_{t_{N}}\}$ and $g(X)$, $\bar{g}(X)$ be the probability density of $\psi_{\theta}$ and $\psi_{\bar{\theta}}$, respectively. For these two probability densities we have:

\begin{align*}
g(X)=\prod_{i=1}^{N}\frac{1}{\sqrt{2\pi\Gamma(t_{i}-t_{i-1})}}exp\left(-\frac{1}{2\Gamma(t_{i}-t_{i-1})}(X_{t_{i}}-e^{-(t_{i}-t_{i-1})}X_{t_{i-1}})^{2}\right)\,,
\end{align*}
and 
\begin{align*}
\bar{g}(X)=\prod_{i=1}^{N}\frac{1}{\sqrt{2\pi\bar{\Gamma}(t_{i}-t_{i-1})}}exp\left(-\frac{1}{2\bar{\Gamma}(t_{i}-t_{i-1})}(X_{t_{i}}-e^{-(1+\delta)(t_{i}-t_{i-1})}X_{t_{i-1}})^{2}\right)\,.
\end{align*}
Where 
\begin{align*}
\Gamma(t)=\int_{s=0}^{t}e^{-2s}ds=\frac{1}{2}(1-e^{-2t})\quad \bar{\Gamma}(t)=\int_{s=0}^{t}e^{(-2-2\delta)s}ds=\frac{1}{2+2\delta}(1-e^{-(2+2\delta)t})\,.
\end{align*}

Denote $\alpha_{i}=\sqrt{\frac{1}{\Gamma(t_{i}-t_{i-1})}}(X_{t_{i}}-e^{-(t_{i}-t_{i-1})}X_{t_{i-1}})$, \\$\beta_{i}=\sqrt{\frac{1}{\Gamma(t_{i}-t_{i-1})}}(e^{-(t_{i}-t_{i-1})}-e^{-(1+\delta)(t_{i}-t_{i-1})})X_{t_{i-1}}$ and $\gamma_{i}=\sqrt{\frac{\Gamma(t_{i}-t_{i-1})}{\bar{\Gamma}(t_{i}-t_{i-1})}}$. Then

\begin{align*}
\ln\left(\frac{g(X)}{\bar{g}(X)}\right)=\sum_{i=1}^{N}-\ln(\gamma_{i})+\frac{1}{2}\gamma^{2}_{i}(\alpha_{i}+\beta_{i})^{2}-\frac{1}{2}\alpha_{i}^{2}\,.
\end{align*}

Next we show that $\left\vert \ln\left(\frac{g(X)}{\bar{g}(X)}\right)\right\vert$ is not large with high probability when $X$ follows the probability density of $g$. Consider the following subsets of $X$: $\mathcal{E}_{1}=\left\{X\big\vert \left\vert\sum_{i=1}^{N}-\ln(\gamma_{i})+\frac{1}{2}(\gamma^{2}_{i}-1)\alpha_{i}^{2}  \right\vert\leq 1   \right\}$.

$\mathcal{E}_{2}=\left\{ X\big\vert \vert \sum_{i=1}^{N} \gamma_{i}^{2}\alpha_{i}\beta_{i}\vert\leq 1 \right\}$
and 
$\mathcal{E}_{3}=\left\{X\big\vert \frac{1}{2}\sum_{i=1}^{N}\gamma^{2}_{i}\beta^{2}_{i}\leq 1 \right\}$.
When $X$ lies in the intersection of these three sets, $\left\vert \ln\left(\frac{g(X)}{\bar{g}(X)}\right)\right\vert$ is guaranteed to be not very large.

Let $\mathbb{P}$ be the probability with respect to density $g$. We will explicitly show that $\mathbb{P}[X\in\mathcal{E}_{k}]\geq \frac{5}{6}(k=1,2,3)$.

\paragraph{Lower bound $\mathbb{P}[X\in\mathcal{E}_{1}]$}

Firstly, we estimate $\sum_{i=1}^{N}\frac{1}{2}(\gamma^{2}_{i}-1)-\ln(\gamma_{i})$. We first prove the following inequality:

\begin{align}
0\leq \gamma^{2}_{i}-1\leq 2\delta\min\{1,t_{i}-t_{i-1}\}\,.
\end{align}

Let $t=t_{i}-t_{i-1}$. Then $\gamma^{2}_{i}=(1+\delta)\frac{1-e^{-2t}}{1-e^{-(2+2\delta)t}}$.

The left hand side of this inequality is because $\Gamma_{t}\geq \bar{\Gamma}_{t}$, due to the reason that $e^{-2s}\geq e^{-(2+2\delta)s}$ for all $s\geq 0$ and when $f(x)\geq g(x)$ for any $x\in I$ we have: $\int_{x\in I}f(x)dx\geq \int_{x\in I}g(x)dx$. Now we consider the right hand side of the inequality.

\textbf{Case 1:} When $t\geq 1$, we directly use the fact that $1-e^{-2t}\leq 1-e^{-(2+2\delta)t}$ and obtain $\gamma_{i}\leq 1+\delta$.

\textbf{Case 2:} When $t\in(0,1]$, it suffices to show that
\begin{align*}
(1+\delta)(1-e^{-2t})\leq (1+2\delta t)(1-e^{-(2+2\delta)t})\,.
\end{align*}

Let $h(t)=(1+\delta)(1-e^{-2t})-(1+2\delta t)(1-e^{-(2+2\delta)t})$, then
\begin{align*}
h(t)&=\delta(1-2t)-e^{-2t}[1+\delta -(1+2\delta t)e^{-2\delta t}]\\
&\leq \delta(1-2t-e^{-2t})\\
&\leq 0\,.
\end{align*}
Where for the first inequality we use the relation that $e^{-2\delta t}\leq \frac{1}{1+2\delta t}$. The second inequality is obtained by the relation that $e^{-2t}\geq 1-2t$.

Now we bound $\frac{1}{2}(\gamma^{2}_{i}-1)-\ln(\gamma_{i})$. We first show that

\begin{align*}
0\leq \frac{1}{2}(\gamma^{2}_{i}-1)-\ln(\gamma_{i})\leq \frac{1}{4}(\gamma^{2}_{i}-1)^{2}\,.
\end{align*}

Let $x=\gamma^{2}_{i}-1$ and we obtain $ \frac{1}{2}(\gamma^{2}_{i}-1)-\ln(\gamma_{i})=\frac{1}{2}[x-\ln(1+x)]$, and the inequality is obtained directly since we have $x\geq \ln(1+x)\geq x-\frac{1}{2}x^{2}(x\geq 0)$.

Then we can bound $\sum_{i=1}^{N}\frac{1}{2}(\gamma^{2}_{i}-1)-\ln(\gamma_{i})$ as

\begin{align*}
0\leq \sum_{i=1}^{N}\frac{1}{2}(\gamma^{2}_{i}-1)-\ln(\gamma_{i})&\leq \sum_{i=1}^{N}\frac{1}{4}(\gamma^{2}_{i}-1)^{2}\\
&\leq \sum_{i=1}^{N}\delta^{2}\min(1,(t_{i}-t_{i-1}))^{2}\\
&\leq \sum_{i=1}^{N}\delta^{2}(t_{i}-t_{i-1})\\
&\leq \delta^{2}T\\&\leq \frac{1}{25}\,.
\end{align*}

Now we bound $\sum_{i=1}^{N}\frac{1}{2}(\gamma^{2}_{i}-1)(\alpha^{2}_{i}-1)$. Notice that this variable has zero mean, so we can bound its variance and then apply Markov inequality to obtain a high probability bound.

At first, consider the variance of $\alpha_{i}^{2}-1$, denoted as $Var(\alpha_{i}^{2}-1)$. By noticing that $\alpha_{i}\sim\mathcal{N}(0,1)$, we can directly calculate that

\begin{align*}
Var(\alpha_{i}^{2}-1)&=\int_{x\in\mathbb{R}}\frac{1}{\sqrt{2\pi}}e^{-\frac{1}{2}x^{2}}(x^{2}-1)^{2}dx=2\,.
\end{align*}

Since all the $\alpha_{i}$'s are independent, we have:

\begin{align*}
Var\left(\sum_{i=1}^{N}\frac{1}{2}(\gamma^{2}_{i}-1)(\alpha^{2}_{i}-1)\right)
&=\sum_{i=1}^{N}\frac{1}{4}(\gamma^{2}_{i}-1)^{2}Var(\alpha^{2}_{i}-1)\\
&\leq \frac{1}{2}\sum_{i=1}^{N}(\gamma^{2}_{i}-1)^{2}\\
&\leq 2\delta^{2}\sum_{i=1}^{N}\min(1,t_{i}-t_{i-1})^{2}\\
&\leq 2\delta^{2}T\\
&\leq \frac{2}{25}\,.
\end{align*}

By Markov inequality, we have:

\begin{align*}
\mathbb{P}\left[\left\vert\sum_{i=1}^{N}\frac{1}{2}(\gamma^{2}_{i}-1)(\alpha^{2}_{i}-1) \right\vert \geq \frac{4}{5}\right]\leq Var\left(\sum_{i=1}^{N}\frac{1}{2}(\gamma^{2}_{i}-1)(\alpha^{2}_{i}-1)\right)/\left(\frac{4}{5}\right)^{2}\leq \frac{1}{8}\,.
\end{align*}

Finally, for the subset $\mathcal{E}_{1}=\left\{X\big\vert \left\vert\sum_{i=1}^{N}-\ln(\gamma_{i})+\frac{1}{2}(\gamma^{2}_{i}-1)\alpha_{i}^{2}  \right\vert\leq 1   \right\}$, we have:

\begin{align*}
\mathbb{P}\left[x\in\mathcal{E}_{1}\right]\geq 1-\mathbb{P}\left[\left\vert\sum_{i=1}^{N}\frac{1}{2}(\gamma^{2}_{i}-1)(\alpha^{2}_{i}-1) \right\vert \geq \frac{4}{5}\right]\geq \frac{7}{8}\,.
\end{align*}

\paragraph{Lower bound $\mathbb{P}[X\in\mathcal{E}_{2}]$}

Since all the $\alpha_{i}$'s are independent, and $\alpha_{i}$ is independent of $\{\beta_{1},...,\beta_{i}\}$ and $\{\gamma_{1},...,\gamma_{N}\}$, we obtain that 

\begin{align*}
\mathbb{E}\left[\left(\sum_{i=1}^{N} \gamma_{i}^{2}\alpha_{i}\beta_{i}\right)^{2}\right]
&=\mathbb{E}\left[\sum_{i=1}^{N}(\gamma_{i}^{2}\alpha_{i}\beta_{i})^{2}\right]\\
&=\mathbb{E}\left[\sum_{i=1}^{N}(\gamma_{i}^{2}\beta_{i})^{2}\right]\\
&=\sum_{i=1}^{N}\mathbb{E}\left[(\gamma_{i}^{2}\beta_{i})^{2}\right]\,.
\end{align*}

We have shown that $\gamma_{i}^{2}\leq 1+2\delta$. Then for $T\geq 1$ we have: $\gamma^{4}_{i}\leq (1+\frac{2}{5})^{2}\leq 2$. Therefore, we obtain:

\begin{align*}
\mathbb{E}\left[\left(\sum_{i=1}^{N} \gamma_{i}^{2}\alpha_{i}\beta_{i}\right)^{2}\right]\leq 2\sum_{i=1}^{N}\mathbb{E}\left[\beta^{2}_{i}\right]\,.
\end{align*}

Now we upper bound $\mathbb{E}\left[\beta_{i}^{2}\right]$, where $\beta_{i}=\sqrt{\frac{1}{\Gamma(t_{i}-t_{i-1})}}(e^{-(t_{i}-t_{i-1})}-e^{-(1+\delta)(t_{i}-t_{i-1})})X_{t_{i-1}}$

Firstly, we show that
\begin{align}
 \sqrt{\frac{1}{\Gamma(t_{i}-t_{i-1})}}(e^{-(t_{i}-t_{i-1})}-e^{-(1+\delta)(t_{i}-t_{i-1})})\leq \delta\sqrt{t_{i}-t_{i-1}}\,.
\end{align}

Again denote $t=t_{i}-t_{i-1}$. By using $\Gamma_{t}=\frac{1}{2}(1-e^{-2t})$, it suffices to show that

\begin{align*}
e^{-t}-e^{-(1+\delta)t}\leq \delta \sqrt{\frac{1}{2}t(1-e^{-2t})}\,.
\end{align*}

By multiplying $e^{t}$ on both sides, the inequality is equivalent to 
\begin{align*}
1-e^{-\delta t}\leq \delta \sqrt{\frac{1}{2}t(e^{2t}-1)}\,.
\end{align*}

This is true since $e^{-\delta t}\geq 1-\delta t$, and $e^{2t}\geq 1+2t$, implying that 

\begin{align*}
1-e^{-\delta t}\leq\delta t\leq \delta \sqrt{\frac{1}{2}t(e^{2t}-1)}\,.
\end{align*}

With this result, we can upper bound $2\sum_{i=1}^{N}\mathbb{E}\left[\beta^{2}_{i}\right]$ by

\begin{align*}
2\sum_{i=1}^{N}\mathbb{E}\left[\beta^{2}_{i}\right]\leq \sum_{i=1}^{N}2\delta^{2}(t_{i}-t_{i-1})\mathbb{E}\left[X^{2}_{t_{i-1}}\right]\,.
\end{align*}

Finally, since $X_{t}\sim\mathcal{N}(0,\Gamma(t))$, for all $t\geq 0$,
\begin{align*}
\mathbb{E}\left[X^{2}_{t}\right]=\Gamma_{t}=\frac{1}{2}(1-e^{-2t})\leq 1\,.
\end{align*}

Therefore, we obtain
\begin{align*}
\mathbb{E}\left[\left(\sum_{i=1}^{N} \gamma_{i}^{2}\alpha_{i}\beta_{i}\right)^{2}\right]\leq 2\sum_{i=1}^{N}\mathbb{E}\left[\beta^{2}_{i}\right]\leq \sum_{i=1}^{N}2\delta^{2}(t_{i}-t_{i-1})\mathbb{E}\left[X^{2}_{t_{i-1}}\right]\leq \sum_{i=1}^{N}2\delta^{2}(t_{i}-t_{i-1})=2T\delta^{2}=\frac{2}{25}
\end{align*}

Again by using Markov inequality, we obtain:

\begin{align*}
\mathbb{P}\left[\vert \sum_{i=1}^{N} \gamma_{i}^{2}\alpha_{i}\beta_{i}\vert > 1\right]\leq \frac{2}{25}\,.
\end{align*}
Which follows that 
\begin{align*}
\mathbb{P}\left[X\in\mathcal{E}_{2}\right]=1-\mathbb{P}\left[\vert \sum_{i=1}^{N} \gamma_{i}^{2}\alpha_{i}\beta_{i}\vert \geq 1\right]\geq \frac{23}{25}\,.
\end{align*}

\paragraph{Lower bound $\mathbb{P}[X\in\mathcal{E}_{3}]$}

We have shown that $\gamma^{2}_{i}\leq 2,\forall i$ and $\sum_{i=1}^{N}\mathbb{E}\left[\beta^{2}_{i}\right]\leq \delta^{2}T$. Therefore,

\begin{align*}
\mathbb{E}\left[\frac{1}{2}\sum_{i=1}^{N}\gamma^{2}_{i}\beta^{2}_{i}\right]\leq \delta^{2}T\leq \frac{2}{25}\,.
\end{align*}

And we also have 

\begin{align*}
\mathbb{P}\left[X\in\mathcal{E}_{3}\right]=1-\mathbb{P}\left[\frac{1}{2}\sum_{i=1}^{N}\gamma^{2}_{i}\beta^{2}_{i}>1\right]\geq \frac{23}{25}\,.
\end{align*}

Now we come back to prove the theorem. With lower bounds of $\mathbb{P}[X\in\mathcal{E}_{1}],\mathbb{P}[X\in\mathcal{E}_{2}],\mathbb{P}[X\in\mathcal{E}_{3}]$, we have

\begin{align*}
\mathbb{P}\left[X\in \mathcal{E}_{1}\cap \mathcal{E}_{2}\cap \mathcal{E}_{3}\right]\geq 1-(1-\mathbb{P}[X\in\mathcal{E}_{1}])-(1-\mathbb{P}[X\in\mathcal{E}_{2}])-(1-\mathbb{P}[X\in\mathcal{E}_{3}])\geq \frac{1}{2}\,.
\end{align*}

With this bound, we have:

\begin{align*}
&\mathbb{E}_{X\sim g}\left[1\left(\vert \phi(X)-A\vert\geq\frac{1}{10\sqrt{T}}\right)\right]+\mathbb{E}_{X\sim \bar{g}}\left[1\left(\vert \phi(X)-\bar{A}\vert\geq\frac{1}{10\sqrt{T}}\right)\right]\\
&\geq \int_{X\in\mathcal{E}_{1}\cap \mathcal{E}_{2}\cap \mathcal{E}_{3}} g(X)\mathbb{E}\left[1\left(\Vert \phi(X)-A\Vert\geq\frac{1}{10\sqrt{T}}\right)\big\vert X\right]+\bar{g}(X)\mathbb{E}\left[1\left(\Vert \phi(X)-\bar{A}\Vert\geq\frac{1}{10\sqrt{T}}\right)\big\vert X\right] dX\\
&\geq \int_{X\in\mathcal{E}_{1}\cap \mathcal{E}_{2}\cap \mathcal{E}_{3}}\min\{g(X),\bar{g}(X)\}dX\\
&\geq \int_{X\in\mathcal{E}_{1}\cap \mathcal{E}_{2}\cap \mathcal{E}_{3}}\frac{1}{e^{3}}g(X)dX\\
&\geq\frac{1}{2e^{3}}\,.
\end{align*}

Where the second inequality is because $\Vert \phi(X)-A\Vert+\Vert \phi(X)-\bar{A}\Vert\geq \Vert A-\bar{A}\Vert =\frac{1}{5\sqrt{T}}$ so we cannot have both $\Vert \phi(X)-A\Vert\leq \frac{1}{10\sqrt{T}}$ and $\Vert \phi(X)-\bar{A}\Vert\leq\frac{1}{10\sqrt{T}}$. The third inequality is because for any $X\in \mathcal{E}_{1}\cap\mathcal{E}_{2}\cap\mathcal{E}_{3}$, we have

\begin{align*}
\left\vert \ln\frac{g(X)}{\bar{g}(X)}\right\vert&=\left\vert\sum_{i=1}^{N}-\ln(\gamma_{i})+\frac{1}{2}\gamma^{2}_{i}(\alpha_{i}+\beta_{i})^{2}-\frac{1}{2}\alpha_{i}^{2}\right\vert\\
&\leq \left\vert \sum_{i=1}^{N}-\ln(\gamma_{i})+\frac{1}{2}(\gamma^{2}_{i}-1)\alpha_{i}^{2}\right\vert\\
&+\left\vert \sum_{i=1}^{N} \gamma_{i}^{2}\alpha_{i}\beta_{i}\right\vert\\
&+\frac{1}{2}\sum_{i=1}^{N}\gamma^{2}_{i}\beta^{2}_{i}\\
&\leq 3\,,
\end{align*}
implying that $\bar{g}(X)\geq\frac{1}{e^{3}}g(X)$.

Therefore, we have:

\begin{align*}
\max\left\{\mathbb{P}_{X\sim g}\left[\vert \phi(X)-A\vert\geq\frac{1}{10\sqrt{T}}\right],\mathbb{P}_{X\sim \bar{g}}\left[\vert \phi(X)-\bar{A}\vert\geq \frac{1}{10\sqrt{T}}\right]\right\}\geq \frac{1}{4e^{3}}\,.
\end{align*}

This means that for any algorithm, it cannot achieve $\frac{1}{10\sqrt{T}}$ estimation error with success probability $1-\frac{1}{4e^{3}}$ for at least one of the systems controlled by $(A,0)$ and $(\bar{A},0)$.

\end{proof}

\section{Regret Analysis}
\label{app:B}
Having demonstrated the results of system identification for continuous-time linear systems, we leverage these findings to establish upper bounds on the regret for Algorithm~\ref{alg:3}. Elaborations on the details will be presented in the subsequent sections.

\subsection{Convergence of $P$ and the Estimation Error of $K$}
In this section we provide the following Lemma~\ref{lem:convergence of P}, along with its proof, which shows that $\Vert P-P_{*}\Vert$ converges at the same speed as $\Vert \hat{A}-A\Vert+\Vert \hat{B}-B\Vert$.

\begin{lemma}\label{lem:convergence of P}
	There exist constants $\epsilon_{0}>0$ and $C_{2}>0$ such that as long as $\|\hat{A}-A\|,\|\hat{B}-B\|\leq \epsilon$ for some $0<\epsilon<\epsilon_{0}$, with $P$ obtained from \eqref{eq:stationary} we have:
	\begin{align}
	\|P-P_{*}\|\leq C_{2}\epsilon\,.
	\end{align}	
\end{lemma}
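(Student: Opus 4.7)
The plan is to view $P$ as the stabilizing solution of the algebraic Riccati equation~\eqref{eq:stationary} and carry out a perturbation analysis around $(A,B,P_{*})$ by inverting the Lyapunov operator associated with $A+BK_{*}$. Let $\mathcal{R}(A',B',P):=PB'R^{-1}{B'}^{\mathrm{T}}P-{A'}^{\mathrm{T}}P-PA'-Q$, so that $\mathcal{R}(A,B,P_{*})=0$, and write the target $P$ as $P_{*}+\Delta$ with $\Delta$ symmetric. Subtracting $\mathcal{R}(A,B,P_{*})=0$ from $\mathcal{R}(\hat{A},\hat{B},P_{*}+\Delta)=0$ and using $K_{*}=-R^{-1}B^{\mathrm{T}}P_{*}$ to collect the terms linear in $\Delta,\hat{A}-A,\hat{B}-B$ on the left, I would obtain
\begin{align*}
\Delta(A+BK_{*})+(A+BK_{*})^{\mathrm{T}}\Delta
&=P_{*}(\hat{B}-B)K_{*}+K_{*}^{\mathrm{T}}(\hat{B}-B)^{\mathrm{T}}P_{*}\\
&\quad-(\hat{A}-A)^{\mathrm{T}}P_{*}-P_{*}(\hat{A}-A)+\mathcal{H}(\Delta,\hat{A}-A,\hat{B}-B),
\end{align*}
where $\mathcal{H}$ lumps the remaining cross terms from expanding $P\hat{B}R^{-1}\hat{B}^{\mathrm{T}}P$; each term in $\mathcal{H}$ is at least bilinear in its arguments, so $\|\mathcal{H}\|\le C(\epsilon+\|\Delta\|)(\|\Delta\|+\epsilon)$ for a constant $C$ depending only on fixed system quantities such as $\|P_{*}\|,\|B\|,\|R^{-1}\|$.

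Next I would analyse the Lyapunov operator $\mathcal{L}:X\mapsto X(A+BK_{*})+(A+BK_{*})^{\mathrm{T}}X$ on the space of symmetric matrices. Since Assumption~\ref{assumption:alg} ensures $\alpha(A+BK_{*})<0$, $\mathcal{L}$ is invertible with $\mathcal{L}^{-1}[Y]=-\int_{0}^{\infty}e^{(A+BK_{*})^{\mathrm{T}}t}Ye^{(A+BK_{*})t}dt$, and the bound $\|\mathcal{L}^{-1}\|\le\kappa_{\mathcal{L}}$, polynomial in $|\alpha(A+BK_{*})|^{-1}$ and $\|A+BK_{*}\|$, follows from Lemma~\ref{lem:matrix}. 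Let $F(\Delta):=\mathcal{L}^{-1}$ applied to the right-hand side above, restricted to the ball $\{\Delta=\Delta^{\mathrm{T}}:\|\Delta\|\le r\}$. The linear-in-perturbation contribution to $F(\Delta)$ is bounded by $C_{1}\kappa_{\mathcal{L}}\epsilon$, while the $\mathcal{H}$ contribution is bounded by $C_{1}\kappa_{\mathcal{L}}(\epsilon+r)(r+\epsilon)$. Choosing $r=2C_{1}\kappa_{\mathcal{L}}\epsilon$ and $\epsilon_{0}$ small enough that $C_{1}\kappa_{\mathcal{L}}(\epsilon_{0}+r)\le 1/2$ makes $F$ both a self-map and a contraction on this ball; its unique fixed point is $P-P_{*}$, yielding $\|P-P_{*}\|\le 2C_{1}\kappa_{\mathcal{L}}\epsilon=:C_{2}\epsilon$.

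The main obstacle is picking out the correct branch of the Riccati equation, since \eqref{eq:stationary} has multiple solutions in general and the algorithm implicitly uses the stabilizing one. Because the fixed point $\Delta$ has norm $O(\epsilon)$, the closed-loop matrix $\hat{A}+\hat{B}\bar{K}$ with $\bar{K}=-R^{-1}\hat{B}^{\mathrm{T}}P$ differs from $A+BK_{*}$ by $O(\epsilon)$, and the spectral perturbation bound~\eqref{eq:e^(A+E)t} of Lemma~\ref{lem:matrix} preserves stability for $\epsilon_{0}$ sufficiently small; uniqueness of the stabilizing Riccati solution then identifies our $P$ with the one the algorithm uses. A secondary subtlety is making the $\|\mathcal{H}\|$ estimate uniform in the ball, which is routine but must be tracked so that $\epsilon_{0},C_{2}\in\mathrm{poly}(\kappa,M,\mu^{-1},|\alpha(A+BK_{*})|^{-1})$ as required by the statement.
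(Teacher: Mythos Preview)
Your approach is correct and reaches the same conclusion, but it differs from the paper's route. The paper does not set up a single Banach fixed-point map around the unperturbed closed loop $A+BK_{*}$; instead it runs a Newton--Kleinman style iteration: starting from $P_{0}=P_{*}$, at each step it solves a Lyapunov equation with the \emph{current} closed loop $A_{k}=\hat{A}-\hat{B}R^{-1}\hat{B}^{\mathrm{T}}P_{k}$, produces $\Delta P_{k}$ via the integral formula of Lemma~\ref{lem:solution}, shows $\|\Delta P_{k}\|\le C_{4}\|\Delta P_{k-1}\|^{2}$, and sums the geometrically decaying increments to get $\|P-P_{*}\|\le 2C_{2}\epsilon$. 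The stabilizing-branch issue is handled along the way by maintaining the stability margin~\eqref{stable margin} for every $A_{k}$.

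The two arguments share the same core ingredient---inverting the Lyapunov operator attached to a stable closed loop---but package the nonlinearity differently. Your contraction-mapping version is shorter and avoids tracking an infinite sequence; the paper's iterative version is more constructive and yields quadratic convergence of the Newton iterates, which is not needed for the lemma but mirrors how one would actually compute $P$. Either way the constants land in the required polynomial class, and your handling of the branch selection via spectral perturbation is equivalent to the paper's use of~\eqref{stable margin}.
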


Recall that the optimal dynamic is $K_{*}=-R^{-1}B^{\mathrm{T}}P_{*}$ with $P_{*}$ obtained from equation~\eqref{eq:stationary}. Now we consider the distance between it and the sub-optimal dynamic $\bar{K}=-R^{-1}B^{\mathrm{T}}P$ with $P$ obtained from~\eqref{eq:stationary} with $(\hat{A},\hat{B})$. Denote $\Delta A=\hat{A}-A$ and $\Delta B=\hat{B}-B$, along with $\Vert \Delta A\Vert ,\Vert \Delta B\Vert \leq \epsilon$ where $\epsilon\in[0,\epsilon_{0}]$ with some $\epsilon_{0}$ determined later.
We establish the proof by constructing a sequence of matrices $(P_{k})_{k\geq 0}$, and we will prove that such sequence converges to the unique symmetric solution $P$ satisfying 
\begin{align*}
P\hat{B}R^{-1}\hat{B}^{\mathrm{T}}P-\hat{A}^{\mathrm{T}}P-P\hat{A}-Q=0\,. 
\end{align*}

At first we introduce a solution of a particular kind of matrix equation~\citep{kleinman1968iterative}.

\begin{lemma}\label{lem:solution}
	Suppose $A$ satisfies $\alpha(A)=\max\{\Re(\lambda_{i})\vert \lambda_{i}\in \lambda(A)\}<0$. $Q$ is a symmetric matrix. Consider such a function
	
	\begin{align}\label{eq:equation}
	A^{\mathrm{T}}X+XA+Q=0\,.
	\end{align}
	
	Then, the unique symmetric solution $X$ of this equation can be expressed as:
	
	\begin{align}\label{eq:solution}
	X=\int_{t\geq0}e^{A^{\mathrm{T}}t}Qe^{At}dt\,.
	\end{align}
\end{lemma}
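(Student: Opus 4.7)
The plan is to verify the stated closed-form solution directly, then establish uniqueness via a matrix-exponential argument. First, because $\alpha(A)<0$, Lemma~\ref{lem:matrix} gives an exponentially decaying bound on $\|e^{At}\|$, so the integrand $e^{A^{\mathrm{T}}t}Qe^{At}$ is bounded in norm by a constant times $\|Q\|\,e^{2\alpha(A)tM_s(t)}$, which ensures the improper integral in \eqref{eq:solution} converges absolutely and defines a well-defined symmetric matrix $X$ (symmetry is immediate from $Q=Q^{\mathrm{T}}$).

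Next I would plug the candidate into the equation. The key computation is to notice that
\begin{align*}
\frac{d}{dt}\bigl[e^{A^{\mathrm{T}}t}Qe^{At}\bigr]=A^{\mathrm{T}}e^{A^{\mathrm{T}}t}Qe^{At}+e^{A^{\mathrm{T}}t}Qe^{At}A.
\end{align*}
Integrating from $0$ to $\infty$ and using the fact that $\|e^{At}\|\to 0$ (again by Lemma~\ref{lem:matrix}) gives
\begin{align*}
A^{\mathrm{T}}X+XA=\int_{0}^{\infty}\frac{d}{dt}\bigl[e^{A^{\mathrm{T}}t}Qe^{At}\bigr]dt=\lim_{t\to\infty}e^{A^{\mathrm{T}}t}Qe^{At}-Q=-Q,
\end{align*}
so $X$ solves \eqref{eq:equation}.

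For uniqueness, suppose $Y$ is another symmetric solution and set $Z=X-Y$. Then $A^{\mathrm{T}}Z+ZA=0$, and the same derivative identity yields $\tfrac{d}{dt}[e^{A^{\mathrm{T}}t}Ze^{At}]=e^{A^{\mathrm{T}}t}(A^{\mathrm{T}}Z+ZA)e^{At}=0$. Therefore $e^{A^{\mathrm{T}}t}Ze^{At}$ is constant in $t$; evaluating at $t=0$ gives $Z$, while letting $t\to\infty$ and using $\|e^{At}\|\to 0$ gives $0$. Hence $Z=0$ and the symmetric solution is unique.

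The whole argument is routine and I do not anticipate any significant obstacle; the only point that requires a little care is justifying the interchange of differentiation and integration and the vanishing of the boundary term at infinity, both of which follow cleanly from the exponential decay estimate in Lemma~\ref{lem:matrix}.
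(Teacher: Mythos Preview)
Your argument is the standard and correct proof of the Lyapunov equation formula: verify convergence of the integral via exponential decay of $\|e^{At}\|$, check the equation by differentiating $e^{A^{\mathrm{T}}t}Qe^{At}$ and integrating, and deduce uniqueness from the fact that $e^{A^{\mathrm{T}}t}Ze^{At}$ is constant when $A^{\mathrm{T}}Z+ZA=0$. This is entirely sound.

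For comparison: the paper does not actually prove Lemma~\ref{lem:solution}. It introduces the lemma with a citation to \citep{kleinman1968iterative} and then immediately uses the result, so there is no ``paper's own proof'' to compare against. Your write-up supplies exactly the classical argument that the cited reference contains, so nothing is missing. One cosmetic remark: the bound you quote from Lemma~\ref{lem:matrix} has $M_s(t)$ inside the exponent, which (as the paper itself notes just after that lemma) simplifies to $\|e^{At}\|\le e^{\alpha(A)t}$ when $\alpha(A)\le 0$; invoking that simplified form directly would make the convergence and boundary-term arguments cleaner and avoid any worry about the polynomial growth of $M_s(t)$.
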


Now we consider the relation between $P$ and $P_{*}$. The core is iteratively constructing a sequence of matrices $P_{k}$ such that $P_{0}=P_{*}$ and $\lim_{k\to+\infty}P_{k}=P$. Such matrices follows the relation $P_{k+1}=P_{k}+\Delta P_{k}$ where $\Delta P_{k}$ converges rapidly. As for the starting case, consider the expansion
\begin{align*}
&(P_{*}+\Delta P)(B+\Delta B)R^{-1}(B+\Delta B)^{\mathrm{T}}(P_{*}+\Delta P)\\
&-(A+\Delta A)^{\mathrm{T}}(P_{*}+\Delta P)-(P_{*}+\Delta P)(A+\Delta A)-Q\\
&=\left[(B+\Delta B)R^{-1}(B+\Delta B)^{\mathrm{T}}P_{*}-A-\Delta A\right]^{\mathrm{T}}\Delta P\\
&+\Delta P\left[(B+\Delta B)R^{-1}(B+\Delta B)^{\mathrm{T}}P_{*}-A-\Delta A\right]\\
&+\left[P_{*}BR^{-1}B^{\mathrm{T}}P_{*}-A^{\mathrm{T}}P_{*}-P_{*}A-Q\right]+P_{*}\left[\Delta B\left(R^{-1}(B+\Delta B)^{\mathrm{T}}\right)+BR^{-1}\Delta B\right]P_{*}\\
&+\Delta P(B+\Delta B)R^{-1}(B+\Delta B)^{\mathrm{T}}\Delta P\,.
\end{align*}

Define
\begin{align*}
&A_{0}=A+\Delta A-(B+\Delta B)R^{-1}(B+\Delta B)^{\mathrm{T}}P_{*}\,,\\& F_{0}=-P_{*}\left[\Delta B\left(R^{-1}(B+\Delta B)^{\mathrm{T}}\right)+BR^{-1}\Delta B\right]P_{*}\,. 
\end{align*}
We set $\Delta P_{0}$ be a solution of

\begin{align*}
A_{0}^{\mathrm{T}}\Delta P_{0}+\Delta P_{0} A_{0}+F_{0}=0\,.
\end{align*}
which satisfies that (see Lemma~\ref{lem:solution})
\begin{align*}
&\Delta P_{0}=\int_{t\geq 0}e^{A_{0}^{\mathrm{T}}t}F_{0}e^{A_{0}t}dt\,,\\ & \Vert \Delta P_{0}\Vert \leq \int_{t\geq 0}e^{2\alpha(A_{0})t}\Vert F_{0}\Vert dt=\frac{1}{-2\alpha(A_{0})}\Vert F_{0}\Vert \leq\frac{1}{-\alpha(A_{0})}\Vert P_{*}\Vert ^{2}(\Vert BR^{-1}\Vert \epsilon+\Vert R^{-1}\Vert \epsilon^{2})\,.
\end{align*}
This $\Delta P_{0}$ also satisfies
\begin{align*}
&(P_{*}+\Delta P_{0})(B+\Delta B)R^{-1}(B+\Delta B)^{\mathrm{T}}(P_{*}+\Delta P_{0})\\&-(A+\Delta A)^{\mathrm{T}}(P_{*}+\Delta P)-(P_{*}+\Delta P)(A+\Delta A)-Q\\&=\Delta P_{0}(B+\Delta B)R^{-1}(B+\Delta B)^{\mathrm{T}}\Delta P_{0}\,.
\end{align*}

An important thing is to guarantee that $A_{0}$ is stable, and $\vert\alpha(A_{0})\vert$ can not be too closed to zero.  
For any $\epsilon_{1}\in(0,1)$ and $C_{1}= \Vert R^{-1}\Vert \Vert P_{*}\Vert +1+2\Vert BR^{-1}\Vert \Vert P_{*}\Vert $, as long as $\epsilon\leq \epsilon_{1}$, $\Vert A_{0}-(A-BR^{-1}B^{\mathrm{T}}P_{*})\Vert \leq C_{1}\epsilon$. Furthermore, there exists $\epsilon_{2}>0$ such that if $\Vert X-(A-R^{-1}B^{\mathrm{T}}P_{*})\Vert \leq \epsilon_{2}$, then $\alpha(X)\leq\frac{1}{2}\alpha(A-R^{-1}B^{\mathrm{T}}P_{*})$(the work of \citep{ref1} shows this result). We can further let this $\epsilon_{2}$ satisfies that, as long as $\Vert \Delta A\Vert,\Vert \Delta B\Vert,\Vert \Delta P\Vert \leq \epsilon_{2}$, we always have:
\begin{align}\label{stable margin}
\alpha\left(A+\Delta(A)-(B+\Delta B)R^{-1}(B+\Delta B)^{\mathrm{T}}(P_{*}+\Delta P)\right)\leq \frac{1}{2}\alpha(A-BR^{-1}B^{\mathrm{T}}P_{*})\,.
\end{align}
Now we additionally set $\epsilon_{1}$ satisfying $\epsilon_{1}\leq \frac{1}{2C_{1}}\epsilon_{2}$ and $\Vert R^{-1}\Vert \epsilon_{1}\leq 1$, then for all $\epsilon\leq \epsilon_{1}$,
\begin{align*}
\left\Vert \Delta P_{0}\right\Vert \leq \frac{2}{-\alpha(A-BR^{-1}B^{\mathrm{T}}P_{*})}\Vert P_{*}\Vert ^{2}(1+\Vert BR^{-1}\Vert )\epsilon\,.
\end{align*}

Denote $P_{1}=P_{0}+\Delta P_{0}$,  $C_{2}=\frac{2}{-\alpha(A-BR^{-1}B^{\mathrm{T}}P_{*})}\Vert P_{*}\Vert ^{2}(1+\Vert BR^{-1}\Vert )$, and set some constant $C_{3}$ satisfying
$C_{3}\geq \Vert BR^{-1}B^{\mathrm{T}}\Vert +2\Vert BR^{-1}\Vert +\Vert R^{-1}\Vert $. We then inductively define $P_{k+1}$ and $\Delta P_{k}$ $(k\geq 1)$. For defined $\Delta P_{k-1}$, we set $P_{k}=P_{k-1}+\Delta P_{k-1}$, which satisfies
\begin{align*}
& P_{k}(B+\Delta B)R^{-1}(B+\Delta B)^{\mathrm{T}}P_{k}-(A+\Delta A)^{\mathrm{T}}P_{k}-P_{k}(A+\Delta(A))-Q\\&=\Delta P_{k-1}(B+\Delta B)R^{-1}(B+\Delta B)^{\mathrm{T}}\Delta P_{k-1}\,.
\end{align*} 
Then we denote $A_{k}=A+\Delta A-(B+\Delta B)R^{-1}(B+\Delta B)^{\mathrm{T}}P_{k}$, and set $\Delta P_{k}$ satisfying:
\begin{align*}
A^{\mathrm{T}}_{k}\Delta P_{k}+\Delta P_{k}A_{k} = \Delta P_{k-1}(B+\Delta B)R^{-1}(B+\Delta B)^{\mathrm{T}}\Delta P_{k-1}\,.
\end{align*}

By the hypothesis of $\epsilon_{2}$, as long as $\Vert P_{k}-P_{*}\Vert\leq \epsilon_{2}$, we have $\alpha(A_{k})\geq\frac{1}{2}\alpha(A-BR^{-1}B^{\mathrm{T}}P_{*})$. By using ~\eqref{eq:solution} we obtain that $\Vert \Delta P_{k}\Vert \leq C_{4}\Vert \Delta P_{k-1}\Vert ^{2}$, where $C_{4}=\frac{2}{-\alpha(A-BR^{-1}B^{\mathrm{T}}P_{*})}C_{3}$. Now if we define $P_{k+1}=P_{k}+\Delta P_{k}$, $P_{k+1}$ also satisfies:
\begin{align*}
& P_{k+1}(B+\Delta B)R^{-1}(B+\Delta B)^{\mathrm{T}}P_{k+1}-(A+\Delta A)^{\mathrm{T}}P_{k+1}-P_{k+1}(A+\Delta(A))-Q\\&=\Delta P_{k}(B+\Delta B)R^{-1}(B+\Delta B)^{\mathrm{T}}\Delta P_{k}\,,
\end{align*} 
Then these sequences $\Delta P_{k}$ and $P_{k}$ are well defined, along with the relation that $P_{k+1}=P_{k}+\Delta P_{k}$. Furthermore, when $\Vert P_{k}-P_{*}\Vert\leq \epsilon_{2}$, we have$\Vert\Delta P_{k+1}\Vert\leq C_{4}\Vert\Delta P_{k}\Vert^{2}$. Note that for the base case we have $\Vert \Delta P_{0}\Vert \leq C_{2}\epsilon$.

Finally, it remains to constrain $\Vert P_{k}-P_{*}\Vert$. By choosing $\epsilon\leq \min( \frac{1}{2C_{2}C_{4}}, \frac{1}{2C_{2}}\epsilon_{2},1)$, we obtain $\Vert \Delta P_{0}\Vert \leq C_{2}\epsilon$. We can also see that if for all $0\leq k\leq m$, $\Vert \Delta P_{k} \Vert \leq2^{-k}C_{2}\epsilon$, then $\Vert P_{m}-P_{*}\Vert \leq 2(1-2^{-m+1})C_{2}\epsilon\leq \epsilon_{2}$ so that $\Vert \Delta P_{m+1}\Vert \leq C_{4}\Vert \Delta P_{m}\Vert^{2}\leq 2^{-m-1}C_{2}\epsilon$. So by induction we see that $\Vert \Delta P_{k}\Vert \leq 2^{-k}C_{2}\epsilon$ for any $k$.

On the other hand, since $\Vert \Delta P_{k}\Vert \leq 2^{-k}\Vert \Delta P_{0}\Vert $, $\lim_{k\to+\infty}P_{k}=P_{\infty}$ exists, and such $P_{\infty}$ is the unique symmetric solution of 
\begin{align*}
P(B+\Delta B)R^{-1}(B+\Delta B)^{\mathrm{T}}P-(A+\Delta A)^{\mathrm{T}}P-P(A+\Delta(A))-Q=0\,,
\end{align*}
such that $(A+\Delta A)-(B+\Delta B)R^{-1}(B+\Delta B)^{\mathrm{T}}P$ is stable (recall the stable margin in~\eqref{stable margin}, which implies that $(A+\Delta A)-(B+\Delta B)R^{-1}(B+\Delta B)^{\mathrm{T}}P_{\infty}$ is stable). 

So $P_{\infty}$ is exactly $P$, satisfying $\Vert P-P_{*}\Vert \leq 2C_{2}\epsilon$.

Therefore, we conclude that there exists some $\epsilon_{0}>0$ and constant $C$, both depending on $A,B,K,d,p$ such that for any $\epsilon\in[0,\epsilon_{0}]$, $\Vert P-P_{*}\Vert \leq C\epsilon$ as long as $\Vert \hat{A}-A\Vert ,\Vert \hat{B}-B\Vert \leq \epsilon$.

Then we apply our results for system identification to establish an upper bound for $\Vert \bar{K}-K_{*}\Vert$.

Based on Lemma~\ref{lem:convergence of P}, fix constant $\epsilon_{1}>0$ and constant $C_{1}\geq 0$ so that we have $\Vert P-P_{*}\Vert\leq C_{1}\left(\Vert\hat{A}-A\Vert+\Vert\hat{B}-B\Vert\right)$ whenever $\Vert\hat{A}-A\Vert+\Vert\hat{B}-B\Vert\leq \epsilon_{1}$

We set $C_{2}\geq 1$ be two times the constant $C$ in Lemma~\ref{lem:A,B}, and obtain that, when $\log^{2}(1/\delta)\leq \frac{T^{1/2}}{C_{2}}$ and $T^{1/2}\geq C_{2}\Vert X_{0}\Vert^{2}_{2}$, we have:
\begin{align*}
    \mathbb{P}\left[\Vert\hat{A}-A\Vert+\Vert \hat{B}-B\Vert \leq 2C_{2}\sqrt{\frac{\log(1/\delta)}{T^{1/2}}}\right]\geq 1-\delta\,.
\end{align*}
Then, for $\log(1/\delta)\leq \min\left\{\frac{T\epsilon_{1}^{2}}{4C^{2}_{2}},\frac{T^{1/4}}{C^{1/2}_{2}}\right\}\leq \frac{T^{1/4}\epsilon^{2}_{1}}{4C_{2}^{2}}$, we have:
\begin{align}\label{eq: upper bound P}
\mathbb{P}\left[\Vert P-P_{*}\Vert \leq 2C_{1}C_{2}\sqrt{\frac{\log(1/\delta)}{T^{1/2}}}\right]\geq 1-\delta\,.    
\end{align}

Finally, since $\bar{K}=-R^{-1}(\hat{B})^{\mathrm{T}}P$, $K_{*}=-R^{-1}B^{\mathrm{T}}P_{*}$, we have:
\begin{align*}
\left\Vert \bar{K}-K_{*}\right\Vert&\leq \Vert R^{-1}\Vert \left[\Vert \hat{B}-B\Vert \Vert P\Vert +\Vert B\Vert \Vert P-P_{*}\Vert   \right]\,.  
\end{align*}

 We can reset $C_{1}$ such that
$\Vert \bar{K}-K_{*}\Vert\leq C_{1}\left(\Vert\hat{A}-A\Vert+\Vert\hat{B}-B\Vert\right)$ whenever $\Vert\hat{A}-A\Vert+\Vert\hat{B}-B\Vert\leq \epsilon_{1}$, and combine this with ~\eqref{eq: upper bound P}, we have: for any $\log(1/\delta)\leq \frac{T^{1/4}\epsilon^{2}_{1}}{4C_{2}^{2}}$
\begin{align}\label{eq: upper bound K}
\mathbb{P}\left[\Vert \bar{K}-K_{*}\Vert \leq 2C_{1}C_{2}\sqrt{\frac{\log(1/\delta)}{T^{1/2}}}\right]\geq 1-\delta\,.    
\end{align}

With this probability bound on $\Vert \bar{K}-K_{*}\Vert$, we can further upper bound the regret, shown in the following part.
\subsection{Key Lemmas}

We first upper bound the radius of a single trajectory with stable controller, for which we introduce and provide a proof for the following lemma:

\begin{restatable}{lemma}{continuous}
\label{lem:bound continuous trajectory}
Consider the continuous system $dX_{t}=AX_{t}dt+dW_{t}$ such that $\alpha(A)<0$ where $\alpha(A)$ is the largest real component of $A$ and $W$ is a standard Brownian noise. Then, w.p. at least $1-\delta$:
	\begin{align*}
	\sup_{0\leq t\leq T}\left(\|X_{t}\|_{2}-e^{\alpha(A)t}\|X_{0}\|_{2}\right)\leq C\sqrt{d\log((1+T)/\delta)}\,.
	\end{align*}  
\end{restatable}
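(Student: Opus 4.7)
My strategy is to decompose $X_t$ via variation of parameters into a deterministic transient and a Gaussian stochastic integral, bound each piece separately, and upgrade pointwise Gaussian tails into a uniform bound on $[0,T]$ by a discretization argument. Writing
$$X_t = e^{At}X_0 + Z_t, \qquad Z_t := \int_0^t e^{A(t-s)}\,dW_s,$$
Lemma~\ref{lem:matrix} combined with $\alpha(A)<0$ gives $\|e^{At}X_0\|_2 \leq e^{\alpha(A)t}\|X_0\|_2$ up to a matrix-dependent constant absorbed into $C$, so it suffices to show that $\sup_{t\in[0,T]}\|Z_t\|_2 \leq C\sqrt{d\log((1+T)/\delta)}$ with the claimed probability.

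For each fixed $t$, $Z_t$ is centered Gaussian with covariance $\Sigma_t = \int_0^t e^{A(t-s)}e^{A^T(t-s)}\,ds \preceq \Sigma_\infty := \int_0^\infty e^{As}e^{A^Ts}\,ds$, where $\Sigma_\infty$ is finite because $A$ is stable. Standard $\chi^2$ Gaussian concentration then yields the pointwise tail $\mathbb{P}[\|Z_t\|_2 \geq C_1\sqrt{d+\log(1/\delta')}]\leq \delta'$. I would pick a grid $\{t_i = i\eta\}_{i=0}^{N}$ on $[0,T]$ with $\eta = 1/(T+1)$, giving $N \leq (T+1)^2$, and apply the union bound with $\delta' = \delta/(2N)$ to obtain $\max_i \|Z_{t_i}\|_2 \leq C_2\sqrt{d\log((1+T)/\delta)}$ on an event of probability at least $1-\delta/2$.

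To convert this into a uniform bound, I need to control the within-interval fluctuation $\sup_{t\in[t_i,t_{i+1}]}\|X_t - X_{t_i}\|_2$. From the integral form $X_t - X_{t_i} = \int_{t_i}^t AX_s\,ds + (W_t - W_{t_i})$, a crude a priori envelope $\sup_{s\in[0,T]}\|X_s\|_2 \leq R$ with $R$ polynomial in $T$ (obtained on a separate high-probability event by Gr\"onwall together with the reflection principle applied to $W$) gives the fluctuation at most $\|A\|\eta R + \sup_{s\in[t_i,t_{i+1}]}\|W_s - W_{t_i}\|_2$. The Brownian-increment supremum over a window of length $\eta$ is $O(\sqrt{\eta d \log((1+T)/\delta)})$ after a union bound over the $N$ windows, and with $\eta = 1/(T+1)$ both contributions are $o(1)$ relative to the target and absorbed into $C$. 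The main technical obstacle is precisely this bootstrap: one cannot bound $\sup_s\|X_s\|_2$ by the target quantity before proving the target, so I would first establish the loose polynomial envelope $R$ and use it only as the multiplier of the vanishing factor $\eta$, which suffices since that factor eventually dominates any polynomial $R$.
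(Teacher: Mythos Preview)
Your overall scaffolding matches the paper's: decompose $X_t = e^{At}X_0 + Z_t$, reduce to $\sup_t\|Z_t\|$, control a grid by Gaussian tails plus a union bound, then control the fluctuation between grid points. The gap is in that last step. You bound $X_t - X_{t_i} = \int_{t_i}^t AX_s\,ds + (W_t - W_{t_i})$ using a global envelope $\sup_{s\leq T}\|X_s\|_2 \leq R$, claiming $R$ is polynomial in $T$ ``by Gr\"onwall together with the reflection principle applied to $W$''. But Gr\"onwall is blind to the sign of the eigenvalues of $A$: from $\|X_t\| \leq \|X_0\| + \|A\|\int_0^t\|X_s\|\,ds + \sup_{s\leq T}\|W_s\|$ one only gets $R \leq (\|X_0\| + M_W)\,e^{\|A\|T}$, which is exponential in $T$. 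With an exponential $R$, the product $\|A\|\eta R$ does not vanish for any polynomially small $\eta$, and shrinking $\eta$ to $e^{-\Theta(T)}$ pushes $\log N \sim T$ into the union bound and destroys the target $\sqrt{\log((1+T)/\delta)}$ rate. So the bootstrap as described does not close.

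The paper avoids any bootstrap by exploiting the martingale structure of the stochastic integral on short intervals. Writing the within-interval piece as $x_t^k := \int_0^t e^{A(t-s)}\,dW_{kh+s}$ on $[0,h]$, the process $Y_t := e^{-At}x_t^k = \int_0^t e^{-As}\,dW_{kh+s}$ is a martingale, hence $e^{\lambda\|Y_t\|_2^2}$ is a submartingale and Doob's maximal inequality gives $\mathbb{P}\bigl[\sup_{t\leq h}\|Y_t\|_2 \geq C\bigr] \leq e^{-\lambda C^2}\,\mathbb{E}\bigl[e^{\lambda\|Y_h\|_2^2}\bigr]$. Because $h$ is a fixed constant, the endpoint covariance $\bar\Sigma = \int_0^h e^{-As}e^{-A^{\mathrm T}s}\,ds$ is bounded independently of $T$, the Gaussian MGF is evaluated in closed form, and converting back via $\|x_t^k\|_2 \leq e^{\|A\|h}\|Y_t\|_2$ costs only a constant; a union bound over the $O(T)$ intervals finishes. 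If you prefer to salvage your own route, localize the Gr\"onwall: on $[t_i,t_{i+1}]$, setting $R_i := \sup_{s}\|X_s - X_{t_i}\|_2$ yields $R_i \leq \|A\|\eta(\|X_{t_i}\|_2 + R_i) + \sup_s\|W_s - W_{t_i}\|_2$, hence $R_i \leq 2\|A\|\eta\|X_{t_i}\|_2 + 2\sup_s\|W_s - W_{t_i}\|_2$ whenever $\|A\|\eta < 1/2$, which only needs the already-controlled grid value $\|X_{t_i}\|_2$ rather than a global envelope.
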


Then we concentrate on how the error $\Vert P-P_{*}\Vert$ will influence the regret during the exploitation phase.
For a dynamic $U$ with $\alpha(A+BU)<0$, we define a cost function:
\begin{align*}\label{eq:cost function}
cost(U)=\mathrm{tr}\left(\int_{t\geq 0}(e^{(A+BU)t})^{\mathrm{T}}(Q+U^{\mathrm{T}}RU)e^{(A+BU)t}dt\right)\,.   
\end{align*}
The convergence rate of this cost function is stated in the following lemma:

\begin{restatable}{lemma}{convergence}
\label{lem:convergence of cost}
Let $U_{*}$ minimize $cost(U)$. Then, there exists $\epsilon_{0}\geq 0$ such that for any $\Vert \Delta U\Vert=1$ and $\epsilon\in[0,\epsilon_0]$, we have:
\begin{align*}
    cost(U_{*}+\epsilon\Delta U)-cost(U_{*})\leq C_{1}\epsilon^{2}\,.
\end{align*}
\end{restatable}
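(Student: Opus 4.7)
The plan is to establish a second-order Taylor-type upper bound for $cost$ around its minimizer $U_{*}$, by using the Lyapunov characterization $cost(U)=\mathrm{tr}(P_{U})$, where $P_{U}$ is the unique symmetric solution of
\begin{align*}
(A+BU)^{\mathrm{T}}P_{U}+P_{U}(A+BU)+Q+U^{\mathrm{T}}RU=0,
\end{align*}
which by Lemma~\ref{lem:solution} equals $\int_{0}^{\infty}e^{(A+BU)^{\mathrm{T}}t}(Q+U^{\mathrm{T}}RU)e^{(A+BU)t}\,dt$ whenever $\alpha(A+BU)<0$. Since $\alpha(A+BU_{*})<0$, continuity of the spectral abscissa (the same argument used to obtain \eqref{stable margin} in the proof of Lemma~\ref{lem:convergence of P}) yields an $\epsilon_{0}>0$ and a neighborhood $N=\{U_{*}+\epsilon\Delta U:\|\Delta U\|=1,\,\epsilon\in[0,\epsilon_{0}]\}$ on which $\alpha(A+BU)\leq \tfrac{1}{2}\alpha(A+BU_{*})<0$ uniformly.

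Next I would invoke smoothness of $U\mapsto P_{U}$ on $N$. Implicitly differentiating the Lyapunov equation in direction $\Delta U$ shows that the first derivative $\dot P$ solves a Lyapunov equation of the same form with a forcing term linear in $\Delta U$ and in $P_{U}$, and the second derivative $\ddot P$ solves yet another Lyapunov equation whose right-hand side is quadratic in $(\Delta U, \dot P)$. Each such equation has a solution expressible through Lemma~\ref{lem:solution}, and the uniform stability margin on $N$ combined with Lemma~\ref{lem:matrix} gives $\|e^{(A+BU)t}\|\leq C e^{-\alpha t}$ with $\alpha=-\tfrac{1}{2}\alpha(A+BU_{*})>0$, so the relevant integrals converge and are bounded uniformly over $N$. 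Consequently $f(\epsilon):=cost(U_{*}+\epsilon\Delta U)$ is $C^{2}$ on $[0,\epsilon_{0}]$ with $\sup_{\epsilon\in[0,\epsilon_{0}]}|f''(\epsilon)|\leq 2C_{1}$ for some constant $C_{1}$ depending on $\|A\|,\|B\|,\|Q\|,\|R\|,\|U_{*}\|$ and $|\alpha(A+BU_{*})|$.

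Finally, since $U_{*}$ minimizes $cost(U)$ over all stabilizing $U$ and $N$ consists entirely of stabilizing dynamics, $U_{*}$ is a local minimizer on $N$ and the first-order optimality gives $f'(0)=0$. Taylor's theorem with the integral remainder then yields
\begin{align*}
cost(U_{*}+\epsilon\Delta U)-cost(U_{*})=\int_{0}^{\epsilon}(\epsilon-s)f''(s)\,ds\leq C_{1}\epsilon^{2},
\end{align*}
which is the claim. The main obstacle is bounding $f''$ uniformly on $N$. The cleanest way to handle it is to avoid differentiating matrix exponentials directly and instead differentiate the Lyapunov equation twice, reducing the problem to three nested Lyapunov equations whose solutions are controlled purely by the uniform exponential decay of $e^{(A+BU)t}$ and uniform bounds on $Q+U^{\mathrm{T}}RU$ and its first two derivatives, all of which are already guaranteed by Assumption~\ref{assumption:alg} and the construction of $N$.
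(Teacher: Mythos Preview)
Your proposal is correct, but it takes a different route from the paper. The paper works directly with the integral representation of $cost(U)$ and expands the matrix exponential: it writes $e^{(A+B(U_{*}+\epsilon\Delta U))t}-e^{(A+BU_{*})t}=\epsilon E(t)+\epsilon^{2}E_{1}(\epsilon,t)$, builds $E(t)$ and $E_{1}(\epsilon,t)$ via a Taylor expansion for small $t$ and an inductive doubling argument $t\mapsto 2t$ for large $t$, and then integrates against $Q+U^{\mathrm{T}}RU$. The first-order term in the cost is an explicit matrix $V$ whose trace vanishes by optimality of $U_{*}$, leaving only the $\mathcal{O}(\epsilon^{2})$ remainder. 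Your approach instead passes through the Lyapunov characterization $cost(U)=\mathrm{tr}(P_{U})$ and differentiates the Lyapunov equation twice, reducing the second-derivative bound to three nested Lyapunov equations whose solutions are controlled by the uniform stability margin. Your argument is shorter and more modular, since it never touches the matrix exponential derivative directly; the paper's argument is more hands-on and yields an explicit first-order term, at the cost of a fairly intricate induction to show $\|E_{1}(\epsilon,t)\|$ decays in $t$ uniformly in $\epsilon$. Both routes rely on exactly the same two ingredients you isolate: a uniform spectral-abscissa margin on a neighborhood of $U_{*}$, and vanishing of the first variation at the minimizer.
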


The above result shows the average cost per unit time when applying fixed controller for infinite time.

Then we further consider the case when the running time is finite. We derive the following lemma:

\begin{restatable}{lemma}{approximated}
\label{lem:approximated optimal cost}
Let $U_{*}$ follows the same definition as in Lemma~\ref{lem:convergence of cost}. Then, for some $\epsilon>0$, there exist constants $C_{2}$ and $C_{3}$ (independent of $U$) such that for all $T>0$ and any $U$ such that $\|U-U_{*}\|\leq \epsilon$,
\begin{align*}
\left\vert J_{T}-cost(U)T\right\vert\leq C_{2}\|x\|^{2}_{2}+C_{3}\,.
\end{align*}
Here $J_{T}$ is the expected cost of the policy that takes action by $U_{t}=UX_{t}$ $(t\in[0,T])$, with initial state $X_{0}=x$.
\end{restatable}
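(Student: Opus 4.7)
The plan is to compute $J_T$ explicitly by solving the linear SDE $dX_t = \bar{A}X_t\,dt + dW_t$ with $\bar{A} = A + BU$ and cost matrix $\bar{Q} = Q + U^{\mathrm{T}} R U$, then compare against $cost(U)\cdot T$. By the variation-of-constants formula, $X_t = e^{\bar{A}t}x + \int_0^t e^{\bar{A}(t-s)}\,dW_s$, and Itô isometry gives
\begin{align*}
\mathbb{E}[X_t^{\mathrm{T}}\bar{Q}X_t] = x^{\mathrm{T}} e^{\bar{A}^{\mathrm{T}} t}\bar{Q} e^{\bar{A}t} x + \mathrm{tr}\!\left(\int_0^t e^{\bar{A}^{\mathrm{T}} s}\bar{Q} e^{\bar{A}s}\,ds\right).
\end{align*}
Integrating this identity from $0$ to $T$ splits $J_T$ into an initial-state contribution and a noise contribution, which I will handle separately.

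For the initial-state piece, I would bound $\int_0^T x^{\mathrm{T}} e^{\bar{A}^{\mathrm{T}} t}\bar{Q} e^{\bar{A}t} x\,dt \leq \|x\|_2^2 \int_0^\infty \|\bar{Q}\|\,\|e^{\bar{A}t}\|^2\,dt$. Since $\|U-U_*\|\leq \epsilon$ and $A+BU_*$ is stable, continuity of the spectral abscissa ensures $\alpha(\bar{A})\leq \tfrac{1}{2}\alpha(A+BU_*) < 0$ uniformly on the ball, so Lemma~\ref{lem:matrix} yields a uniform polynomial-times-exponential bound on $\|e^{\bar{A}t}\|$; the integral is therefore finite and independent of $T$, producing the $C_2\|x\|_2^2$ term.

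For the noise piece, I would apply Fubini to rewrite $\int_0^T\!\int_0^t f(s)\,ds\,dt = \int_0^T (T-s) f(s)\,ds$, where $f(s) := \mathrm{tr}(e^{\bar{A}^{\mathrm{T}} s}\bar{Q} e^{\bar{A}s})$. Because $cost(U) = \int_0^\infty f(s)\,ds$, this quantity equals $T\cdot cost(U) - T\int_T^\infty f(s)\,ds - \int_0^T s\, f(s)\,ds$. The uniform exponential decay of $f(s)$ makes both remainder terms bounded by a constant $C_3$: $T\int_T^\infty f(s)\,ds$ is of order $Te^{-cT}$ and hence bounded uniformly in $T$, while $\int_0^T s\, f(s)\,ds \leq \int_0^\infty s\,f(s)\,ds < \infty$. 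Combining both pieces gives $|J_T - cost(U)\,T|\leq C_2\|x\|_2^2 + C_3$.

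The main obstacle is making $C_2$ and $C_3$ \emph{uniform} in $U$ over $\{U:\|U-U_*\|\leq \epsilon\}$ rather than letting them depend on the particular $U$. This reduces to a uniform bound on $\alpha(\bar{A})$ away from zero, which follows from continuity of the spectral abscissa on a compact neighborhood of $U_*$. Once that is in hand, Lemma~\ref{lem:matrix} provides a single polynomial-times-exponential envelope $\|e^{\bar{A}t}\|\leq M(\epsilon)e^{-\gamma(\epsilon) t}$ valid for every admissible $U$, and all integrals above are controlled by constants depending only on $\epsilon$, $\|Q\|$, $\|R\|$, $\|U_*\|$ — yielding the desired $U$-independent $C_2$ and $C_3$.
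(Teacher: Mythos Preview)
Your proposal is correct and follows essentially the same approach as the paper's own proof: both solve the SDE by variation of constants, use It\^o isometry to get the same decomposition of $\mathbb{E}[X_t^{\mathrm{T}}\bar{Q}X_t]$, integrate in $t$, apply Fubini to write the noise part as $\int_0^T (T-s)f(s)\,ds = T\cdot cost(U) - \int_0^T s f(s)\,ds - T\int_T^\infty f(s)\,ds$, and bound each remainder via exponential decay of $f$. Your explicit treatment of the uniformity of $C_2,C_3$ over the ball $\{U:\|U-U_*\|\leq\epsilon\}$ via continuity of the spectral abscissa is in fact more careful than the paper, which leaves this point implicit.
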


With this lemma, by definition of $U_{*}$, we actually have $U_{*}=K_{*}$, where $K_{*}=-R^{-1}B^{\mathrm{T}}P_{*}$ and $P_{*}$ is the solution of ~\eqref{eq:ricatti}.
Since such $C_{2},C_{3}$ also satisfy:
\begin{align*}
\left\vert J^{*}_{T}-cost(U_{*})T\right\vert \leq C_{2}\|x\|^{2}_{2}+C_{3}\,,
\end{align*}
so it follows that
\begin{align}
R_{T}=J_{T}-J^{*}_{T}\leq 2C_{2}\|x\|^{2}_{2}+2C_{3}\,.
\end{align}
\subsection{Proof of Lemma~\ref{lem:bound continuous trajectory}}\label{subsection: proof of state}
We first upper bound the radius of a single trajectory with stable controller, for which we introduce and provide a proof for the following lemma:
\continuous*
\begin{proof}
The trajectory $X_{t}$ with differential equation $dX_{t}=AX_{t}+dW_{t}$ can be derived as 
\begin{align*}
X_{t}=e^{At}X_{0}+\int_{s=0}^{t}e^{A(t-s)}dW_{t}\,.
\end{align*}
Lemma~\ref{lem:matrix} tells that when $A$ is stable, $\left\Vert e^{At}X_{0}\right\Vert_{2}\leq e^{\alpha(A)t}\Vert X_{0}\Vert _{2}$. So it suffices to show that 
\begin{align*}
\mathbb{P}\left[\sup_{0\leq t\leq T}\left\Vert \int_{s=0}^{t}e^{A(t-s)}dW_{t}\right\Vert _{2}\geq C\sqrt{d\log(1+T)/\delta}\right]\leq \delta\,.
\end{align*}
Let $T=T_{0}h$ with $T_{0}$ be an integer. We first consider the set of points $\{X_{kh}\}$. Denote $w_{k}:=\int_{t=0}^{kh}e^{A(kh-t)}dW_{t}$, then $w_{k}\sim\mathcal{N}(0,\Sigma_{k})$ with $\Sigma_{k}=\int_{t=0}^{kh}e^{At}e^{A^{\mathrm{T}}t}dt$. This $\Sigma_{h}$ also satisfies
\begin{align*}
\left\Vert \Sigma_{k}\right\Vert \leq\int_{t=0}^{h}\left\Vert e^{At}\right\Vert ^{2}dt\leq \int_{t=0}^{kh}e^{2\alpha(A)t}dt\leq \frac{1}{2\vert \alpha(A)\vert}\,.
\end{align*}
Which follows that $\sup_{0\leq k\leq T_{0}}\Vert w_{k}\Vert _{2}\leq 2\sqrt{\frac{d}{\vert \alpha(A)\vert} \log((1+T_{0})/\delta)}$, w.p. at least $1-\delta$.

Next we consider any $X_{kh+t}$ with $t\in[0,h]$. Bounding such terms requires the Doob's martingale inequality~\citep{durrettprobability}, stated as in Lemma~\ref{lem:doob}. We denote $x^{k}_{t}=\int_{s=0}^{t}e^{A(t-s)}dW_{kh+s}ds$ with corresponding filtration $\mathcal{F}_{t}$. We also 
define $Z^{k}_{t}:=e^{\lambda\left\Vert e^{-At}x^{k}_{t}\right\Vert ^{2}_{2}}$ with $\lambda\geq 0$. Then $Z^{k}_{t}$ is a submartingale under the filtration $\mathcal{F}_{t}$, since for any $t\geq s$,
\begin{align*}
\mathbb{E}\left[Z^{k}_{t}\vert \mathcal{F}_{s}\right]=\mathbb{E}\left[\exp\left(\lambda\left\Vert e^{-As}x^{k}_{s}+\int_{t_{1}=s}^{t}e^{-At_{1}}dW_{kh+t_{1}}\right\Vert ^{2}_{2}\right)\big\vert x^{k}_{s}\right]
\geq e^{\lambda\left\Vert e^{-As}x^{k}_{s}\right\Vert_{2} ^{2}}=Z^{k}_{s}\,.
\end{align*}
Where we notice that $\mathbb{E}\left[\left\Vert e^{-As}x^{k}_{s}+\int_{t_{1}=s}^{t}e^{-At_{1}}dW_{kh+t_{1}}\right\Vert ^{2}_{2}\big\vert x^{k}_{s} \right]\geq\left\Vert e^{-As}x^{k}_{s}\right\Vert_{2} ^{2}$, and apply Jensen's inequality on the non-decreasing convex function $f(x)=e^{\lambda x}$ to obtain the above inequality.

Now we apply Lemma~\ref{lem:doob} and get
\begin{align}\label{eq:doob}
\mathbb{P}\left[\sup_{t\in[0,h]}\left\Vert e^{-At}x^{k}_{t}\right\Vert _{2}\geq C\right]\leq e^{-\lambda C^{2}}\mathbb{E}[Z^{k}_{h}]\,.
\end{align}
We next estimate $\mathbb{E}(Z^{k}_{h})$. Since $e^{-Ah}x^{k}_{h}=\int_{t=0}^{h}e^{-At}dW_{kh+t}$, we obtain that $e^{-Ah}x_{h}^{k}\sim\mathcal{N}(0,\bar{\Sigma})$, where
\begin{align*}
\bar{\Sigma}=\int_{t=0}^{h}e^{-At}e^{-A^{\mathrm{T}}t}dt\,.
\end{align*}
By setting $\lambda=\frac{1}{4\Vert \bar{\Sigma}\Vert }$, it can be computed that
\begin{align*}
\mathbb{E}\left[e^{\lambda\left\Vert e^{-Ah}x^{k}_{h}\right\Vert ^{2}_{2}}\right]
&=\int_{x\in\mathbb{R}^{d}}\frac{1}{(2\pi)^{d/2}\sqrt{\det(\bar{\Sigma})}}e^{-\frac{1}{2}x^{\mathrm{T}}\Sigma^{-1}_{1}x}e^{\lambda x^{\mathrm{T}}I_{d}x}dx\\
&=\sqrt{\frac{1}{\det(\bar{\Sigma})\det(\Sigma^{-1}_{1}-2\lambda I_{d})}}\\
&=\sqrt{\frac{1}{\det(I_{d}-2\lambda\bar{\Sigma})}}\\
&\leq 2^{d/2}\,,
\end{align*}
where the last inequality is because $I_{d}-2\lambda\bar{\Sigma}\succeq \frac{1}{2}I_{d}$.

We combine this result with \eqref{eq:doob} and obtain:
\begin{align*}
&\mathbb{P}\left[\sup_{0\leq k\leq T_{0}-1, 0\leq t\leq h}\left\Vert x^{k}_{t}\right\Vert _{2}\geq 2e^{\Vert A\Vert h}\left\Vert \bar{\Sigma}\right\Vert ^{1/2}\sqrt{\log(2^{d/2}T_{0}/\delta)}\right]\\
&\leq \sum_{k=0}^{T_{0}-1}\mathbb{P}\left[\sup_{t\in[0,h]}Z^{k}_{t}\geq 2^{d/2}\frac{T_{0}}{\delta}\right]\\
&\leq \sum_{k=0}^{T_{0}-1}\mathbb{P}\left[\sup_{t\in[0,h]}Z^{k}_{t}\geq \frac{T_{0}}{\delta}\mathbb{E}(Z^{k}_{h})\right]\\
&\leq \delta\,.
\end{align*}
Finally, since $X_{kh+t}=e^{A(kh+t)}X_{0}+e^{At}w_{k}+x^{k}_{t}$, it follows that
\begin{align*}
\left\Vert X_{kh+t}\right\Vert _{2}&\leq \left\Vert e^{A(kh+t)}X_{0}\right\Vert _{2}+\left\Vert e^{At}w_{k}\right\Vert _{2}+\left\Vert x^{k}_{t}\right\Vert _{2}\\
&\leq e^{\alpha(A)(kh+t)}\left\Vert X_{0}\right\Vert _{2}+\left\Vert w_{k}\right\Vert _{2}+\left\Vert x^{k}_{t}\right\Vert _{2}\,.
\end{align*}
By applying union bound on $\left\Vert w_{k}\right\Vert _{2}$ and $\left\Vert x^{k}_{t}\right\Vert _{2}$ we finally obtain Lemma~\ref{lem:bound continuous trajectory}.
\end{proof}

\begin{lemma}[Doob's martingale inequality]\label{lem:doob}
	Let $X_1, \ldots, X_n$ be a discrete-time submartingale relative to a filtration $\mathcal{F}_1, \ldots, \mathcal{F}_n$ of the underlying probability space, which is to say:
	$$
	X_i \leq \mathbb{E}\left[X_{i+1} \mid \mathcal{F}_i\right] .
	$$
	The submartingale inequality says that
	$$
	\mathbb{P}\left[\max _{1 \leq i \leq n} X_i \geq C\right] \leq \frac{\mathbb{E}\left[\max \left(X_n, 0\right)\right]}{C}
	$$
	for any positive number $C$. 
	
	Moreover, let $X_t$ be a submartingale indexed by an interval $[0, \mathrm{~T}]$ of real numbers, relative to a filtration $F_t$ of the underlying probability space, which is to say:
	$$
	X_s \leq \mathrm{E}\left[X_t \mid \mathcal{F}_s\right]
	$$
	for all $s<t$. The submartingale inequality says that if the sample paths of the martingale are almost-surely right-continuous, then
	$$
	\mathbb{P}\left[\sup _{0 \leq t \leq T} X_t \geq C\right] \leq \frac{\mathbb{E}\left[\max \left(X_T, 0\right)\right]}{C}
	$$
	for any positive number $C$. 
\end{lemma}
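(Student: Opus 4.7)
The plan is to establish the discrete inequality first via a stopping-time argument, then obtain the continuous version by reducing a supremum over $[0,T]$ to a supremum over a countable dense subset, which is possible because of the right-continuity of sample paths.

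For the discrete case I would introduce the first-hitting time $\tau = \min\{i : 1 \leq i \leq n,\ X_i \geq C\}$, with the convention $\tau = n+1$ on the event that no such $i$ exists. Setting $A = \{\max_{1 \leq i \leq n} X_i \geq C\}$, one has $\tau \leq n$ and $X_\tau \geq C$ on $A$. I would then decompose
\begin{align*}
\mathbb{E}[X_n \mathbf{1}_A] \;=\; \sum_{i=1}^n \mathbb{E}[X_n \mathbf{1}_{\{\tau = i\}}].
\end{align*}
Because $\{\tau = i\} \in \mathcal{F}_i$, iterating the submartingale property $X_i \leq \mathbb{E}[X_{i+1} \mid \mathcal{F}_i]$ gives $\mathbb{E}[X_n \mid \mathcal{F}_i] \geq X_i$, so each summand satisfies $\mathbb{E}[X_n \mathbf{1}_{\{\tau = i\}}] \geq \mathbb{E}[X_i \mathbf{1}_{\{\tau = i\}}] \geq C\, \mathbb{P}(\tau = i)$. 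Summing over $i$ yields $C\,\mathbb{P}(A) \leq \mathbb{E}[X_n \mathbf{1}_A] \leq \mathbb{E}[\max(X_n,0)]$, which is the discrete bound.

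For the continuous version I would pick a countable dense subset $D \subset [0,T]$ containing $T$, written as an increasing union $D = \bigcup_k D_k$ of finite sets with $T \in D_k$. Right-continuity of the paths gives $\sup_{t \in [0,T]} X_t = \sup_{t \in D} X_t$ almost surely, and for any $C' < C$,
\begin{align*}
\Bigl\{\sup_{t \in D} X_t \geq C\Bigr\} \;\subseteq\; \bigcup_{k} \Bigl\{\sup_{t \in D_k} X_t \geq C'\Bigr\},
\end{align*}
with the right-hand events increasing in $k$. Restricting $X$ to each finite set $D_k$ produces a discrete submartingale with terminal value $X_T$, so the discrete inequality yields $\mathbb{P}\bigl[\sup_{t \in D_k} X_t \geq C'\bigr] \leq \mathbb{E}[\max(X_T,0)]/C'$. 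Taking $k \to \infty$ by monotone convergence of events and then sending $C' \uparrow C$ produces the stated continuous bound.

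The main obstacle is the continuous reduction: one must carefully justify that right-continuity lets the supremum over the uncountable index set $[0,T]$ be replaced by a supremum over a countable dense set on a set of full measure, and then handle the strict-versus-weak inequality at the boundary of the event $\{\sup \geq C\}$ when passing to the limit, which is the reason for introducing the auxiliary level $C' < C$. The discrete stopping-time step is essentially an application of optional stopping, and care need only be taken to ensure $T$ belongs to every $D_k$ so that $\mathbb{E}[\max(X_T,0)]$ remains the natural upper bound across the approximations.
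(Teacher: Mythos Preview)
Your proof is correct and follows the standard textbook argument. Note, however, that the paper does not actually prove this lemma: it is stated as a classical result with a citation to Durrett's probability text, and is used as a black box in the proof of Lemma~\ref{lem:bound continuous trajectory}. So there is no ``paper's own proof'' to compare against; your stopping-time argument for the discrete case and the countable-dense-set reduction for the continuous case are exactly the proofs one finds in standard references.
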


\subsection{Proof of Lemma~\ref{lem:convergence of cost}}
In this section, we proof Lemma~\ref{lem:convergence of cost} which refers to the convergence rate of the cost function:

\convergence*
\begin{proof}
For any $\Vert \Delta U\Vert =1$ and $\epsilon>0$, consider $U=U_{*}+\epsilon\Delta U$, we show that as $\epsilon\to 0$, there exists $V\in\mathbb{R}^{d}$ such that $\mathrm{tr}(V)=0$, and 
\begin{align*}
&\int_{t\geq 0}e^{(A+BU)^{\mathrm{T}}t}(Q+U^{\mathrm{T}}RU)e^{(A+BU)t}dt-\int_{t\geq 0}e^{(A+BU_{*})^{\mathrm{T}}t}(Q+U_{*}^{\mathrm{T}}RU_{*})e^{(A+BU_{*})t}dt\\
&=\epsilon V+\mathcal{O}(\epsilon^{2})\,.
\end{align*}

Let $D(\epsilon,t)=e^{(A+B(U_{*}+\epsilon \Delta U))t}-e^{(A+BU_{*})t}$. The most important intuition is that $D(\epsilon,t)$ can be represented by the form of $D(\epsilon,t)=\epsilon D_{1}(t)+\epsilon^{2}D_{2}(\epsilon,t)$, where $D_{1}(t)$ does not depend on $\epsilon$, and the residual $D_{2}(\epsilon,t)$ can be well bounded. Now we find such $D_{1}(t)$ and upper bound $\left\Vert D_{2}(\epsilon,t)\right\Vert$. For $t\leq t_{0}=\frac{1}{\max\{\Vert A+BU_{*}\Vert ,\Vert B\Vert \}}$ and $\epsilon<1$, the Taylor expansion of $e^{(A+B(U_{*}+\epsilon \Delta U))t}$ can be represented as follows:
\begin{align*}
D(\epsilon,t)&=\sum_{k\geq 1}\frac{1}{k!}\left[(A+BU_{*}+\epsilon B\Delta U)^{k}t^{k}-(A+BU_{*})^{k}t^{k}\right]\\
&=\sum_{k\geq 1}\frac{1}{k!}\left[\left(\sum_{i=0}^{k-1}(A+BU_{*})^{i}(B\Delta U_{*})(A+BU_{*})^{k-1-i}\right)\epsilon+D_{1}(\epsilon,k)\epsilon^{2}\right]t^{k}\,,
\end{align*}

where $D_{1}(\epsilon,k)$ is the residual of $(A+BU+\epsilon B\Delta U)^{k}-(A+BU)^{k}$ with order at least $\epsilon^{2}$. This sequence of matrices are expressed and bounded as follows.
\begin{align*}
&D_{1}(k,\epsilon)=\sum_{i=2}^{k}\epsilon^{i}\sum_{ j_{1}+...+j_{i+1}=k-i}(A+BU_{*})^{j_{1}}(B\Delta U)(A+BU_{*})^{j_{2}}(B\Delta U)...(A+BU_{*})^{j_{i+1}}\,,\\
&\Vert D_{1}(k,\epsilon)\Vert \leq \sum_{i=2}^{k} \frac{k!}{i!(k-i)!}\Vert A+BU_{*}\Vert ^{k-i}\Vert B\Vert ^{i}\epsilon^{i-2}\,.
\end{align*}

Thus we have:
\begin{align*}
\left\Vert \sum_{k\geq 1}\frac{t^{k}}{k!}D_{1}(k,\epsilon)\right\Vert \leq \sum_{k\geq 2}\sum_{i\geq 2}\frac{1}{i!(k-i)!}\leq 4\,.
\end{align*}

Define $E(t)$ and $E_{1}(\epsilon,t)$ as follows: for $0\leq t\leq t_{0}$, let
\begin{align*}
E(t)=\sum_{k\geq 1}\frac{t^{k}}{k!}\sum_{i=0}^{k-1}(A+BU_{*})^{i}(B\Delta U_{*})(A+BU_{*})^{k-1-i}, E_{1}(\epsilon,t)=\sum_{k\geq 1}\frac{t^{k}}{k!}D_{1}(k,\epsilon)\,,
\end{align*}
and for $t\in[\frac{1}{2}t_{0},t_{0}]$, $l\geq1$, we inductively define $E(2^{l}t)$ and $E_{1}(2^{l}t)$ as follows:
\begin{align*}
E(2^{l}t)=e^{(A+BU_{*})2^{l-1}t}E(2^{l-1}t)+E(2^{l-1}t)e^{(A+BU_{*})2^{l-1}t}\,,
\end{align*}
\begin{align*}
E_{1}(\epsilon,2^{l}t)=&e^{(A+BU_{*})2^{l-1}t}E_{1}(\epsilon, 2^{l-1}t)+E_{1}(\epsilon, 2^{l-1}t)e^{(A+BU_{*})2^{l-1}t}\\
&+\left(E(2^{l-1}t)+\epsilon E_{1}(\epsilon,2^{l-1}t)\right)^{2}\,.
\end{align*}
Then we have the relation that $e^{(A+BU_{*}+B\Delta U)t}-e^{(A+BU_{*})t}=\epsilon E(t)+\epsilon^{2}E_{1}(\epsilon,t)$.

Now we upper bound $\Vert E(t)\Vert $ and $\Vert E_{1}(\epsilon,t)\Vert $. When $t\leq t_{0}$:

\begin{align*}
\Vert E(t)\Vert \leq\sum_{k\geq 1}\frac{t^{k}}{k!}\sum_{i=0}^{k-1}\left\Vert (A+BU_{*})^{i}(B\Delta U_{*})(A+BU_{*})^{k-1-i}\right\Vert \leq \sum_{k\geq 1}\frac{1}{(k-1)!}=e
\end{align*}
For $t\geq t_{0}$, let $t=2^{l_{1}}t_{1}$, with $l_{1}$ be an integer and $t_{1}\in(\frac{1}{2}t_{0},t_{0}]$, then
\begin{align*}
\Vert E(2^{l_{1}}t_{1})\Vert &=\left\Vert e^{(A+BU_{*})2^{l_1-1}t_{1}}E(t)+E(t)e^{(A+BU_{*})2^{l_1-1}t_{1}}\right\Vert \\
&\leq 2e^{\alpha(A+BU_{*})2^{l_1-1}t_{1}}\left\Vert E(2^{l_1-1}t_{1})\right\Vert \\
&\leq 2^{l_1}e^{1+\alpha(A+BU_{*})2^{l_1-2}t_{0}}\\
&\leq \frac{4}{-\alpha(A+BU_{*})t_{0}}\,,
\end{align*}
where the last inequality is because for any $x,a>0$, $xe^{-ax}\leq \frac{1}{ae}$, and thus for any $t\geq 0$, $\Vert E(t)\Vert \leq C=\frac{4}{-\alpha(A+BU_{*})t_{0}}$.

When $t\geq \frac{2}{-\alpha(A+BU_{*})}$, we additionally have
\begin{align*}
\Vert E(t)\Vert \leq 2e^{\frac{1}{2}\alpha(A+BU_{*})t}\left\Vert E(\frac{t}{2})\right\Vert\leq \frac{4t}{t_{0}}e^{\frac{1}{2}\alpha(A+BU_{*})t}\leq\frac{8}{-\alpha(A+BU_{*})t_{0}}e^{\frac{1}{4}\alpha(A+BU_{*})t}\,.
\end{align*}

Now we consider $E_{1}(\epsilon,t)$. When $t\leq t_{0}$,
\begin{align*}
\Vert E_{1}(\epsilon,t)\Vert \leq \sum_{k\geq 1}\left\Vert \frac{t^{k}}{k!}D_{1}(k,\epsilon)\right\Vert \leq 4\,.
\end{align*}

When $t>t_{0}$, with $t=2^{l}t_{1}$ and $t_{1}\in(\frac{1}{2}t_{0},t_{0}]$, we obtain:
\begin{align*}
&\left\Vert E_{1}(\epsilon,2^{l}t_{1})\right\Vert= \\
&\left\Vert e^{(A+BU_{*})2^{l-1}t_{1}}E_{1}(\epsilon, 2^{l-1}t_{1})+E_{1}(\epsilon, 2^{l-1}t_{1})e^{(A+BU_{*})2^{l-1}t_{1}}+\left(E(2^{l-1}t_{1})+\epsilon E_{1}(\epsilon,2^{l-1}t_{1})\right)^{2}\right\Vert \\
&\leq 2e^{\alpha(A+BU_{*})2^{l-1}t_{1}}\left\Vert E_{1}(\epsilon,2^{l-1}t_{1})\right\Vert +\left\Vert E(2^{l-1}t_{1})+\epsilon E_{1}(\epsilon,2^{l-1}t_{1})\right\Vert ^{2}\\
&\leq 2e^{\alpha(A+BU_{*})2^{l-1}t_{1}}\left\Vert E_{1}(\epsilon,2^{l-1}t_{1})\right\Vert +2\left\Vert E(2^{l-1}t_{1})\right\Vert ^{2}+2\epsilon^{2}\left\Vert E_{1}(\epsilon,2^{l-1}t_{1})\right\Vert ^{2}\,.
\end{align*}

Now, we show that $\left\Vert E_{1}(\epsilon,2^{l}t_{1})\right\Vert$ converges exponentially eventually. The proof consists of two parts: first, for $t$ which is not too large, $\left\Vert E_{1}(\epsilon,t)\right\Vert$ can be bounded uniformly over all possible $\Delta U$ and any constrained $\epsilon$. Then, for larger $t$ we can utilize the construction of $\left\Vert E_{1}(\epsilon,t)\right\Vert$ to estimate its convergence speed.  

Let $\epsilon\leq\frac{-\alpha(A+BU_{*})t_{0}}{(64C)^{2}}$, $l_{0}=1+\lfloor\log_{2}\frac{4}{-\alpha(A+BU_{*})t_{0}}\rfloor$. 
We first inductively show that for any $l\leq l_{0}$, $\left\Vert E_{1}(\epsilon,2^{l}t_{1})\right\Vert \leq (2^{l+3}-4)C^{2}$. The base case where $l=0$ is certainly true. Suppose we already have $\left\Vert E_{1}(\epsilon,2^{l-1}t_{1})\right\Vert \leq (2^{l+2}-4)C^{2}$. Then for the case of $l$, we obtain:

\begin{align*}
\left\Vert E_{1}(\epsilon,2^{l}t_{1})\right\Vert &\leq 2\left\Vert E_{1}(\epsilon,2^{l-1}t_{1})\right\Vert +4C^{2}\leq (2^{l+3}-4)C^{2}\,,
\end{align*}
where for the first inequality we use the inductive hypothesis that 
\begin{align*}
\epsilon\left\Vert E_{1}(\epsilon,2^{l-1}t_{1})\right\Vert\leq 2^{l_{0}+3}C^{2}\epsilon \leq \frac{64}{-\alpha(A+BU_{*})t_{0}}C^{2}\epsilon\leq C\,,
\end{align*}
along with facts that $\left\Vert E(2^{l-1}t_{1})\right\Vert\leq C$ and $2e^{\alpha(A+BU_{*})2^{l-1}t_{1}}\leq 2$.
Specifically, we have \\$\left\Vert E_{1}(\epsilon, 2^{l_{0}}t_{1})\right\Vert \leq \frac{64C^{2}}{-\alpha(A+BU_{*})t_{0}}$.

Now, we consider $l>l_{0}$. We first show that for all such $l$, $\left\Vert E_{1}(\epsilon, 2^{l}t_{1})\right\Vert \leq \frac{64C^{2}}{-\alpha(A+BU_{*})t_{0}}$. Since $2^{l-1}t_{1}\geq 2^{l_{0}-1}t_{0}\geq \frac{2}{-\alpha(A+BU_{*})}$, we have $2e^{\alpha(A+BU_{*})2^{l}t_{1}}\leq 2e^{-2}$, and thus
\begin{align*}
\left\Vert E_{1}(\epsilon,2^{l}t_{1})\right\Vert  &\leq 2e^{\alpha(A+BU_{*})2^{l-1}t_{1}}\left\Vert E_{1}(\epsilon,2^{l-1}t_{1})\right\Vert +2\left\Vert E(2^{l-1}t_{1})\right\Vert ^{2}+2\epsilon^{2}\left\Vert E_{1}(\epsilon,2^{l-1}t_{1})\right\Vert ^{2} \\
&\leq 2e^{-2}\left\Vert E_{1}(\epsilon,2^{l-1}t)\right\Vert  +4C^{2}\\&\leq \frac{64C^{2}}{-\alpha(A+BU_{*})t_{0}}\,,
\end{align*}
which holds for all $l\geq l_{0}$ with induction on $l$. Now we reuse the above expression and obtain that
\begin{align*}
&\left\Vert E_{1}(\epsilon,2^{l}t_{1})\right\Vert  
\\&\leq 2e^{\alpha(A+BU_{*})2^{l-1}t_{1}}\left\Vert E_{1}(\epsilon,2^{l-1}t_{1})\right\Vert +2\left\Vert E(2^{l-1}t_{1})\right\Vert ^{2}+2\epsilon^{2}\left\Vert E_{1}(\epsilon,2^{l-1}t_{1})\right\Vert ^{2} \\
&\leq 2e^{-2^{l-l_{0}}}\frac{64C^{2}}{-\alpha(A+BU_{*})t_{0}}+\frac{128}{\alpha^{2}(A+BU_{*})t^{2}_{0}}e^{-2^{l-l_{0}-1}}+2\epsilon^{2}\left\Vert E_{1}(\epsilon,2^{l-1}t_{1})\right\Vert ^{2}\,.
\end{align*}
Let $l_{*}$ be the smaller integer greater than $l_{0}+1$ which satisfies: 
\begin{align*}
2e^{-2^{l_{*}-l_{0}}}\frac{64C^{2}}{-\alpha(A+BU_{*})t_{0}}+\frac{128}{\alpha^{2}(A+BU_{*})t^{2}_{0}}e^{-2^{l_{*}-l_{0}-1}}\leq \frac{1}{4}\,.
\end{align*}
Then by using the relation that $2\epsilon^{2}\left\Vert E_{1}(\epsilon,2^{l-1}t_{1})\right\Vert ^{2}\leq 2\epsilon^{2}\left(\frac{64C^{2}}{-\alpha(A+BU_{*})t_{0}}\right)^{2}\leq\frac{1}{4}$, we have:
\begin{align*}
    \left\Vert E_{1}(\epsilon,2^{l_{*}}t_{1})\right\Vert\leq \frac{1}{2}\,.
\end{align*}

Now we inductively show that for all $k\geq 0$, 
\begin{align*}
\left\Vert E_{1}(\epsilon,2^{l_{*}+k}t_{1})\right\Vert\leq2^{-2^{k}}\,.
\end{align*}
By using the hypothesis for $k$ and $2\epsilon^{2}\leq\frac{1}{4}$, we obtain:

\begin{align*}
\left\Vert E_{1}(\epsilon,2^{l_{*}+k+1}t_{1})\right\Vert&\leq 2\epsilon^{2}\left\Vert E_{1}(\epsilon,2^{l_{*}+k}t_{1})\right\Vert^{2}+\frac{1}{4}e^{-2^{k+l_{*}-l_{0}}+2^{l_{*}-l_{0}}}\\
&\leq \frac{1}{4}2^{-2^{k+1}}+\frac{1}{4}e^{-2^{k+2}+2^{2}}\\
&\leq 2^{-2^{k+1}}\,,
\end{align*}
leading to the claim. This means there exist some constants $C_{1},c_{1}>0$ depending on $\alpha(A+BU_{*})$ such that for all $t\geq 0$,
$\left\Vert E_{1}(\epsilon,t)\right\Vert\leq C_{1}e^{-c_{1}t}$\,.

Finally, we consider $\int_{t\geq 0}e^{(A+BU)^{\mathrm{T}}t}(Q+U^{\mathrm{T}}RU)e^{(A+BU)t}dt $. Since \\$e^{(A+BU_{*}+\epsilon \Delta U)t}=e^{(A+BU_{*})t}+\epsilon E(t)+\epsilon^{2}E_{1}(\epsilon,t)$, with $\Vert E(t)\Vert \leq \frac{8}{-\alpha(A+BU_{*})t_{0}}e^{\frac{1}{4}\alpha(A+BU_{*})t}$ and bounded $E_{1}(\epsilon,t)$, we obtain:
\begin{align*}
&\int_{t\geq 0}e^{(A+BU)^{\mathrm{T}}t}(Q+U^{\mathrm{T}}RU)e^{(A+BU)t}dt\\
&=\int_{t\geq 0}(e^{(A+BU_{*})^{\mathrm{T}}t}+\epsilon E^{\mathrm{T}}(t)+\epsilon^{2}E_{1}^{\mathrm{T}}(\epsilon,t))(Q+U^{\mathrm{T}}RU)(e^{(A+BU_{*})t}+\epsilon E(t)+\epsilon^{2}E_{1}(\epsilon,t))dt\\
&=\int_{t\geq 0}e^{(A+BU_{*})^{\mathrm{T}}t}(Q+U_{*}^{\mathrm{T}}RU_{*})e^{(A+BU_{*})t}dt\\
&+\epsilon \int_{t\geq 0} E^{\mathrm{T}}(t)(Q+U_{*}^{\mathrm{T}}RU_{*})e^{(A+BU_{*})t}+e^{(A+BU_{*})^{\mathrm{T}}t}(Q+U_{*}^{\mathrm{T}}RU_{*})E(t)dt\\
&+\epsilon \int_{t\geq 0}e^{(A+BU_{*})^{\mathrm{T}}t}\left(\Delta U^{\mathrm{T}}RU_{*}+U^{\mathrm{T}}_{*}R\Delta U\right)e^{(A+BU_{*})t}dt\\
&+\mathcal{O}(\epsilon^{2})\,.
\end{align*}

Where the last term $\mathcal{O}(\epsilon^{2})$ contains any terms with order at least $\epsilon^{2}$, whose norm is at most $C_{2}\epsilon^{2}$ for any $\epsilon\in[0,\epsilon_{0})$ and $\Vert \Delta U\Vert =1$, where the constant $C_{2}$ depends on $A,B,\alpha(A+BU_{*})$ and $\epsilon_{0}$ is some small constant.

For any $\Vert \Delta U\Vert =1$, define $V$ by
\begin{align*}
V&=\int_{t\geq 0} E^{\mathrm{T}}(t)(Q+U_{*}^{\mathrm{T}}RU_{*})e^{(A+BU_{*})t}+e^{(A+BU_{*})^{\mathrm{T}}t}(Q+U_{*}^{\mathrm{T}}RU)E(t)dt\\&+\int_{t\geq 0}e^{(A+BU_{*})^{\mathrm{T}}t}\left(\Delta U^{\mathrm{T}}RU_{*}+U^{\mathrm{T}}R\Delta U\right)e^{(A+BU_{*})t}dt\,,
\end{align*}
then $cost(U)=cost(U_{*})+\epsilon \mathrm{tr}(V)+O(\epsilon^{2})$. 

Since $U_{*}$ minimizes $cost(U)$, $\mathrm{tr}(V)=\lim_{\epsilon\to 0} \epsilon^{-1}(cost(U_{*}+\epsilon\Delta U)-cost(U_{*}))= 0$. Therefore, we obtain that $cost(U)=cost(U_{*})+\mathcal{O}(\epsilon^{2})$.

\end{proof}

\subsection{Proof of Lemma~\ref{lem:approximated optimal cost}}
In this section, we proof Lemma~\ref{lem:approximated optimal cost}.
\approximated*

\begin{proof}
    By definition of $J_{T}$, we have:
\begin{align*}
J_{T}=\mathbb{E}\left[\int_{t=0}^{T}\left(X^{\mathrm{T}}_{t}QX_{t}+U_{t}^{\mathrm{T}}RU_{t}\right)dt\right]=\mathbb{E}\left[\int_{t=0}^{T}X^{\mathrm{T}}_{t}(Q+U^{\mathrm{T}}RU)X_{t}dt\right]\,.
\end{align*}

Since the state transits according to $dX_{t}=AX_{t}dt+BUX_{t}dt+dW_{t}$, we can derive the expression of $X_{t}$ by    $X_{t}=e^{(A+BU)t}X_{0}+\int_{s=0}^{t}e^{(A+BU)(t-s)}dW_{s}$. Then by utilizing this expression we obtain:
\begin{align*}
&\mathbb{E}\left[X^{\mathrm{T}}_{t}(Q+U^{\mathrm{T}}RU)X_{t}\right]\\
&=(e^{(A+BU)t}X_{0})^{\mathrm{T}}(Q+U^{\mathrm{T}}RU)e^{(A+BU)t}X_{0}\\
&+2\mathbb{E}\left[(e^{(A+BU)t}X_{0})^{\mathrm{T}}(Q+U^{\mathrm{T}}RU)\left(\int_{s=0}^{t}e^{(A+BU)(t-s)}dW_{s}\right)\right]\\
&+\mathbb{E}\left[\left(\int_{s=0}^{t}e^{(A+BU)(t-s)}dW_{s}\right)^{\mathrm{T}}(Q+U^{\mathrm{T}}RU)\left(\int_{s=0}^{t}e^{(A+BU)(t-s)}dW_{s}\right)\right]\\
&=X_{0}^{\mathrm{T}}e^{(A+BU)^{\mathrm{T}}t}(Q+U^{\mathrm{T}}RU)e^{(A+BU)t}X_{0}\\
&+tr\left(\int_{s=0}^{t}e^{(A+BU)^{\mathrm{T}}s}(Q+U^{\mathrm{T}}RU)e^{(A+BU)s}ds\right)\\
&=X_{0}^{\mathrm{T}}e^{(A+BU)^{\mathrm{T}}t}(Q+U^{\mathrm{T}}RU)e^{(A+BU)t}X_{0}\\
&+\int_{s=0}^{t}tr\left(e^{(A+BU)^{\mathrm{T}}s}(Q+U^{\mathrm{T}}RU)e^{(A+BU)s}\right)ds\,.
\end{align*}
Then, the expected cost on a trajectory lasting for time $T$ can be computed as:
\begin{align*}
&\mathbb{E}\left[\int_{t=0}^{T}X_t^{\mathrm{T}}(Q+U^{\mathrm{T}}RU)X_{t}dt\right]\\
&=\int_{t=0}^{T}\mathbb{E}\left[X_t^{\mathrm{T}}(Q+U^{\mathrm{T}}RU)X_{t}\right]dt\\
&=\int_{t=0}^{T}X_{0}^{\mathrm{T}}e^{(A+BU)^{\mathrm{T}}t}(Q+U^{\mathrm{T}}RU)e^{(A+BU)t}X_{0}dt\\
&+\int_{t=0}^{T}(T-t)tr\left(e^{(A+BU)^{\mathrm{T}}t}(Q+U^{\mathrm{T}}RU)e^{(A+BU)t}\right)dt\\
&=\int_{t=0}^{T}X_{0}^{\mathrm{T}}e^{(A+BU)^{\mathrm{T}}t}(Q+U^{\mathrm{T}}RU)e^{(A+BU)t}X_{0}dt+cost(U)T\\
&-\int_{t=0}^{T}tr\left(e^{(A+BU)^{\mathrm{T}}t}(Q+U^{\mathrm{T}}RU)e^{(A+BU)t}\right)tdt\\
&-T\int_{t=T}^{+\infty}tr\left(e^{(A+BU)^{\mathrm{T}}t}(Q+U^{\mathrm{T}}RU)e^{(A+BU)t}\right)dt\,.
\end{align*}
Here the first term satisfies
\begin{align*}
\left\vert\int_{t=0}^{T}X_{0}^{\mathrm{T}}e^{(A+BU)^{\mathrm{T}}t}(Q+U^{\mathrm{T}}RU)e^{(A+BU)t}X_{0}dt\right\vert&\leq \int_{t\geq 0}e^{2\alpha(A+BU)t}\left\Vert X_{0}\right\Vert ^{2}_{2}dt\\
&\leq\frac{1}{-2\alpha(A+BU)}\left\Vert X_{0}\right\Vert ^{2}_{2}\,,
\end{align*}
and the latter two integral terms can be bounded as follows.
\begin{align*}
&\left\vert\int_{t=0}^{T}tr\left(e^{(A+BU)^{\mathrm{T}}t}(Q+U^{\mathrm{T}}RU)e^{(A+BU)t}\right)tdt\right\vert\\
&\leq \int_{t\geq 0}d\cdot e^{2\alpha(A+BU)t}\left\Vert Q+U^{\mathrm{T}}RU\right\Vert tdt\\
&\leq \frac{d\left\Vert Q+U^{\mathrm{T}}RU\right\Vert }{4\alpha^{2}(A+BU)}\,,
\end{align*}
\begin{align*}
&\left\vert T\int_{t=T}^{+\infty}tr\left(e^{(A+BU)^{\mathrm{T}}t}(Q+U^{\mathrm{T}}RU)e^{(A+BU)t}\right)dt\right\vert\\
&\leq T\int_{t\geq T}d\cdot e^{2\alpha(A+BU)t}\left\Vert Q+U^{\mathrm{T}}RU\right\Vert dt\\
&\leq \frac{Td\left\Vert Q+U^{\mathrm{T}}RU\right\Vert }{-2\alpha(A+BU)}e^{2\alpha(A+BU)T}\\
&\leq \frac{d\left\Vert Q+U^{\mathrm{T}}RU\right\Vert }{4\alpha^{2}(A+BU)}\,.
\end{align*}
Therefore, for $C_{2}\geq -\frac{1}{2\alpha(A+BU)}$ and $C_{3}\geq \frac{d\left\Vert Q+U^{\mathrm{T}}RU\right\Vert }{2\alpha^{2}(A+BU)}$, we have
\begin{align*}
\left\vert J_{T}-cost(U)T\right\vert \leq C_{2}\Vert x\Vert ^{2}_{2}+C_{3}\,.
\end{align*}
\end{proof}

\subsection{Proof of Lemma~\ref{lem:estimating expected cost}}
\label{app:final}
Finally, we prove Lemma~\ref{lem:estimating expected cost}. In this part we suppose $T\geq T_{0}$, where $T_{0}\geq 1$ is a constant depending on some hidden constants and $\Vert X_{0}\Vert^{2}_{2}$. 

\begin{lemma}{regret}
\label{lem:estimating expected cost}
Let $U_{t}$ be the action applied as in Algorithm~\ref{alg:3}. Then there exists a constant $C\in poly(\kappa,M, \mu^{-1}, \vert\alpha(A+BK)\vert^{-1},  \vert\alpha(A+BK_{*})\vert^{-1})$ such that for sufficiently large $T$:
\begin{align*}
&\mathbb{E}\left[\int_{t=0}^{\sqrt{T}}\left(X^{\mathrm{T}}_{t}QX_{t}+U^{\mathrm{T}}_{t}RU_{t}\right)dt \right]\leq C\cdot \sqrt{T}\,,\\   
&\mathbb{E}\left[\int_{t=\sqrt{T}}^{T}\left(X^{\mathrm{T}}_{t}QX_{t}+U^{\mathrm{T}}_{t}RU_{t}\right)dt \right]\leq C\cdot \sqrt{T}+J^{*}_{T}\,.   
\end{align*}
\end{lemma}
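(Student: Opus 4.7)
The plan is to split the analysis along the two phases of Algorithm~\ref{alg:3} and, inside the exploitation phase, perform a tail-integration over a continuum of high-probability ``good'' events. For exploration, the law $U_t = KX_t + u_k$ gives $dX_t = (A+BK)X_t\,dt + Bu_k\,dt + dW_t$ with $A+BK$ stable. Writing $X_t$ via the variation-of-constants formula and invoking $\|e^{(A+BK)t}\| \le e^{\alpha(A+BK)t}$ from Lemma~\ref{lem:matrix}, together with $\mathbb{E}[\|u_k\|_2^2] = p$, yields $\mathbb{E}[\|X_t\|_2^2] \le C_0$ uniformly in $t$. Since $\|U_t\|_2^2 \le 2\|K\|^2\|X_t\|_2^2 + 2\|u_k\|_2^2$, the instantaneous expected cost is bounded by a constant depending only on $M,\kappa,\mu^{-1},|\alpha(A+BK)|^{-1}$, and integrating over $[0,\sqrt{T}]$ produces the first stated bound.

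For the exploitation phase I would introduce, for each $\delta \in [e^{-cT^{1/4}},\,1/2]$, a ``good'' event $\mathcal{E}_\delta$ on which three things hold simultaneously: by Theorem~\ref{thm:main-stable} applied to the exploration data of length $\sqrt{T}$, $\|\hat A - A\| + \|\hat B - B\| \le C_1 T^{-1/4}\sqrt{\log(1/\delta)}$; by Lemma~\ref{lem:convergence of P} and~\eqref{eq: upper bound K}, $\hat A$ is then stable with $\|P\| < T^{1/5}$ and $\|\bar K - K_*\| \le C_3 T^{-1/4}\sqrt{\log(1/\delta)}$; and by Lemma~\ref{lem:bound continuous trajectory} the state obeys $\sup_{t \in [0,T]}\|X_t\|_2 \le C_4\sqrt{\log(T/\delta)} < T^{1/5}$. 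Hence on $\mathcal{E}_\delta$ the reset safeguard in Algorithm~\ref{alg:3} never triggers and $U_t = \bar K X_t$ is used throughout $[\sqrt{T},T]$ with $A+B\bar K$ stable.

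On $\mathcal{E}_\delta$, Lemma~\ref{lem:convergence of cost} gives $\mathrm{cost}(\bar K) - \mathrm{cost}(K_*) \le C_5 \|\bar K - K_*\|^2 \le C_6 T^{-1/2}\log(1/\delta)$, and Lemma~\ref{lem:approximated optimal cost} applied once to $\bar K$ on $[\sqrt{T},T]$ with initial state $X_{\sqrt{T}}$ and once to $K_*$ on $[0,T]$ with initial state $X_0$ bounds both $|\mathbb{E}\int_{\sqrt{T}}^T (X_t^\top Q X_t + U_t^\top R U_t)\,dt - \mathrm{cost}(\bar K)(T-\sqrt{T})|$ and $|J_T^* - \mathrm{cost}(K_*)T|$ by $C_7(\log(T/\delta)+1)$. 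Chaining these estimates yields $\mathbb{E}[\text{exploitation cost} \mid \mathcal{E}_\delta] \le J_T^* + C_8 \sqrt{T}\log(1/\delta) + C_9 \log(T/\delta)$. Writing $\mathbb{E}[Y] = \int_0^\infty \mathbb{P}[Y > x]\,dx$ for $Y = \max(0,\text{exploitation cost} - J_T^* - C_9 \log T)$ and substituting $\delta = e^{-x/(C_8 \sqrt{T})}$ turns the $\log(1/\delta)$ dependence into an $O(\sqrt{T})$ integral, producing the claimed bound on $\bigcup_\delta \mathcal{E}_\delta$.

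The residual event $\bigcap_\delta \mathcal{E}_\delta^c$ has probability at most $e^{-\Omega(T^{1/4})}$. On it, either $\bar K$ is discarded at time $\sqrt{T}$ (so the known stabilizer $K$ is used from the start and the exploration-phase bound applies) or the trajectory eventually reaches $\|X_{t_0}\|_2 = T^{1/5}$ and is reset to $K$; in either case the worst-case cost is $O(T \cdot T^{2/5}) = O(T^{7/5})$, and $e^{-\Omega(T^{1/4})} \cdot T^{7/5} = o(1)$. The main technical obstacle is the tail integration: a pointwise high-probability bound at constant $\delta$ would only give $O(\sqrt{T}\log T)$, so extracting $O(\sqrt{T})$ requires exploiting the $\sqrt{\log(1/\delta)}$ scaling of the identification error together with sub-Gaussian concentration, then carefully splitting off the ultra-rare tail where the naive bound breaks down; verifying that the reset clause keeps the cost polynomially bounded on that tail is what makes the two pieces combine cleanly.
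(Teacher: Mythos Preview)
Your exploration-phase argument is fine and matches the paper's. The exploitation-phase argument has the right ingredients but a genuine logical gap in the tail-integration step.

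You derive, on $\mathcal{E}_\delta$, a bound of the form $\mathbb{E}[\text{exploitation cost}\mid \mathcal{E}_\delta]\le J_T^*+C_8\sqrt{T}\log(1/\delta)$ by invoking Lemma~\ref{lem:approximated optimal cost}. But that lemma controls the \emph{expected} cost $J_T$ (an expectation over future noise, given $\bar K$ and $X_{\sqrt T}$), not the realized cost pathwise. So what you actually have is a bound on the $\mathcal{F}_{\sqrt T}$-conditional expectation, valid on the $\mathcal{F}_{\sqrt T}$-measurable part of $\mathcal{E}_\delta$. From a conditional-expectation bound one cannot conclude $\mathbb{P}[Y>C_8\sqrt{T}\log(1/\delta)]\le \delta$; the layer-cake formula $\mathbb{E}[Y]=\int_0^\infty\mathbb{P}[Y>x]\,dx$ therefore does not apply with the substitution $\delta=e^{-x/(C_8\sqrt{T})}$ you propose. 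If you fix a single $\delta$ and decompose $\mathbb{E}[\text{cost}]=\mathbb{E}[\text{cost}\cdot 1_{\mathcal{E}_\delta}]+\mathbb{E}[\text{cost}\cdot 1_{\mathcal{E}_\delta^c}]$, the first term still carries the $\log(1/\delta)$ factor and you are stuck at $O(\sqrt{T}\log(1/\delta))$.

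The paper repairs this by applying tail integration to the quantity that \emph{does} admit a genuine pointwise tail bound, namely $\|\bar K-K_*\|^2$. It fixes a single good event $\mathcal{E}_3=\{\|\bar K-K_*\|\le \epsilon_3\}$ (not a $\delta$-family), uses Lemma~\ref{lem:approximated optimal cost} and Lemma~\ref{lem:convergence of cost} to write the good-event cost as $\mathrm{cost}(K_*)\,T + C\,T\cdot \mathbb{E}\bigl[\|\bar K-K_*\|^2 \cdot 1_{\mathcal{E}_3}\bigr]+\text{lower order}$, and then computes
\[
\mathbb{E}\bigl[\|\bar K-K_*\|^2 \cdot 1_{\mathcal{E}_3}\bigr]\le \int_0^{\epsilon_3^2}\mathbb{P}\bigl[\|\bar K-K_*\|^2>x\bigr]\,dx\le \int_0^\infty e^{-T^{1/2}x/(4C_1^2C_2^2)}\,dx=O(T^{-1/2}),
\]
which is exactly the integration you intended, but carried out on the right random variable. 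Your continuum of events $\mathcal{E}_\delta$ is morally the same device; the fix is simply to move the tail integration from the cost to $\|\bar K-K_*\|^2$.

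A minor point: on the bad event the worst-case instantaneous cost is $O(T^{4/5})$, not $O(T^{2/5})$, because Algorithm~\ref{alg:3} only guarantees $\|P\|<T^{1/5}$, so $\|\bar K\|$ and hence $\|U_t\|$ can be of order $T^{2/5}$; the total is $O(T^{9/5})$, which is still killed by the $e^{-\Omega(T^{1/4})}$ probability.
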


Define the following events where the stabilizing controller $K$ might ever be applied during the exploitation phase. Let $\mathcal{E}_{1}=\left\{\Vert X_{\sqrt{T}}\Vert_{2}\geq \frac{1}{2}T^{1/5}    \right\}$, $\mathcal{E}_{2}=\left\{ \Vert X_{t}\Vert_{2}\geq T^{1/5}\, for\, some\, t\in[\sqrt{T},T]     \right\}$, and
$\mathcal{E}_{3}=\left\{ \Vert \bar{K}-K_{*}\Vert\leq\epsilon_{3}   \right\}$, where $\epsilon_{3}>0$ depends on the constant $\epsilon_{0}$ in Lemma~\ref{lem:convergence of cost}, which will be determined later. In this part, we again let $C_{1},C_{2}$ be the same as in ~\eqref{eq: upper bound K}, and denote $C_{3}$ be the constant $C_{1}$ in Lemma~\ref{lem:convergence of cost}. We firstly analyze these three events.

\paragraph{Upper Bound of $\mathbb{P}[\mathcal{E}_{1}]$}
By Lemma~\ref{lem:bound continuous trajectory}, we can find some constant $C_{0}$ depending on $\|A\|,\|B\|,\|K\|,d,p,h$ such that 

\begin{align*}
\mathbb{P}\left[\Vert X_{\sqrt{T}}\Vert_{2}\geq C_{0}\sqrt{\log(2T/\delta)}     \right]\leq \delta\,.
\end{align*}

This is because we have the recursive function of $\{X_{kh}\}$ that
\begin{align*}
X_{(k+1)h}=e^{(A+BK)h}X_{kh}+\int_{t=0}^{h}e^{(A+BK)(h-t)}dW_{kh+t}+\int_{t=0}^{h}e^{(A+BK)(h-t)}u_{k}dt\,,
\end{align*}
from which we can derive that
\begin{align*}
&X_{kh}\\&=e^{(A+BK)kh}X_{0}+\int_{t=0}^{kh}e^{(A+BK)(kh-t)}dW_{t}+\sum_{i=0}^{k-1}e^{(A+BK)(k-i-1)h}\left(\int_{t=0}^{h}e^{(A+BK)t}dt\right)u_{i}\,.
\end{align*}

Then, for sufficiently large $T$, $\left\Vert e^{(A+BK)\sqrt{T}} X_{0}\right\Vert_{2}$ can be bounded by $1$, and from the proof in Lemma~\ref{lem:bound continuous trajectory} we can apply similar idea to upper bound the norm of the last two terms. So we can obtain the probability bound on $\Vert X_{\sqrt{T}}\Vert_{2}$.

By setting $\delta=2T\cdot e^{-\frac{T^{1/5}}{4C_{0}^{2}}}$, we obtain that $\mathbb{P}[\mathcal{E}_{1}]\leq 2T\cdot e^{-\frac{T^{1/5}}{4C_{0}^{2}}}$.

\paragraph{Upper Bound of $\mathbb{P}[\mathcal{E}^{C}_{3}]$}
By~\eqref{eq: upper bound P}, we obtain that, for $\epsilon_{3}\leq \frac{C_{1}\epsilon_{1}}{T^{1/8}4C^{2}_{2}}$, we have:

\begin{align*}
\mathbb{P}\left[\Vert \bar{K}-K_{*}\Vert\geq x\right]\leq e^{-\frac{T^{1/2}x^{2}}{4C_{1}^{2}C_{2}^{2}}}\,\forall x\leq\epsilon_{3}\,,
\end{align*}
and we also have: $\mathbb{P}[\mathcal{E}^{C}_{3}]\leq e^{-\frac{T^{1/2}\epsilon_{3}^{2}}{4C_{1}^{2}C_{2}^{2}}}$.

By setting $\epsilon_{3}= \frac{C_{1}\epsilon_{1}}{T^{1/8}4C^{2}_{2}}$, we have: $\mathbb{P}[\mathcal{E}^{C}_{3}]\leq e^{-\frac{T^{1/4}\epsilon_{1}^{2}}{64C_{2}^{2}}}$.

\paragraph{Upper Bound of $\mathbb{P}[\mathcal{E}_{2}]$}
Consider any $\Vert X_{\sqrt{T}}\Vert_{2}\leq \frac{1}{2}T^{1/5}$ and any $\Vert \bar{K}-K_{*}\Vert\leq \epsilon_{3}$, we claim that $\mathbb{P}\left[\mathcal{E}_{2}\big\vert X_{\sqrt{T}},\bar{K}\right]\leq e^{-\Omega(T^{1/5})}$.

As what have discussed in Lemma~\ref{lem:convergence of P} (see the discussion about stable margin near ~\eqref{stable margin}), such $\bar{K}$ satisfies $\alpha(A+B\bar{K})\leq \frac{1}{2}\alpha(A+BK_{*})$.

Then by Lemma~\ref{lem:bound continuous trajectory} we can derive that, for some constant $C$, 
\begin{align*}
\mathbb{P}\left[\sup_{t\in[\sqrt{T},T]}\Vert X_{t}\Vert_{2}-\Vert X_{\sqrt{T}}\Vert_{2}\leq \frac{1}{2}T^{1/5}\right]\leq CTe^{-\frac{T^{1/5}}{C}}\leq e^{-\Omega(T^{1/5})}\,.
\end{align*}
Therefore, 
\begin{align*}
\mathbb{P}\left[\mathcal{E}_{2}\right]&\leq 1-\mathbb{P}\left[\mathcal{E}^{C}_{1}\cap \mathcal{E}_{3}\right]+e^{-\Omega(T^{1/5})}\mathbb{P}\left[\mathcal{E}^{C}_{1}\cap \mathcal{E}_{3}\right]\\
&\leq \mathbb{P}\left[\mathcal{E}_{1}\right]+\mathbb{P}\left[\mathcal{E}^{C}_{3}\right]+e^{-\Omega(T^{1/5})}\\
&\leq e^{-\Omega(T^{1/5})}\,.
\end{align*}

Now we come to estimate the expected cost of Algorithm~\ref{alg:3}, as well as bound the regret. We separately calculate the cost during the two phases.

\paragraph{Cost During Exploration Phase} 
For $(k+1)h\leq \sqrt{T}$ and $t\in[0,h]$, we have:
\begin{align*}
&X_{kh+t}=e^{(A+BK)t}X_{kh}+\int_{s=kh}^{kh+t}e^{(A+BK)(kh+t-s)}dW_{s}+\left(\int_{s=0}^{t}e^{(A+BK)s}ds\right)u_{k}\,.
\end{align*}
Then
\begin{align*}
\mathbb{E}\left[X^{\mathrm{T}}_{kh+t}QX_{kh+t}+U^{\mathrm{T}}_{kh+t}RU_{kh+t}\right]
&=\mathbb{E}\left[X^{\mathrm{T}}_{kh+t}(Q+K^{\mathrm{T}}RK)X_{kh+t}+u^{\mathrm{T}}_{k}Ru_{k}\right]+2\mathbb{E}\left[X^{\mathrm{T}}_{kh+t}K^{\mathrm{T}}Ru_{k}\right]\\
&\leq \mathbb{E}\left[X^{\mathrm{T}}_{kh+t}(Q+K^{\mathrm{T}}RK)X_{kh+t}+u^{\mathrm{T}}_{k}Ru_{k}\right]\\
&+2\mathbb{E}\left[u^{\mathrm{T}}_{k}\left(\int_{s=0}^{t}e^{(A+BK)s}ds\right)^{\mathrm{T}}K^{\mathrm{T}}Ru_{k}   \right]\,,
\end{align*}
where the inequality is because $u_{k}$ is independent of $X_{kh}$ and $W_{s}(s\in[kh,kh+t])$.

For the first term, we first upper bound $\mathbb{E}\left[\Vert X_{kh+t}\Vert^{2}_{2} \right]$.

Denote $w_{k,t}=\int_{s=kh}^{kh+t}e^{(A+BK)(kh+t-s)}dW_{s}+\left(\int_{s=0}^{t}e^{(A+BK)s}ds\right)u_{k}$, which is a Gaussian variable with zero mean and is independent of $X_{kh}$. Then 
\begin{align*}
\mathbb{E}\left[\Vert X_{kh+t}\Vert^{2}_{2} \right]&=\mathbb{E}\left[\left\Vert e^{(A+BK)t}X_{kh}+w_{k,t}\right\Vert^{2}_{2} \right]\\
&=\mathbb{E}\left[\left\Vert e^{(A+BK)t} X_{kh}\right\Vert^{2}_{2} \right]+\mathbb{E}\left[\Vert w_{k,t}\Vert^{2}_{2} \right]\\
&\leq \mathbb{E}\left[\left\Vert X_{kh}\right\Vert^{2}_{2} \right]+\mathbb{E}\left[\Vert w_{k,t}\Vert^{2}_{2} \right]\,.
\end{align*}

For $\mathbb{E}\left[\Vert X_{kh}\Vert_{2}^{2}\right]$, since 
\begin{align*}
X_{kh}=e^{(A+BK)h}X_{0}+\int_{t=0}^{kh}e^{(A+BK)(kh-t)}dW_{t}+\sum_{i=0}^{k-1}e^{(A+BK)(k-i-1)h}\left(\int_{t=0}^{h}e^{(A+BK)t}dt\right)u_{i}\,.
\end{align*}
We have
\begin{align*}
\mathbb{E}\left[\Vert X_{kh}\Vert^{2}_{2}  \right]&=\left\Vert e^{(A+BK)kh}X_{0}\right\Vert^{2}_{2}+\mathbb{E}\left[\left\Vert \int_{t=0}^{kh}e^{(A+BK)(kh-t)}dW_{t}  \right\Vert^{2}_{2}\right] \\ &+\sum_{i=0}^{k-1}\mathbb{E}\left[\left\Vert e^{(A+BK)(k-i-1)h}\left(\int_{t=0}^{h}e^{(A+BK)t}dt\right)u_{i} \right\Vert^{2}_{2}\right] \\
&\leq e^{2\alpha(A+BK)\cdot kh}\Vert X_{0}\Vert^{2}_{2}+tr\left(\int_{t=0}^{kh}e^{(A+BK)t}e^{(A+BK)^{\mathrm{T}}t}dt\right)\\
&+\sum_{i=0}^{k-1}tr\left(\left[e^{(A+BK)ih}\left(\int_{t=0}^{h}e^{(A+BK)t}dt\right)\right]\left[e^{(A+BK)ih}\left(\int_{t=0}^{h}e^{(A+BK)t}dt\right)\right]^{\mathrm{T}}\right)\\
\end{align*}

Therefore, we have
\begin{align*}
\mathbb{E}\left[\Vert X_{kh}\Vert^{2}_{2}  \right]
&\leq e^{2\alpha(A+BK)\cdot kh}\Vert X_{0}\Vert^{2}_{2}+\int_{t=0}^{kh}d\cdot e^{2\alpha(A+BK)t}dt+\sum_{i=0}^{k-1}d\cdot e^{2\alpha(A+BK)ih}\cdot h^{2}\\
&\leq e^{2\alpha(A+BK)\cdot kh}\Vert X_{0}\Vert^{2}_{2}+\frac{d}{-2\alpha(A+BK)}+\frac{dh^{2}}{1-e^{2\alpha(A+BK)h}}\\
&\leq C_{3}+e^{2\alpha(A+BK)\cdot kh}\Vert X_{0}\Vert^{2}_{2}\,.
\end{align*}
Where $C_{3}$ is a constant depending on $\alpha(A+BK)$ and $d$.

For the second term $\mathbb{E}\left[\Vert w_{k,t}\Vert^{2}_{2}\right]$, can follow the same process of the above bound and obtain $\mathbb{E}\left[\Vert w_{k,t}\Vert^{2}_{2}\right]\leq C_{3}$.
Therefore, $\mathbb{E}\left[\Vert X_{kh+t}\Vert^{2}_{2}\right]\leq 2C_{3}$.

Now we can upper bound $\mathbb{E}\left[X^{\mathrm{T}}_{kh+t}QX_{kh+t}+U^{\mathrm{T}}_{kh+t}RU_{kh+t}\right]$. 
We have
\begin{align*}
\mathbb{E}\left[X^{\mathrm{T}}_{kh+t}(Q+K^{\mathrm{T}}RK)X_{kh+t}\right]\leq \mathbb{E}\left[\left\Vert Q+K^{\mathrm{T}}RK\right\Vert \Vert X_{kh+t}\Vert^{2}_{2}\right]\leq \left\Vert Q+K^{\mathrm{T}}RK\right\Vert\mathbb{E}\left[\Vert X_{kh+t}\Vert^{2}_{2}\right]\,,
\end{align*}
We also have $\mathbb{E}\left[u^{\mathrm{T}}_{k}Ru_{k}\right]=tr(R)$ and the following inequality:
\begin{align*}
\mathbb{E}\left[u^{\mathrm{T}}_{k}\left(\int_{s=0}^{t}e^{(A+BK)s}ds\right)^{\mathrm{T}}K^{\mathrm{T}}Ru_{k}   \right]\leq (d+p)\cdot \left\Vert \left(\int_{s=0}^{t}e^{(A+BK)s}ds\right)^{\mathrm{T}}K^{\mathrm{T}}R \right\Vert\leq (d+p)h\Vert KR\Vert\,.
\end{align*}
We can conclude that there exists constant $C_{4}$ depending on $A,B,K,Q,R,d,p,h$ such that 
\begin{align*}
\mathbb{E}\left[X^{\mathrm{T}}_{kh+t}QX_{kh+t}+U^{\mathrm{T}}_{kh+t}RU_{kh+t}\right]\leq C_{4}\left(1+e^{2\alpha(A+BK)\cdot (kh+t)}\Vert X_{0}\Vert^{2}_{2}\right)\,, \forall k,t    \,.
\end{align*}
Therefore, the cost during exploration phase can be bounded as
\begin{align}\label{cost during exploration}
\mathbb{E}\left[\int_{t=0}^{\sqrt{T}}\left(X^{\mathrm{T}}_{kh+t}QX_{kh+t}+U^{\mathrm{T}}_{kh+t}RU_{kh+t}\right)dt\right]\leq C_{4}\left(\sqrt{T}+\frac{\Vert X_{0}\Vert^{2}_{2}}{-2\alpha(A+BK)}\right)\,.
\end{align}

\paragraph{Cost During Exploitation Phase}
We first concentrate on $\mathcal{E}_{2}$, which is the hardest event for the analysis of the cost. Consider the following two cases:

\textbf{Case 1:} $\Vert X_{\sqrt{T}}\Vert_{2}\geq T^{1/5}$. In this case, the action is applied by $U_{t}=KX_{t},t\in[\sqrt{T},T]$.

\textbf{Case 2:} $\Vert X_{\sqrt{T}}\Vert_{2}< T^{1/5}$. In this case, the trajectory is unfortunately controlled by a bad controller, and suffers from large risk of diverging.

We first consider \textbf{Case 1}. By~\eqref{transformed form} we can derive that
\begin{align*}
X_{t}=e^{(A+BK)(t-\sqrt{T})}X_{\sqrt{T}}+\int_{s=\sqrt{T}}^{t}e^{(A+BK)(t-s)}dW_{s}\,.
\end{align*}

Then, we have:
\begin{align*}
&\mathbb{E}\left[X^{\mathrm{T}}_{t}QX_{t}+U^{\mathrm{T}}_{t}RU_{t}\right]\\&=\mathbb{E}\left[X^{\mathrm{T}}_{t}(Q+K^{\mathrm{T}}RK)X_{t}\right]\\
&\leq \left\Vert Q+K^{\mathrm{T}}RK\right\Vert \mathbb{E}\left[\Vert X_{t}\Vert^{2}_{2}\right]\\
&\leq \left\Vert Q+K^{\mathrm{T}}RK\right\Vert\left[\Vert X_{\sqrt{T}}\Vert^{2}_{2}+\int_{s=\sqrt{T}}^{t}tr\left(e^{(A+BK)(t-s)}e^{(A+BK)^{\mathrm{T}}(t-s)}\right)dt \right]\\
&\leq \left\Vert Q+K^{\mathrm{T}}RK\right\Vert\left[\Vert X_{\sqrt{T}}\Vert^{2}_{2}+\int_{s=\sqrt{T}}^{t}d\cdot e^{2\alpha(A+BK)(t-s)}dt \right]\,.
\end{align*}

Therefore, for some constants $C_{5}, C_{6}$, we have:
\begin{align*}
\mathbb{E}\left[X^{\mathrm{T}}_{t}QX_{t}+U^{\mathrm{T}}_{t}RU_{t}\right]\leq C_{5}\Vert X_{\sqrt{T}}\Vert^{2}_{2}+C_{6}\,.
\end{align*}

Now we consider \textbf{Case 2}. Let $t_{0}=\inf_{t}\{\|X_{t}\|_{2}\geq T^{1/5}, t\geq \sqrt{T}\}$, then $\Vert X_{t_{0}}\Vert_{2}=T^{1/5}$ almost surely. 

For $t\in[\sqrt{T},t_{0}]$, since we always have
\begin{align*}
\Vert U_{t}\Vert_{2} \leq \max\left\{\Vert K\Vert, \left\Vert R^{-1}B^{\mathrm{T}}P\right\Vert\right\}\Vert X_{t}\Vert_{2}\leq \left(\Vert K\Vert+\left\Vert R^{-1}B^{\mathrm{T}}\right\Vert T^{1/5}\right)T^{1/5}\,,   
\end{align*}
the cost satisfies:
\begin{align*}
X^{\mathrm{T}}_{t}QX_{t}+U^{\mathrm{T}}_{t}RU_{t}\leq C_{7}T^{4/5}\,.
\end{align*}
Where $C_{7}$ is a constant depending on $B,R,K,P$.

For $t\in[t_{0},T]$, the trajectory $X_{t}$ satisfies
\begin{align*}
    X_{t}=e^{(A+BK)(t-t_{0})}X_{t_{0}}+\int_{s=t_{0}}^{t}e^{(A+BK)(t-s)}dW_{s}\,.
\end{align*}
Similar to the analysis for \textbf{Case 1}, we have:
\begin{align*}
\mathbb{E}\left[X^{\mathrm{T}}_{t}QX_{t}+U^{\mathrm{T}}_{t}RU_{t}\right]\leq C_{5}T^{2/5}+C_{6}
\end{align*}

Combining them, we can conclude that for some constant $C_{8}$, no matter whether $\mathcal{E}_{2}$ happens, we always have:
\begin{align*}
\mathbb{E}\left[X^{\mathrm{T}}_{t}QX_{t}+U^{\mathrm{T}}_{t}RU_{t}\right]\leq C_{8}\left[T^{4/5}+\Vert X_{\sqrt{T}}\Vert^{2}_{2}\right]\, \forall t\in[\sqrt{T},T]\,.
\end{align*}

Now we establish the upper bound for the regret. Since
\begin{align*}
1=1_{\mathcal{E}^{C}_{1}\cap \mathcal{E}_{3}}+1_{\mathcal{E}_{1}}+1_{\mathcal{E}^{C}_{1}\cap\mathcal{E}^{C}_{3}}
\end{align*}
Then we can rewrite $\mathbb{E}\left[\int_{t=\sqrt{T}}^{T}\left(X^{\mathrm{T}}_{t}QX_{t}+U^{\mathrm{T}}_{t}RU_{t}\right)dt\right]$ as
\begin{align*}
&\mathbb{E}\left[\int_{t=\sqrt{T}}^{T}\left(X^{\mathrm{T}}_{t}QX_{t}+U^{\mathrm{T}}_{t}RU_{t}\right)dt\right]\\
&=\mathbb{E}\left[\int_{t=\sqrt{T}}^{T}\left(X^{\mathrm{T}}_{t}QX_{t}+U^{\mathrm{T}}_{t}RU_{t}\right)dt\cdot 1_{\mathcal{E}^{C}_{1}\cap \mathcal{E}_{3}}\right]\\
&+\mathbb{E}\left[\int_{t=\sqrt{T}}^{T}\left(X^{\mathrm{T}}_{t}QX_{t}+U^{\mathrm{T}}_{t}RU_{t}\right)dt\cdot 1_{\mathcal{E}_{1}} \right]\\
&+\mathbb{E}\left[\int_{t=\sqrt{T}}^{T}\left(X^{\mathrm{T}}_{t}QX_{t}+U^{\mathrm{T}}_{t}RU_{t}\right)dt\cdot1_{\mathcal{E}^{C}_{1}\cap\mathcal{E}^{C}_{3}} \right]\,.
\end{align*}
For the first term, we can upper bound it by
\begin{align*}
&\mathbb{E}\left[\int_{t=\sqrt{T}}^{T}\left(X^{\mathrm{T}}_{t}QX_{t}+U^{\mathrm{T}}_{t}RU_{t}\right)dt\cdot 1_{\mathcal{E}^{C}_{1}\cap \mathcal{E}_{3}}\right]\\
&\leq \mathbb{E}\left[\int_{t=\sqrt{T}}^{T}\left(X^{\mathrm{T}}_{t}QX_{t}+U^{\mathrm{T}}_{t}RU_{t}\right)dt\cdot 1_{\mathcal{E}^{C}_{1}\cap \mathcal{E}^{C}_{2}\cap\mathcal{E}_{3}}\right]\\
&+\mathbb{E}\left[\int_{t=\sqrt{T}}^{T}\left(X^{\mathrm{T}}_{t}QX_{t}+U^{\mathrm{T}}_{t}RU_{t}\right)dt\cdot 1_{\mathcal{E}^{C}_{1}\cap \mathcal{E}_{2}}\right]\\
&\leq \mathbb{E}\left[\left(cost\left(R^{-1}B^{\mathrm{T}}P\right)T+C_{9}\Vert X_{\sqrt{T}}\Vert^{2}_{2}\right) \cdot 1_{\mathcal{E}^{C}_{1}\cap\mathcal{E}_{3}}\right]
+\mathbb{E}\left[C_{8}\left(T^{4/5}+\Vert X_{\sqrt{T}}\Vert^{2}_{2}\right)\cdot 1_{\mathcal{E}^{C}_{1}\cap \mathcal{E}_{2}}\right]\\
&\leq C_{9}T^{2/5}+cost(R^{-1}B^{\mathrm{T}}P_{*})T+C_{10}T\mathbb{E}\left[\Vert \bar{K}-K_{*}\Vert^{2}\cdot 1_{\mathcal{E}_{3}}\right] +2C_{8}T^{4/5}\cdot \mathbb{E}\left[1_{\mathcal{E}^{C}_{1}\cap \mathcal{E}_{2}}\right]\,.
\end{align*}
Here the first inequality is because $1_{\mathcal{E}^{C}_{1}\cap \mathcal{E}_{3}}=1_{\mathcal{E}^{C}_{1}\cap \mathcal{E}^{C}_{2}\cap\mathcal{E}_{3}}+1_{\mathcal{E}^{C}_{1}\cap \mathcal{E}_{2}\cap\mathcal{E}_{3}}$ and
$1_{\mathcal{E}^{C}_{1}\cap \mathcal{E}_{2}\cap\mathcal{E}_{3}}\leq 1_{\mathcal{E}^{C}_{1}\cap\mathcal{E}_{3}}$. For the second inequality, the first term is because we can assume a situation that we do not change the dynamic when $\mathcal{E}_{2}$ happens, and that will not make the expectation smaller. By applying the results of Lemma~\ref{lem:convergence of cost} and Lemma~\ref{lem:approximated optimal cost} we can get this term, where the constant $C_{9}$ is related to constants in these two lemmas. The last inequality is obtained from these two lemmas and the definitions of $\mathcal{E}_{1},\mathcal{E}_{2},\mathcal{E}_{3}$.

As for $\mathbb{E}\left[\Vert \bar{K}-K_{*}\Vert^{2}\cdot 1_{\mathcal{E}_{3}}\right]$, we use the bound that
\begin{align*}
\mathbb{P}\left[\Vert \bar{K}-K_{*}\Vert\geq x\right]\leq e^{-\frac{T^{1/2}x^{2}}{4C_{1}^{2}C_{2}^{2}}}\,\forall x\leq\epsilon_{3}\,,
\end{align*}
and compute that
\begin{align*}
&\mathbb{E}\left[\Vert \bar{K}-K_{*}\Vert^{2}\cdot 1_{\mathcal{E}_{3}}\right]\\
&\leq \int_{x=0}^{\epsilon_{3}^{2}}\mathbb{P}\left[\Vert \bar{K}-K_{*}\Vert^{2}\geq x\right]\cdot dx\\
&\leq \int_{x\geq 0}e^{-\frac{T^{1/2}x}{4C_{1}^{2}C_{2}^{2}}}dx\\
&=\frac{4C_{1}^{2}C^{2}_{2}}{T^{1/2}}\,.
\end{align*}
For $\mathbb{E}\left[1_{\mathcal{E}^{C}_{1}\cap \mathcal{E}_{2}}\right]$, we directly have $\mathbb{E}\left[1_{\mathcal{E}^{C}_{1}\cap \mathcal{E}_{2}}\right]\leq \mathbb{P}\left[\mathcal{E}_{2}\right]\leq e^{-\Omega(T^{1/5})}$.
Combining these results and Lemma~\ref{lem:approximated optimal cost} we obtain that for some constant $C$,
\begin{align*}
\mathbb{E}\left[\int_{t=\sqrt{T}}^{T}\left(X^{\mathrm{T}}_{t}QX_{t}+U^{\mathrm{T}}_{t}RU_{t}\right)dt\cdot 1_{\mathcal{E}^{C}_{1}\cap \mathcal{E}_{3}}\right]\leq J_{\theta_{*},T}+C\sqrt{T}\,.
\end{align*}

For the second term $\mathbb{E}\left[\int_{t=\sqrt{T}}^{T}\left(X^{\mathrm{T}}_{t}QX_{t}+U^{\mathrm{T}}_{t}RU_{t}\right)dt\cdot 1_{\mathcal{E}_{1}} \right]$, given any $X_{\sqrt{T}}$, we always have
\begin{align*}
\mathbb{E}\left[X^{\mathrm{T}}_{t}QX_{t}+U^{\mathrm{T}}_{t}RU_{t}\right]\leq C_{8}\left[T^{4/5}+\Vert X_{\sqrt{T}}\Vert^{2}_{2}\right]\, \forall t\in[\sqrt{T},T]\,.
\end{align*}
So we can upper bound $\mathbb{E}\left[\int_{t=\sqrt{T}}^{T}\left(X^{\mathrm{T}}_{t}QX_{t}+U^{\mathrm{T}}_{t}RU_{t}\right)dt\cdot 1_{\mathcal{E}_{1}} \right]$ by
\begin{align*}
&\mathbb{E}\left[\int_{t=\sqrt{T}}^{T}\left(X^{\mathrm{T}}_{t}QX_{t}+U^{\mathrm{T}}_{t}RU_{t}\right)dt\cdot 1_{\mathcal{E}_{1}} \right]\\
&\leq C_{8}T^{9/5}\mathbb{P}[\mathcal{E}_{1}]+C_{8}T\mathbb{E}\left[\Vert X_{\sqrt{T}}\Vert^{2}_{2}\cdot 1_{\mathcal{E}_{1}}\right]\\
&\leq O(1)+C_{8}T\mathbb{E}\left[\Vert X_{\sqrt{T}}\Vert^{2}_{2}\cdot 1_{\mathcal{E}_{1}}\right]\,,
\end{align*}
where for the last inequality we apply the upper bound of $\mathbb{P}[\mathcal{E}_{1}]$ shown before.

For $\mathbb{E}\left[\Vert X_{\sqrt{T}}\Vert^{2}_{2}\cdot 1_{\mathcal{E}_{1}}\right]$, we can apply Lemma~\ref{lem:bound continuous trajectory} and obtain that for some constant $c>0$, for any $x\geq\frac{1}{2}T^{1/5}$, we have
\begin{align*}
\mathbb{P}\left[\Vert X_{\sqrt{T}}\Vert_{2}\geq x\right]\leq e^{-cx^{2}}\,.
\end{align*}
Thus we have:
\begin{align*}
&T\mathbb{E}\left[\Vert X_{\sqrt{T}}\Vert^{2}_{2}\cdot 1_{\mathcal{E}_{1}}\right]\\
&\leq\frac{1}{4}T^{7/5}\mathbb{P}\left[\Vert X_{\sqrt{T}}\Vert_{2}\geq\frac{1}{2}T^{1/5}\right] +T\int_{x\geq \frac{1}{4}T^{2/5}}\mathbb{P}\left[\Vert X_{\sqrt{T}}\Vert^{2}_{2}\geq x\right]dx\\
&\leq \mathcal{O}(1)\,.
\end{align*}
Therefore, we have $\mathbb{E}\left[\int_{t=\sqrt{T}}^{T}\left(X^{\mathrm{T}}_{t}QX_{t}+U^{\mathrm{T}}_{t}RU_{t}\right)dt\cdot 1_{\mathcal{E}_{1}} \right]\leq \mathcal{O}(1)$

Finally, for the last term $\mathbb{E}\left[\int_{t=\sqrt{T}}^{T}\left(X^{\mathrm{T}}_{t}QX_{t}+U^{\mathrm{T}}_{t}RU_{t}\right)dt\cdot1_{\mathcal{E}^{C}_{1}\cap\mathcal{E}^{C}_{3}} \right]$, when condition on any $\Vert X_{\sqrt{T}}\Vert_{2}\leq \frac{1}{2}T^{1/5}$, estimator $(\hat{A},\hat{B})$ and $X_{t_{0}}$, where $t_{0}=\inf_{t\geq \sqrt{T}}(\Vert X_{t}\Vert_{2}\geq T^{1/5})$, we still have:
\begin{align*}
\mathbb{E}\left[X^{\mathrm{T}}_{t}QX_{t}+U^{\mathrm{T}}_{t}RU_{t}\right]\leq C_{8}\left[T^{4/5}+\Vert X_{\sqrt{T}}\Vert^{2}_{2}\right]\leq 2C_{8}T^{4/5}\,, \forall t\in[\sqrt{T},T]\,.
\end{align*}
So we can upper bound it by
\begin{align*}
&\mathbb{E}\left[\int_{t=\sqrt{T}}^{T}\left(X^{\mathrm{T}}_{t}QX_{t}+U^{\mathrm{T}}_{t}RU_{t}\right)dt\cdot1_{\mathcal{E}^{C}_{1}\cap\mathcal{E}^{C}_{3}}\right]\\
&\leq 2C_{8}T^{9/5}\mathbb{P}\left[\mathcal{E}^{C}_{1}\cap\mathcal{E}^{C}_{3}\right]\\
&\leq 2C_{8}T^{9/5}\mathbb{P}[\mathcal{E}^{C}_{3}]\\
&\leq \mathcal{O}(1)\,.
\end{align*}

Combining them we finally obtain Lemma~\ref{lem:estimating expected cost}.

\newpage
\section*{NeurIPS Paper Checklist}

\begin{enumerate}

\item {\bf Claims}
    \item[] Question: Do the main claims made in the abstract and introduction accurately reflect the paper's contributions and scope?
    \item[] Answer: \answerYes{} 
    \item[] Justification: We clarify our contributions and basic problem setups in both abstract and introduction.
    \item[] Guidelines:
    \begin{itemize}
        \item The answer NA means that the abstract and introduction do not include the claims made in the paper.
        \item The abstract and/or introduction should clearly state the claims made, including the contributions made in the paper and important assumptions and limitations. A No or NA answer to this question will not be perceived well by the reviewers. 
        \item The claims made should match theoretical and experimental results, and reflect how much the results can be expected to generalize to other settings. 
        \item It is fine to include aspirational goals as motivation as long as it is clear that these goals are not attained by the paper. 
    \end{itemize}

\item {\bf Limitations}
    \item[] Question: Does the paper discuss the limitations of the work performed by the authors?
    \item[] Answer: \answerYes{} 
    \item[] Justification: We discuss the limitations of the work in Section 6.
    \item[] Guidelines:
    \begin{itemize}
        \item The answer NA means that the paper has no limitation while the answer No means that the paper has limitations, but those are not discussed in the paper. 
        \item The authors are encouraged to create a separate "Limitations" section in their paper.
        \item The paper should point out any strong assumptions and how robust the results are to violations of these assumptions (e.g., independence assumptions, noiseless settings, model well-specification, asymptotic approximations only holding locally). The authors should reflect on how these assumptions might be violated in practice and what the implications would be.
        \item The authors should reflect on the scope of the claims made, e.g., if the approach was only tested on a few datasets or with a few runs. In general, empirical results often depend on implicit assumptions, which should be articulated.
        \item The authors should reflect on the factors that influence the performance of the approach. For example, a facial recognition algorithm may perform poorly when image resolution is low or images are taken in low lighting. Or a speech-to-text system might not be used reliably to provide closed captions for online lectures because it fails to handle technical jargon.
        \item The authors should discuss the computational efficiency of the proposed algorithms and how they scale with dataset size.
        \item If applicable, the authors should discuss possible limitations of their approach to address problems of privacy and fairness.
        \item While the authors might fear that complete honesty about limitations might be used by reviewers as grounds for rejection, a worse outcome might be that reviewers discover limitations that aren't acknowledged in the paper. The authors should use their best judgment and recognize that individual actions in favor of transparency play an important role in developing norms that preserve the integrity of the community. Reviewers will be specifically instructed to not penalize honesty concerning limitations.
    \end{itemize}

\item {\bf Theory assumptions and proofs}
    \item[] Question: For each theoretical result, does the paper provide the full set of assumptions and a complete (and correct) proof?
    \item[] Answer: \answerYes{} 
    \item[] Justification: Assumptions can be found just near the main theorems. The complete proof is contained in our Appendix.
    \item[] Guidelines:
    \begin{itemize}
        \item The answer NA means that the paper does not include theoretical results. 
        \item All the theorems, formulas, and proofs in the paper should be numbered and cross-referenced.
        \item All assumptions should be clearly stated or referenced in the statement of any theorems.
        \item The proofs can either appear in the main paper or the supplemental material, but if they appear in the supplemental material, the authors are encouraged to provide a short proof sketch to provide intuition. 
        \item Inversely, any informal proof provided in the core of the paper should be complemented by formal proofs provided in appendix or supplemental material.
        \item Theorems and Lemmas that the proof relies upon should be properly referenced. 
    \end{itemize}

    \item {\bf Experimental result reproducibility}
    \item[] Question: Does the paper fully disclose all the information needed to reproduce the main experimental results of the paper to the extent that it affects the main claims and/or conclusions of the paper (regardless of whether the code and data are provided or not)?
    \item[] Answer: \answerYes{} 
    \item[] Justification: We disclose the experiment details in Section 5.3.
    \item[] Guidelines:
    \begin{itemize}
        \item The answer NA means that the paper does not include experiments.
        \item If the paper includes experiments, a No answer to this question will not be perceived well by the reviewers: Making the paper reproducible is important, regardless of whether the code and data are provided or not.
        \item If the contribution is a dataset and/or model, the authors should describe the steps taken to make their results reproducible or verifiable. 
        \item Depending on the contribution, reproducibility can be accomplished in various ways. For example, if the contribution is a novel architecture, describing the architecture fully might suffice, or if the contribution is a specific model and empirical evaluation, it may be necessary to either make it possible for others to replicate the model with the same dataset, or provide access to the model. In general. releasing code and data is often one good way to accomplish this, but reproducibility can also be provided via detailed instructions for how to replicate the results, access to a hosted model (e.g., in the case of a large language model), releasing of a model checkpoint, or other means that are appropriate to the research performed.
        \item While NeurIPS does not require releasing code, the conference does require all submissions to provide some reasonable avenue for reproducibility, which may depend on the nature of the contribution. For example
        \begin{enumerate}
            \item If the contribution is primarily a new algorithm, the paper should make it clear how to reproduce that algorithm.
            \item If the contribution is primarily a new model architecture, the paper should describe the architecture clearly and fully.
            \item If the contribution is a new model (e.g., a large language model), then there should either be a way to access this model for reproducing the results or a way to reproduce the model (e.g., with an open-source dataset or instructions for how to construct the dataset).
            \item We recognize that reproducibility may be tricky in some cases, in which case authors are welcome to describe the particular way they provide for reproducibility. In the case of closed-source models, it may be that access to the model is limited in some way (e.g., to registered users), but it should be possible for other researchers to have some path to reproducing or verifying the results.
        \end{enumerate}
    \end{itemize}

\item {\bf Open access to data and code}
    \item[] Question: Does the paper provide open access to the data and code, with sufficient instructions to faithfully reproduce the main experimental results, as described in supplemental material?
    \item[] Answer: \answerNo{} 
    \item[] Justification: Our code is very simple, just use a simulation experiment of 3*3 matrix. Our main contribution is the theoretical analysis.
    \item[] Guidelines:
    \begin{itemize}
        \item The answer NA means that paper does not include experiments requiring code.
        \item Please see the NeurIPS code and data submission guidelines (\url{https://nips.cc/public/guides/CodeSubmissionPolicy}) for more details.
        \item While we encourage the release of code and data, we understand that this might not be possible, so “No” is an acceptable answer. Papers cannot be rejected simply for not including code, unless this is central to the contribution (e.g., for a new open-source benchmark).
        \item The instructions should contain the exact command and environment needed to run to reproduce the results. See the NeurIPS code and data submission guidelines (\url{https://nips.cc/public/guides/CodeSubmissionPolicy}) for more details.
        \item The authors should provide instructions on data access and preparation, including how to access the raw data, preprocessed data, intermediate data, and generated data, etc.
        \item The authors should provide scripts to reproduce all experimental results for the new proposed method and baselines. If only a subset of experiments are reproducible, they should state which ones are omitted from the script and why.
        \item At submission time, to preserve anonymity, the authors should release anonymized versions (if applicable).
        \item Providing as much information as possible in supplemental material (appended to the paper) is recommended, but including URLs to data and code is permitted.
    \end{itemize}

\item {\bf Experimental setting/details}
    \item[] Question: Does the paper specify all the training and test details (e.g., data splits, hyperparameters, how they were chosen, type of optimizer, etc.) necessary to understand the results?
    \item[] Answer: \answerYes{} 
    \item[] Justification: We specify all the training and test details in Section 5.3.
    \item[] Guidelines:
    \begin{itemize}
        \item The answer NA means that the paper does not include experiments.
        \item The experimental setting should be presented in the core of the paper to a level of detail that is necessary to appreciate the results and make sense of them.
        \item The full details can be provided either with the code, in appendix, or as supplemental material.
    \end{itemize}

\item {\bf Experiment statistical significance}
    \item[] Question: Does the paper report error bars suitably and correctly defined or other appropriate information about the statistical significance of the experiments?
    \item[] Answer: \answerYes{} 
    \item[] Justification: We compute the average regret in our experiment.
    \item[] Guidelines:
    \begin{itemize}
        \item The answer NA means that the paper does not include experiments.
        \item The authors should answer "Yes" if the results are accompanied by error bars, confidence intervals, or statistical significance tests, at least for the experiments that support the main claims of the paper.
        \item The factors of variability that the error bars are capturing should be clearly stated (for example, train/test split, initialization, random drawing of some parameter, or overall run with given experimental conditions).
        \item The method for calculating the error bars should be explained (closed form formula, call to a library function, bootstrap, etc.)
        \item The assumptions made should be given (e.g., Normally distributed errors).
        \item It should be clear whether the error bar is the standard deviation or the standard error of the mean.
        \item It is OK to report 1-sigma error bars, but one should state it. The authors should preferably report a 2-sigma error bar than state that they have a 96\% CI, if the hypothesis of Normality of errors is not verified.
        \item For asymmetric distributions, the authors should be careful not to show in tables or figures symmetric error bars that would yield results that are out of range (e.g. negative error rates).
        \item If error bars are reported in tables or plots, The authors should explain in the text how they were calculated and reference the corresponding figures or tables in the text.
    \end{itemize}

\item {\bf Experiments compute resources}
    \item[] Question: For each experiment, does the paper provide sufficient information on the computer resources (type of compute workers, memory, time of execution) needed to reproduce the experiments?
    \item[] Answer: \answerNo{} 
    \item[] Justification: Our code is very simple, which can run on CPU of a computer.
    \item[] Guidelines:
    \begin{itemize}
        \item The answer NA means that the paper does not include experiments.
        \item The paper should indicate the type of compute workers CPU or GPU, internal cluster, or cloud provider, including relevant memory and storage.
        \item The paper should provide the amount of compute required for each of the individual experimental runs as well as estimate the total compute. 
        \item The paper should disclose whether the full research project required more compute than the experiments reported in the paper (e.g., preliminary or failed experiments that didn't make it into the paper). 
    \end{itemize}
    
\item {\bf Code of ethics}
    \item[] Question: Does the research conducted in the paper conform, in every respect, with the NeurIPS Code of Ethics \url{https://neurips.cc/public/EthicsGuidelines}?
    \item[] Answer: \answerYes{} 
    \item[] Justification: We conform with the NeurIPS Code of Ethics.
    \item[] Guidelines:
    \begin{itemize}
        \item The answer NA means that the authors have not reviewed the NeurIPS Code of Ethics.
        \item If the authors answer No, they should explain the special circumstances that require a deviation from the Code of Ethics.
        \item The authors should make sure to preserve anonymity (e.g., if there is a special consideration due to laws or regulations in their jurisdiction).
    \end{itemize}

\item {\bf Broader impacts}
    \item[] Question: Does the paper discuss both potential positive societal impacts and negative societal impacts of the work performed?
    \item[] Answer: \answerNA{} 
    \item[] Justification: : Our work is about the theory on online control and system identification, which does not seem to have
evident societal impacts.
    \item[] Guidelines:
    \begin{itemize}
        \item The answer NA means that there is no societal impact of the work performed.
        \item If the authors answer NA or No, they should explain why their work has no societal impact or why the paper does not address societal impact.
        \item Examples of negative societal impacts include potential malicious or unintended uses (e.g., disinformation, generating fake profiles, surveillance), fairness considerations (e.g., deployment of technologies that could make decisions that unfairly impact specific groups), privacy considerations, and security considerations.
        \item The conference expects that many papers will be foundational research and not tied to particular applications, let alone deployments. However, if there is a direct path to any negative applications, the authors should point it out. For example, it is legitimate to point out that an improvement in the quality of generative models could be used to generate deepfakes for disinformation. On the other hand, it is not needed to point out that a generic algorithm for optimizing neural networks could enable people to train models that generate Deepfakes faster.
        \item The authors should consider possible harms that could arise when the technology is being used as intended and functioning correctly, harms that could arise when the technology is being used as intended but gives incorrect results, and harms following from (intentional or unintentional) misuse of the technology.
        \item If there are negative societal impacts, the authors could also discuss possible mitigation strategies (e.g., gated release of models, providing defenses in addition to attacks, mechanisms for monitoring misuse, mechanisms to monitor how a system learns from feedback over time, improving the efficiency and accessibility of ML).
    \end{itemize}
    
\item {\bf Safeguards}
    \item[] Question: Does the paper describe safeguards that have been put in place for responsible release of data or models that have a high risk for misuse (e.g., pretrained language models, image generators, or scraped datasets)?
    \item[] Answer: \answerNA{} 
    \item[] Justification: The paper poses no such risks.
    \item[] Guidelines:
    \begin{itemize}
        \item The answer NA means that the paper poses no such risks.
        \item Released models that have a high risk for misuse or dual-use should be released with necessary safeguards to allow for controlled use of the model, for example by requiring that users adhere to usage guidelines or restrictions to access the model or implementing safety filters. 
        \item Datasets that have been scraped from the Internet could pose safety risks. The authors should describe how they avoided releasing unsafe images.
        \item We recognize that providing effective safeguards is challenging, and many papers do not require this, but we encourage authors to take this into account and make a best faith effort.
    \end{itemize}

\item {\bf Licenses for existing assets}
    \item[] Question: Are the creators or original owners of assets (e.g., code, data, models), used in the paper, properly credited and are the license and terms of use explicitly mentioned and properly respected?
    \item[] Answer: \answerNA{} 
    \item[] Justification: The paper does not use existing assets.
    \item[] Guidelines:
    \begin{itemize}
        \item The answer NA means that the paper does not use existing assets.
        \item The authors should cite the original paper that produced the code package or dataset.
        \item The authors should state which version of the asset is used and, if possible, include a URL.
        \item The name of the license (e.g., CC-BY 4.0) should be included for each asset.
        \item For scraped data from a particular source (e.g., website), the copyright and terms of service of that source should be provided.
        \item If assets are released, the license, copyright information, and terms of use in the package should be provided. For popular datasets, \url{paperswithcode.com/datasets} has curated licenses for some datasets. Their licensing guide can help determine the license of a dataset.
        \item For existing datasets that are re-packaged, both the original license and the license of the derived asset (if it has changed) should be provided.
        \item If this information is not available online, the authors are encouraged to reach out to the asset's creators.
    \end{itemize}

\item {\bf New assets}
    \item[] Question: Are new assets introduced in the paper well documented and is the documentation provided alongside the assets?
    \item[] Answer: \answerNA{} 
    \item[] Justification: The paper does not release new assets.
    \item[] Guidelines:
    \begin{itemize}
        \item The answer NA means that the paper does not release new assets.
        \item Researchers should communicate the details of the dataset/code/model as part of their submissions via structured templates. This includes details about training, license, limitations, etc. 
        \item The paper should discuss whether and how consent was obtained from people whose asset is used.
        \item At submission time, remember to anonymize your assets (if applicable). You can either create an anonymized URL or include an anonymized zip file.
    \end{itemize}

\item {\bf Crowdsourcing and research with human subjects}
    \item[] Question: For crowdsourcing experiments and research with human subjects, does the paper include the full text of instructions given to participants and screenshots, if applicable, as well as details about compensation (if any)? 
    \item[] Answer: \answerNA{} 
    \item[] Justification: The paper does not involve crowdsourcing nor research with human subjects.
    \item[] Guidelines:
    \begin{itemize}
        \item The answer NA means that the paper does not involve crowdsourcing nor research with human subjects.
        \item Including this information in the supplemental material is fine, but if the main contribution of the paper involves human subjects, then as much detail as possible should be included in the main paper. 
        \item According to the NeurIPS Code of Ethics, workers involved in data collection, curation, or other labor should be paid at least the minimum wage in the country of the data collector. 
    \end{itemize}

\item {\bf Institutional review board (IRB) approvals or equivalent for research with human subjects}
    \item[] Question: Does the paper describe potential risks incurred by study participants, whether such risks were disclosed to the subjects, and whether Institutional Review Board (IRB) approvals (or an equivalent approval/review based on the requirements of your country or institution) were obtained?
    \item[] Answer: \answerNA{} 
    \item[] Justification: The paper does not involve crowdsourcing nor research with human subjects.
    \item[] Guidelines:
    \begin{itemize}
        \item The answer NA means that the paper does not involve crowdsourcing nor research with human subjects.
        \item Depending on the country in which research is conducted, IRB approval (or equivalent) may be required for any human subjects research. If you obtained IRB approval, you should clearly state this in the paper. 
        \item We recognize that the procedures for this may vary significantly between institutions and locations, and we expect authors to adhere to the NeurIPS Code of Ethics and the guidelines for their institution. 
        \item For initial submissions, do not include any information that would break anonymity (if applicable), such as the institution conducting the review.
    \end{itemize}

\item {\bf Declaration of LLM usage}
    \item[] Question: Does the paper describe the usage of LLMs if it is an important, original, or non-standard component of the core methods in this research? Note that if the LLM is used only for writing, editing, or formatting purposes and does not impact the core methodology, scientific rigorousness, or originality of the research, declaration is not required.
    \item[] Answer: \answerNo{} 
    \item[] Justification: The LLM is used only for editing the paper of grammar mistake.
    \item[] Guidelines:
    \begin{itemize}
        \item The answer NA means that the core method development in this research does not involve LLMs as any important, original, or non-standard components.
        \item Please refer to our LLM policy (\url{https://neurips.cc/Conferences/2025/LLM}) for what should or should not be described.
    \end{itemize}

\end{enumerate}

\end{document}